\documentclass[twoside]{aiml22}

\usepackage{aiml22macro}

\usepackage{graphicx}
\usepackage{amsmath}
\usepackage{amssymb}

\usepackage{tikz}
\usepackage{bussproofs}
\usepackage{amssymb}
\usepackage{multirow}
\usepackage{stmaryrd}
\usepackage{mathrsfs}
\usepackage{lipsum} 

\usepackage{virginialake}
\usepackage{adjustbox}

\newcommand{\neigh}{neighbourhood}

\newcommand{\CPN}{CPN}
\newcommand{\cond}{>}
\newcommand{\less}{\preccurlyeq}

\newcommand{\ax}{\AxiomC}
\newcommand{\uinf}{\UnaryInfC}

\newcommand{\disp}{\DisplayProof}
\newcommand{\G}{\Gamma}
\newcommand{\D}{\Delta}
\newcommand{\seq}{\Rightarrow}
\newcommand{\tr}{\triangleleft}

\newcommand{\hyp}{\mid}
\newcommand{\hG}{\mathcal{G}}
\newcommand{\hH}{\mathcal{H}}

\newcommand{\fint}[1]{i(#1)}
\newcommand{\Deltaset}[1]{{#1}^\D}
\newcommand{\SDelta}{\Deltaset{\Sigma}}

\newcommand{\bl}[2]{[#1 \tr #2]}

\newcommand{\Vd}{\Vdash}
\newcommand{\vd}{\vdash}

\newcommand{\AND}{\bigwedge}
\newcommand{\OR}{\bigvee}

\newcommand{\lan}{\mathscr L}
\newcommand{\atm}{Atm}
\newcommand{\pow}{\mathcal{P}}
\newcommand{\set}[1]{set(#1)}

\newcommand{\W}{ W}
\newcommand{\N}{ N}
\newcommand{\V}{ V}
\newcommand{\M}{\mathcal M}

\newcommand{\R}{\mathcal R}
\newcommand{\C}{\mathcal C} 

\newcommand{\Wl}{W_{\Logic}}
\newcommand{\Nl}{N_{\Logic}}
\newcommand{\Vl}{V_{\Logic}}
\newcommand{\Ml}{M_{\Logic}}

\newcommand{\EVd}{\Vdash^{\exists}}

\newcommand{\logicnamestyle}[1]{\mathsf{#1}}
\newcommand{\axiomnamestyle}[1]{\mathsf{#1}}


\newcommand{\CPL}{\logicnamestyle{CPL}}

\newcommand{\EM}{\logicnamestyle{M}}
\newcommand{\K}{\logicnamestyle{K}}
\newcommand{\PCLless}{\logicnamestyle{N_\less}}
\newcommand{\Nless}{\PCLless}
\newcommand{\Nlessstar}{\logicnamestyle{N_\less^*}}

\newcommand{\Logic}{\logicnamestyle{L}}
\newcommand{\logic}{\Logic}
\newcommand{\lstar}{\logicnamestyle{L^*}}
\newcommand{\logicA}{\logicnamestyle{LA}}
\newcommand{\logicU}{\logicnamestyle{LU}}

\newcommand{\NNless}{\logicnamestyle{NN_\less}}
\newcommand{\NTless}{\logicnamestyle{NT_\less}}
\newcommand{\NWless}{\logicnamestyle{NW_\less}}
\newcommand{\NCless}{\logicnamestyle{NC_\less}}
\newcommand{\NUless}{\logicnamestyle{NU_\less}}
\newcommand{\NNUless}{\logicnamestyle{NNU_\less}}
\newcommand{\NTUless}{\logicnamestyle{NTU_\less}}
\newcommand{\NWUless}{\logicnamestyle{NWU_\less}}
\newcommand{\NCUless}{\logicnamestyle{NCU_\less}}
\newcommand{\NAless}{\logicnamestyle{NA_\less}}
\newcommand{\NNAless}{\logicnamestyle{NNA_\less}}
\newcommand{\NTAless}{\logicnamestyle{NTA_\less}}
\newcommand{\NWAless}{\logicnamestyle{NWA_\less}}
\newcommand{\NCAless}{\logicnamestyle{NCA_\less}}
\newcommand{\NNlessstar}{\logicnamestyle{NN_\less^*}}
\newcommand{\NTlessstar}{\logicnamestyle{NT_\less^*}}
\newcommand{\NWlessstar}{\logicnamestyle{NW_\less^*}}
\newcommand{\NClessstar}{\logicnamestyle{NC_\less^*}}   
\newcommand{\NUlessstar}{\logicnamestyle{NU_\less^*}} 
\newcommand{\NAlessstar}{\logicnamestyle{NA_\less^*}} 

\newcommand{\Vless}{\logicnamestyle{V_\less}}
\newcommand{\VNless}{\logicnamestyle{VN_\less}}
\newcommand{\VTless}{\logicnamestyle{VT_\less}}
\newcommand{\VWless}{\logicnamestyle{VW_\less}}
\newcommand{\VCless}{\logicnamestyle{VC_\less}}
\newcommand{\VUless}{\logicnamestyle{VU_\less}}
\newcommand{\VNUless}{\logicnamestyle{VNU_\less}}
\newcommand{\VTUless}{\logicnamestyle{VTU_\less}}
\newcommand{\VWUless}{\logicnamestyle{VWU_\less}}
\newcommand{\VCUless}{\logicnamestyle{VCU_\less}}
\newcommand{\VAless}{\logicnamestyle{VA_\less}}
\newcommand{\VNAless}{\logicnamestyle{VNA_\less}}
\newcommand{\VTAless}{\logicnamestyle{VTA_\less}}
\newcommand{\VWAless}{\logicnamestyle{VWA_\less}}
\newcommand{\VCAless}{\logicnamestyle{VCA_\less}}


\newcommand{\CPR}{\axiomnamestyle{cpr}}

\newcommand{\TR}{\axiomnamestyle{tr}}
\newcommand{\axOR}{\axiomnamestyle{or}}
\newcommand{\axCO}{\axiomnamestyle{co}}
\newcommand{\axCPA}{\axiomnamestyle{cpa}}
\newcommand{\axTR}{\TR}
\newcommand{\axN}{\axiomnamestyle{n}}
\newcommand{\axT}{\axiomnamestyle{t}}
\newcommand{\axW}{\axiomnamestyle{w}}
\newcommand{\axC}{\axiomnamestyle{c}}
\newcommand{\axUone}{\axiomnamestyle{u\textup{-}}}
\newcommand{\axUtwo}{\axiomnamestyle{u}}
\newcommand{\axAone}{\axiomnamestyle{a\textup{-}}}
\newcommand{\axAtwo}{\axiomnamestyle{a}}

\newcommand{\SNlessstar}{\mathsf{G.N_\less^*}}

\newcommand{\BPCLlessstar}{\HNlessstar}

\newcommand{\HNlessstar}{\mathsf{H.N}_\less^*}
\newcommand{\HNAlessstar}{\mathsf{H.NA}_\less^*}
\newcommand{\HNless}{\mathsf{H.N}_\less}
\newcommand{\HNNless}{\mathsf{H.NN}_\less}
\newcommand{\HNTless}{\mathsf{H.NT}_\less}
\newcommand{\HNWless}{\mathsf{H.NW}_\less}
\newcommand{\HNCless}{\mathsf{H.NC}_\less}

\newcommand{\SNlessc}[1]{\mathsf{G.N#1}_\less}

\newcommand{\seqrule}[1]{$\mathsf{#1}$}
\newcommand{\leftrule}[1]{$\mathsf{{#1}_{L}}$}
\newcommand{\rightrule}[1]{$\mathsf{{#1}_{R}}$}
\newcommand{\textleftrule}[1]{\leftrule{#1}}
\newcommand{\textrightrule}[1]{\rightrule{#1}}

\newcommand{\seqruleWtwo}{\seqrule{W_0}}




\newcommand{\lctr}{\textleftrule{ctr}}
\newcommand{\rctr}{\textrightrule{ctr}}

\newcommand{\cut}{\seqrule{cut}}

\newcommand{\CO}[1]{co{#1}}
\newcommand{\COSigma}{\CO{\Sigma}}

\newcommand{\PhiBless}{\Phi_{B\less}}

\newcommand{\ih}{i.h.}
\newcommand{\ie}{i.e.}
\newcommand{\etal}{et al.}

\newcommand{\finv}[1]{\textcolor{black}{#1}}
\newcommand{\finvh}[1]{\textcolor{black}{#1}}
\newcommand{\Gc}{\G^\less}
\newcommand{\Dc}{\D^\less}
\newcommand{\Sc}{\Sigma^\less}

\newcommand{\seqrulem}[1]{\mathsf{#1}}
\newcommand{\leftrulem}[1]{\mathsf{{#1}_{L}}}
\newcommand{\rightrulem}[1]{\mathsf{{#1}_{R}}}
\newcommand{\initm}{\seqrulem{init}}

\newcommand{\ltom}{\leftrulem{\!\to}}
\newcommand{\rtom}{\rightrulem{\!\to}}

\newcommand{\lbotm}{\leftrulem{\bot}}
\newcommand{\seqruleCPNm}{\seqrulem{CP}}
\newcommand{\seqruleNm}{\seqrulem{N}}
\newcommand{\seqruleTm}{\seqrulem{T}}
\newcommand{\seqruleWm}{\seqrulem{W}}
\newcommand{\seqruleWtwom}{\seqrulem{W_0}}
\newcommand{\seqruleCm}{\seqrulem{C_0}}
\newcommand{\seqruleAm}{\seqrulem{A}}
\newcommand{\seqruleNAm}{\seqrulem{N^A}}

\newcommand{\llessm}{\leftrulem{\!\less}}
\newcommand{\rlessm}{\rightrulem{\!\less}}
\newcommand{\jumpm}{\seqrulem{jp}}

\newcommand{\manyless}{\Sigma_n}
\newcommand{\manyd}[2]{\mathbf{D}^{#2}_{#1}}
\newcommand{\manyds}{\mathbf{D}}

\newcommand{\manycdand}{\AND_{i \leq n}(C_i \less D_i) }

\newcommand{\hrulenm}{\mathsf{N}}
\newcommand{\hruletm}{\mathsf{T}}
\newcommand{\hrulewm}{\mathsf{W}}
\newcommand{\hrulecm}{\mathsf{C}}

\newcommand{\hruleaml}{\mathsf{A_L}}
\newcommand{\hruleamr}{\mathsf{A_R}}

\newcommand{\wkc}{\mathsf{wk_C}}
\newcommand{\wkb}{\mathsf{wk_B}}
\newcommand{\wkm}{\mathsf{wk}}
\newcommand{\wkrm}{\mathsf{wk_R}}
\newcommand{\wklm}{\mathsf{wk_L}}

\newcommand{\ctrc}{\mathsf{ctr_C}}
\newcommand{\ctrb}{\mathsf{ctr_B}}
\newcommand{\ctrr}{\mathsf{ctr_R}}
\newcommand{\ctrl}{\mathsf{ctr_L}}   
\newcommand{\ctrlm}{\mathsf{ctr_L}} 

\newcommand{\red}[1]{\textcolor{black}{#1}}
\newcommand{\cutm}{\mathsf{cut}}
\newcommand{\ctrtm}{\mathsf{ctr_R}}
\newcommand{\purple}[1]{\textcolor{black}{#1}}

\newcommand{\ON}{\mathcal{O}}
\newcommand{\NT}{\N^\mathsf{T}}
\newcommand{\NC}{\N^\mathsf{C}}
\newcommand{\NA}{\N^\mathsf{A}}

\newcommand{\lwkm}{\mathsf{wk_L}}
\newcommand{\rwkm}{\mathsf{wk_R}}
\newcommand{\lctrm}{\mathsf{ctr_L}}
\newcommand{\rctrm}{\mathsf{ctr_R}}





\begin{document}

\begin{frontmatter}
\title{Comparative plausibility in neighbourhood models: axiom systems and sequent calculi}
  \author{Tiziano Dalmonte$^{a}$ \quad Marianna Girlando$^{b}$}\footnote{This work was supported by the UKRI Future Leaders Fellowship `Structure vs Invariants in Proofs' MR/S035540/1,  by the SPGAS and CompRAS projects at the Free University of Bozen-Bolzano, and by the EU H2020 project INODE (grant agreement No 863410).
  } 
  \address{${}^a$Free University of Bozen-Bolzano, Bolzano, Italy \\
${}^b$University of Birmingham, Birmingham, UK}

  \begin{abstract}
We introduce a family of comparative plausibility logics over neighbourhood models, generalising Lewis' comparative plausibility operator over sphere models. We provide axiom systems for the logics, and prove their soundness and completeness with respect to the semantics. Then, we introduce two kinds of analytic proof systems for several logics in the family: a multi-premisses sequent calculus in the style of Lellmann and Pattinson, for which we prove cut admissibility, and a hypersequent calculus
	based on structured calculi for conditional logics by Girlando et al., tailored for countermodel construction over failed proof search. Our results constitute the first steps in the definition of a unified proof theoretical framework for logics equipped with a comparative plausibility operator.
  \end{abstract}

  \begin{keyword}
  	Comparative plausibility, neighbourhood semantics, sequent calculus, hypersequent calculus, countermodel construction.
  \end{keyword}
 \end{frontmatter}

\section{Introduction}
\label{sec:intro}

In the
seminal work \emph{Counterfactuals}~\cite{lewis},
besides 
the 
well-known analysis of counterfactual sentences,
David Lewis 
defined 
a 
notion of 
comparative plausibility
which has then become a standard.%
\footnote{Lewis~\cite{lewis} refers to $\less$ as the operator for \emph{comparative possibility}.
	In the literature, 
	the same or similar
	operators also go under the names of 
	\emph{entrenchment}~\cite{lellpatt}, \emph{comparative similarity}~\cite{sheremet:2005} or \emph{relative likelihood}~\cite{halpern:1997}.
	Here we adopt the terminology of, e.g.,~\cite{olivetti2015standard}.}
Specifically, Lewis 
introduced a \emph{comparative plausibility operator} $ A \less B $, read ``$ A $ is at least as plausible as $ B $'', 
which is evaluated on the plausibility ordering of worlds of a model.

Lewis' notion of comparative plausibility is defined over
\emph{sphere models}. 
These are possible-world models
in which 
every world $ x $ is endowed with a system of spheres $ S(x) $, that is, a set 
of sets of worlds
such that for every two sets in the class, one of the two is included in the other
(if $ \alpha, \beta \in S(x) $, then $\alpha\subseteq \beta$ or $\beta\subseteq\alpha$). 
This property, 
known  as \emph{nesting}, 
determines a total 
ordering 
over the set of worlds belonging to a system of spheres, 
where worlds in the inner spheres are taken to be more plausible than worlds in the outer spheres. 
Then, $ A \less B $ is true at a world $x$ if the innermost sphere in $ S(x) $ containing a world which forces $ B $ also contains a world that forces $ A $. 
The operator $\less$ is interdefinable with Lewis' \emph{conditional operator} $ A \cond B $
expressing counterfactual sentences
(formally, $ A \cond B $ is equivalent to $ (\bot \less A) \vlor \vlne ((A \vlan \vlne B) \less (A \vlan B) ) $).%

Other than in Lewis' work, several operators expressing forms of similarity or closeness between states of affairs or concepts have been 
studied in the literature, and find applications in many areas of computer science and philosophy. 
In knowledge representation,  
Sheremet \etal\ developed in \cite{sheremet:2005,sheremet2007logic} the  \emph{logic of comparative concept similarity}, evaluated over distance models, which implements a description logic-like formalism for reasoning about  
similarity of concepts in ontologies. 
Refer to \cite{alenda2009comparative} for a Lewis-style semantics for this logic. 
Moreover, similarity operators can be used in deontic reasoning to express degrees of urgency of obligations \cite{brown:1996} or, more recently, to express the preferred scenario an agent would choose in an ethical decision-making process \cite{lorini2021logic}. In philosophical logic, a logic equipped with an operator to express \emph{ceteris paribus} preference between states of affairs was introduced by Von Wright in \cite{von1972logic}, and formalised in \cite{van2009everything}. Moreover, a logic expressing \emph{ceteris paribus} preferences in a deontic setting was recently defined in \cite{loreggia2022modelling}.

A natural semantics to express generalized forms of Lewis' comparative plausibility is \emph{preferential semantics}. 
Preferential models consist of a set of worlds equipped with an explicit preorder relation $ \leq_x $ for every world $ x $, 
encoding similarity or preference among worlds. 
These models represent a generalisation of sphere models, where totality of the ordering is not assumed, and have been studied as a semantics for a family of conditional logic weaker than Lewis' counterfactual logic, called \emph{Preferential Conditional Logics} \cite{burgess:1981,halpern1994complexity},
strongly related to non-monotonic logic $ P $ from \cite{kraus:1990}.
In \cite{halpern:1997}, Halpern proposes partially ordered preferential structures as a general framework to represent forms of  preference or similarity.

We here propose a setting even more general than preferential semantics,  
by interpreting the comparative plausibility operator over \emph{neighbourhood models} (Sec.~\ref{sec:semantics}). 
These possible-worlds models are endowed with a neighbourhood function which assigns to every world $ x $
a set of sets of worlds, $ N(x) $, where 
nesting is not assumed. 
In this weaker setting, the truth condition for $ \less $ can be taken to express forms of similarity of closeness between states or concepts, which are not assumed to be totally ordered.
Neighbourhood models were introduced to define a semantics for non-normal modal logics   \cite{scott1970advice,montague1970pragmatics} and, among other applications, have been employed as a semantics for conditional logics \cite{marti,negri2015sequent,girlando:2021}.

We introduce axiom systems for the family of \emph{logics of Comparative Plausibility in Neighbourhood models} 
(\CPN\ logics)
, and prove their adequacy with respect to
some relevant classes of 
neighbourhood models
(Sec.~\ref{sec:axioms}). 
We then study the proof theory of \CPN\ logics, by defining two kinds of proof systems for 
them.
Our calculi are inspired from analytic proof systems for Lewis' conditional logics 
introduced in the literature.

We first present a \emph{multi-premisses} sequent calculus in the style of Lellmann and Pattinson \cite{lellpatt,lellmann:phd} (Sec.~\ref{sec:sequent}). The rules of these calculi display a number of premisses which depends on the number of comparative plausibility formulas occurring in the conclusion. The calculi for \CPN\ logics  represent simpler fragments of the calculi for Lewis' logics presented in \cite{lellmann:phd}. We prove cut-admissibility for the multi-premisses calculi. 
While these calculi have strong proof-theoretical properties, they are not best suited for root-first proof search: due to the fact that the comparative plausibility rules are not invertible, a heavy use of backtracking is needed to construct derivations. 

This motivates the introduction of a second family of proof systems, based on \emph{hypersequents} (Sec.~\ref{sec:hypersequents}). The calculi are inspired from the structured calculi for Lewis' logics introduced in \cite{olivetti2015standard,girlando2016}, which introduce an additional structural connective to Gentzen-style sequents representing $ \less $-formulas.
Following a strategy adopted e.g.~in \cite{dalmonte2021hypersequent} in the context of non-normal modal logics,
we further enrich the structure of sequents from \cite{girlando2016} by introducing hypersequent-style calculi, and show that they simulate the multi-premisses calculi.
Thanks to this richer structure we obtain invertibility of \emph{all} the rules in the calculus, 
which we would not have using the sequent structure from \cite{girlando2016}, 
and a more direct construction of countermodels from branches of failed proof search trees.
We conclude by discussing related works and further research directions (Sec.~\ref{sec:rel works}).

\section{Neighbourhood semantics}
\label{sec:semantics}

For $\atm = \{p_0, p_1, p_2, ...\}$ denumerable set
of propositional variables, 
we consider the formulas of $\lan$ be defined by the
BNF 
grammar
$A ::= p \mid \bot \mid A \to A \mid A \less A$,
where $p$ is any element of $\atm$,
and $\less$ is the operator for comparative plausibility. 
We assume $\top, \neg, \land, \lor$ to be defined as usual in terms of $\bot, \to$.

\begin{definition}\label{def:semantics}
	A \emph{neighbourhood model}
	is a 
	tuple
	$\M = \langle \W, \N, \V \rangle$,
	where $\W$ is a non-empty set of 
	worlds, 
	$\V$ is a valuation function $\atm \longrightarrow \pow(\W)$,
	and $\N$ is a function $\W \longrightarrow \pow(\pow(\W))$,
	called \emph{neighbourhood function},
	satisfying the 
	non-emptiness condition: for all $w\in\W$: $\emptyset\notin\N(w)$.%
	\footnote{
		Non-emptiness could be dropped as it has no impact on the 
		satisfiability
		of $\less$-formulas \cite{lewis}. 
		We assume it as it allows for a clean formulation of the conditions for the extensions, and for 
		uniformity with the neighbourhood semantics of conditional logics from \cite{negri2015sequent,girlando:2021}.
	}
	For all $w\in\W$ and $A\in\lan$,
	the forcing relation $\M, w \Vd A$ is defined inductively 
	as follows: 
	\begin{center}
		\begin{tabular}{lcl}
			$\M, w \Vd p$ & iff & $w \in \V(p)$. \\
			$\M, w \not\Vd \bot$. \\
			$\M, w \Vd B \to C$ & iff & if $\M, w \Vd B$, then $\M, w \Vd C$. \\
			$\M, w \Vd B \less C$ & iff & for all $\alpha\in\N(w)$, if there is $v\in\alpha$ s.t.~$\M, v \Vd C$, \\
			&& then there is $u\in\alpha$ s.t.~$\M, u \Vd B$.
		\end{tabular}
	\end{center}
	%
	We say that $A$ is \emph{valid in a model} $\M$, written $\M\models A$,
	if $\M, w \Vd A$ for all worlds $w$ of $\M$,
	and it is \emph{valid on a class of models} $\C$ if $\M\models A$ for all $\M\in\C$.
\end{definition}

In the following we simply write $w \Vd A$ when $\M$ is clear from the context.
We shall also use 
$\alpha\EVd A$ 
as an abbreviation for `there is $w \in \alpha$ such that $w \Vd A$'.
Thus, 
we can rewrite the
forcing clause of $\less$-formulas, graphically represented in Fig.~\ref{fig:picture}, as follows:
\begin{center}
	\begin{tabular}{lll}
		$w \Vd B\less C$ & iff & for all $\alpha\in\N(w)$, if $\alpha\EVd C$, then $\alpha\EVd B$. \\
	\end{tabular}
\end{center}

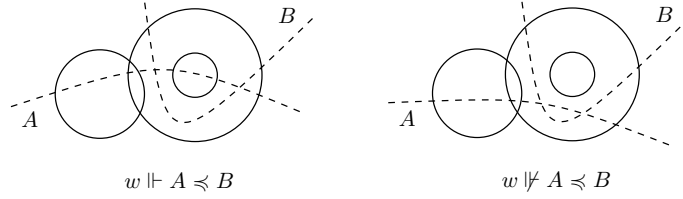
\begin{figure}[t]
	\centering
	\resizebox{26em}{!}{
		\begin{tikzpicture}		
		\draw[line width=0.2mm] (-0.5, -0.1) circle (10pt );
		\draw[line width=0.2mm, ] (-0.5, -0.1) circle (30pt );
		\draw[line width=0.2mm, ] (-2, -0.4) circle (20pt );
		\node (B) at (0.95,0.85) {$B$};
		\node (A) at (-3.1,-0.8) {$A$};
		\node (w) at (-0.75,-1.8) {$w \Vd A \less B$};
		\draw[dashed] (1.35,0.8) [line width=0.2mm, align=center] .. controls (-0.95,-1.45) ..  (-1.3,1.1);		
		\draw[dashed] (-3.4,-0.6) [line width=0.2mm, align=center] .. controls (-0.95,0.2) ..  (1.2,-0.7);
		\end{tikzpicture}
		\qquad\quad
		\begin{tikzpicture}		
		\draw[line width=0.2mm] (-0.5, -0.1) circle (10pt );
		\draw[line width=0.2mm, ] (-0.5, -0.1) circle (30pt );
		\draw[line width=0.2mm, ] (-2, -0.4) circle (20pt );
		\node (B) at (0.95,0.85) {$B$};
		\node (A) at (-3.1,-0.8) {$A$};
		\node (w) at (-0.75,-1.8) {$w \not\Vd A \less B$};
		\draw[dashed] (1.35,0.8) [line width=0.2mm, align=center] .. controls (-0.95,-1.45) ..  (-1.3,1.1);		
		\draw[dashed] (-3.4,-0.55) [line width=0.2mm, align=center] .. controls (-0.95,-0.45) ..  (1.1,-1.25);
		\end{tikzpicture}
	}
	\caption{\label{fig:picture} Representation of the forcing clause for $\less$-formulas.
		Dashed lines represent the extensions of $A$ and $B$.}
\end{figure}

We observe that unary modalities can be defined on the basis of $\less$, namely 
$\Box_K A := \bot\less\neg A$, and  $\Box_M A := \neg(\neg A \less \top)$,
where $\Box_K$ and $\Box_M$ are the Box modalities of respectively logic $\K$ and 
non-normal logic $\EM$ (cf.~e.g~\cite{pacuit}).
%
Note also that 
sphere models can be recovered by adding the condition of \emph{nesting} to the \neigh\ function: for all $ \alpha, \beta \in \N(w) $, either $ \alpha \subseteq \beta$ or $ \beta \subseteq \alpha $. 
Lewis considered in \cite{lewis} several additional properties, which turn out to be of interest when formulated on the \neigh\ function. We consider here classes of \neigh\ models satisfying combinations of the following properties:
\begin{center}
	\begin{tabular}{c ll}
		$ \mathsf{N}  $& $\N(w)\not=\emptyset$. \hfill (\emph{Normality}) \\
		$ \mathsf{T} $&There is $\alpha\in\N(w)$ such that $w \in \alpha$. \hfill (\emph{Total reflexivity})  \\
		$ \mathsf{W} $& $\N(w)\not=\emptyset$ and for all $\alpha\in\N(w)$, $w \in \alpha$. \hfill (\emph{Weak centering}) \\
		$ \mathsf{C} $& $\{w\} \in \N(w)$ and for all $\alpha\in\N(w)$, $w \in \alpha$. \hfill (\emph{Centering}) \\
		$ \mathsf{U} $& If $v \in\alpha$ and $\alpha\in\N(w)$, then $\bigcup\N(v) = \bigcup\N(w)$. \qquad \quad (\emph{Uniformity}) \\
		$ \mathsf{A} $&If $v \in\alpha$ and $\alpha\in\N(w)$, then $\N(v) = \N(w)$. \hfill (\emph{Absoluteness}) \\ 
	\end{tabular}
\end{center}
The condition of absoluteness can also be formulated as follows:
\begin{center}
	\begin{tabular}{c ll}
		$ \mathsf{A+} $& For all $ v,w \in \W $,  $\N(v) = \N(w)$. \hspace{1.8cm} \hfill (\emph{Strong Absoluteness}) \\ 
	\end{tabular}
\end{center}
Equivalence of  $\mathsf{A} $ and $  \mathsf{A+} $ over formulas validity can be easily established using the same strategy described by Lewis \cite[p.~122]{lewis}. We will use $  \mathsf{A+} $ in Sec.~\ref{sec:hypersequents}.

Neighbourhood semantics can be used to express a variety of situations. By means of example, 
let $\alpha, \beta, \gamma, ...$ in $\N(w)$ represent sources of information available at $w$ which are 
not arranged in any priority or reliability order. 
In this setting, $A \less B$ expresses that $A$ is at least as plausible as $B$ in that
whenever $w$ receives information $B$, it also receives information $A$.
Then, model conditions represent natural assumptions about the information sources:
by \emph{normality},  every 
$w$ has a 
source of information available,
while according to \emph{reflexivity} or \emph{weak centering},
$w$ belongs to some or all of the sources available to itself
(e.g.~online forums $w$ must be registered at).
Moreover, \emph{uniformity} and \emph{absoluteness} express kinds of information bubbles,
since  if $v$ belongs to a source 
available to $w$,
then $w$ and $v$ have access to the same sources of information.

\section{Axiom systems for \CPN\ logics}
\label{sec:axioms}

In this section we present the logics of Comparative Plausibility in Neighbourhood models
(\CPN\ logics in the following)
corresponding to 
the classes of neighbourhood models introduced in Sec.~\ref{sec:semantics}.
We propose axiom systems for \CPN\ logics and show their soundness and completeness. 
Then, we compare \CPN\ logics with Lewis' logics of comparative plausibility in sphere models.

\begin{definition}\label{def:systems}
	\CPN\ logics are defined 
	by extending classical 
	propositional logic ($\CPL$) formulated in $\lan$ with the 
	rules and axioms for $\less$ from Fig.~\ref{fig:axioms}:
	\begin{center}
		\begin{tabular}{llllll}
			$\Nless := \CPL \cup\{\axTR, \axOR, \CPR\}$ &
			$\NNless := \Nless \cup \{\axN\}$  &
			$\NTless := \Nless \cup \{\axT\}$ \\
			$\NWless := \NTless \cup \{\axW\}$ &
			$\NCless := \NWless \cup \{\axC\}$ \\
		\end{tabular}
	\end{center}
	Moreover, 
	for $ \logic \in \{\Nless, \NNless, \NTless,\NWless,\NCless \} $, 
	we define
	$\logicU_\less := \logic \cup \{\axUone, \axUtwo\}$ and
	$\logicA_\less := \logic \cup \{\axAone, \axAtwo\}$. 
\end{definition}

\begin{figure}
	\fbox{\begin{minipage}{32.5em}
			\centering
			\begin{small}
				\begin{tabular}{lllllll}
					& & & & & & \\[-0.5cm]
					\multicolumn{2}{l}{\multirow{ 2}{*}{$\CPR$ \ax{$A \to B$}\uinf{$B \less A$}\disp}} &&
					$\TR$ & $(A \less B) \land (B \less C) \to (A \less C)$ \\
					
					\vspace{0.1cm}
					&&& $\axOR$ & $(A \less B) \land (A \less C) \to (A \less B \lor C)$ \\ 
					$\axN$ & $\neg(\bot\less\top)$ && $\axUone$ & $\neg(\bot\less A) \to (\bot\less(\bot\less A))$ \\
					$\axT$ & $(\bot\less A) \to \neg A$  && $\axUtwo$ & $(\bot\less A) \to (\bot\less\neg(\bot\less A))$ \\
					$\axW$ & $A \to (A \less \top)$ && $\axAone$ & $(A\less B) \to (\bot\less\neg(A\less B))$ \\
					\vspace{0.1cm}
					$\axC$ & $(A\less\top) \to A$ && $\axAtwo$ & $\neg(A\less B) \to (\bot\less(A\less B))$ \\[-0.3cm]
				\end{tabular}
			\end{small}
	\end{minipage}}
	\caption{\label{fig:axioms} Axioms and rules for \CPN\ logics.} 
\end{figure}

The logics generated by this definition are displayed in the lower layer of
the lattice of 
systems in Fig.~\ref{fig:dyagram}.
The axioms of $ \Nless $ are those defined by Lewis in \cite[Ch.6]{lewis}, while axioms for extensions of $ \Nless $ are  reformulations of Lewis' axioms in terms of $\less$ \cite{lellpatt,girlando2016}. 
In the following, 
for every logic $\logic$ from Def.~\ref{def:systems}, we denote $\lstar$ any extension of $\logic$.
As usual, we say that a formula $A$ is 
\emph{derivable} in $\Nlessstar$, written 
$\vd_\Nlessstar A$, 
if there is a finite sequence of formulas ending with $A$ where every formula is an axiom of $\Nlessstar$,
or is obtained from previous formulas by \emph{modus ponens} or $\CPR$.
Moreover, we say that $A$ is \emph{deducible} in $\Nlessstar$ from a set of formulas $\Phi$ if there is a finite set 
$\{B_1, ..., B_n\}\subseteq\Phi$ such that 
$\vd_\Nlessstar  B_1 \land ... \land B_n \to A$.

For each logic $\Nlessstar$, we call \emph{$\Nlessstar$-model} any neighbourhood model
satisfying the conditions corresponding to the letters 
appearing beside $\mathsf{N}$
in the name of the logic. Thus, $\Nless$-models denotes the class of all \neigh\ models,   $\NNless$-models the class of all 
models satisfying normality, and so on.

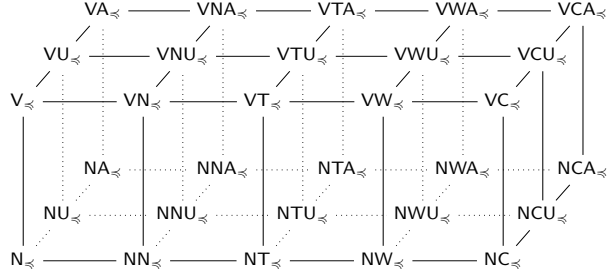
\begin{figure}[t]
	\centering
	\resizebox{8.4cm}{3.7cm}{
		\begin{small}
			\begin{tikzpicture}
			\node (n) at  (0,0)  {$\Nless$};
			\node (nn) at (2, 0) {$\NNless$};
			\node (nt) at (4, 0) {$\NTless$};
			\node (nw) at (6, 0) {$\NWless$};
			\node (nc) at (8, 0) {$\NCless$};
			
			\node (nu) at  (0.66,0.8)  {$\NUless$};
			\node (nnu) at (2.66,0.8) {$\NNUless$};
			\node (ntu) at (4.66,0.8) {$\NTUless$};
			\node (nwu) at (6.66,0.8) {$\NWUless$};
			\node (ncu) at (8.66,0.8) {$\NCUless$};
			
			\node (na) at  (1.33,1.6)  {$\NAless$};
			\node (nna) at (3.33,1.6) {$\NNAless$};
			\node (nta) at (5.33,1.6) {$\NTAless$};
			\node (nwa) at (7.33,1.6) {$\NWAless$};
			\node (nca) at (9.33,1.6) {$\NCAless$};
			
			\node (v) at  (0,2.8)  {$\Vless$};
			\node (vn) at (2, 2.8) {$\VNless$};
			\node (vt) at (4, 2.8) {$\VTless$};
			\node (vw) at (6, 2.8) {$\VWless$};
			\node (vc) at (8, 2.8) {$\VCless$};
			
			\node (vu) at  (0.66,3.6)  {$\VUless$};
			\node (vnu) at (2.66,3.6) {$\VNUless$};
			\node (vtu) at (4.66,3.6) {$\VTUless$};
			\node (vwu) at (6.66,3.6) {$\VWUless$};
			\node (vcu) at (8.66,3.6) {$\VCUless$};
			
			\node (va) at  (1.33,4.4)  {$\VAless$};
			\node (vna) at (3.33,4.4) {$\VNAless$};
			\node (vta) at (5.33,4.4) {$\VTAless$};
			\node (vwa) at (7.33,4.4) {$\VWAless$};
			\node (vca) at (9.33,4.4) {$\VCAless$};
			
			\draw (n) -- (nn);
			\draw (nn) -- (nt);
			\draw (nt) -- (nw);
			\draw (nw) -- (nc);
			
			\draw [dotted] (nu) -- (nnu);
			\draw [dotted] (nnu) -- (ntu);
			\draw [dotted] (ntu) -- (nwu);
			\draw [dotted] (nwu) -- (ncu);
			
			\draw [dotted] (na) -- (nna);
			\draw [dotted] (nna) -- (nta);
			\draw [dotted] (nta) -- (nwa);
			\draw [dotted] (nwa) -- (nca);
			
			\draw [dotted] (n) -- (nu);
			\draw [dotted] (nn) -- (nnu);
			\draw [dotted] (nt) -- (ntu);
			\draw [dotted] (nw) -- (nwu);
			\draw (nc) -- (ncu);
			
			\draw [dotted] (nu) -- (na);
			\draw [dotted] (nnu) -- (nna);
			\draw [dotted] (ntu) -- (nta);
			\draw [dotted] (nwu) -- (nwa);
			\draw (ncu) -- (nca);
			
			\draw (v) -- (vn);
			\draw (vn) -- (vt);
			\draw (vt) -- (vw);
			\draw (vw) -- (vc);
			
			\draw (vu) -- (vnu);
			\draw (vnu) -- (vtu);
			\draw (vtu) -- (vwu);
			\draw (vwu) -- (vcu);
			
			\draw (va) -- (vna);
			\draw (vna) -- (vta);
			\draw (vta) -- (vwa);
			\draw (vwa) -- (vca);
			
			\draw (v) -- (vu);
			\draw (vn) -- (vnu);
			\draw (vt) -- (vtu);
			\draw (vw) -- (vwu);
			\draw (vc) -- (vcu);
			
			\draw (vu) -- (va);
			\draw (vnu) -- (vna);
			\draw (vtu) -- (vta);
			\draw (vwu) -- (vwa);
			\draw (vcu) -- (vca);
			
			\draw (n) -- (v);
			\draw (nn) -- (vn);
			\draw (nt) -- (vt);
			\draw (nw) -- (vw);
			\draw (nc) -- (vc);
			
			\draw [dotted] (nu) -- (vu);
			\draw [dotted] (nnu) -- (vnu);
			\draw [dotted] (ntu) -- (vtu);
			\draw [dotted] (nwu) -- (vwu);
			\draw (ncu) -- (vcu);
			
			\draw [dotted] (na) -- (va);
			\draw [dotted] (nna) -- (vna);
			\draw [dotted] (nta) -- (vta);
			\draw [dotted] (nwa) -- (vwa);
			\draw (nca) -- (vca);
			\end{tikzpicture}
		\end{small}
	}
	\caption{\label{fig:dyagram} The family of \CPN\ logics. The system in the upper layer are Lewis' logics.}
\end{figure}

We show that each logic $\Nlessstar$ is characterised by the class of all $\Nlessstar$-models. 
We first prove 
that the logics are sound with respect to the corresponding classes of models.

\begin{theorem}[Soundness]\label{th:soundness}
	For every formula $A$, if $A$ is derivable in $\Nlessstar$, then $A$ is valid in all $\Nlessstar$-models.
\end{theorem}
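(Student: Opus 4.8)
The plan is to proceed by induction on the length of a derivation of $A$ in $\Nlessstar$, showing that every axiom is valid on the class of $\Nlessstar$-models and that the two inference rules preserve validity. For the inductive step, \emph{modus ponens} is immediate from the clause for $\to$. For $\CPR$ I would argue semantically: assuming $A \to B$ is valid, if $\alpha \EVd A$ for some $\alpha \in \N(w)$, then the witnessing world forces $A$ hence $B$, so $\alpha \EVd B$; thus $w \Vd B \less A$ at every $w$. The classical propositional axioms inherited from $\CPL$ are valid on every model by the standard argument.

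For the base case I would check the $\less$-axioms one by one directly against the forcing clause. The axioms $\axTR$ and $\axOR$ hold on \emph{all} neighbourhood models: for $\axTR$, if $\alpha \EVd C$ then $\alpha \EVd B$ by $B \less C$ and hence $\alpha \EVd A$ by $A \less B$; for $\axOR$, a witness for $\alpha \EVd B \lor C$ forces $B$ or $C$, and either way $\alpha \EVd A$ follows. The centering-style axioms use the corresponding model conditions pointwise: $\axN$ refutes $\bot \less \top$ because normality yields some $\alpha \in \N(w)$, which is non-empty by the standing model condition $\emptyset \notin \N(w)$, so $\alpha \EVd \top$ while never $\alpha \EVd \bot$; and $\axT$, $\axW$, $\axC$ exploit, respectively, total reflexivity (a neighbourhood containing $w$), weak centering ($w$ lies in every neighbourhood), and centering ($\{w\} \in \N(w)$) in the evident way.

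The cleanest route through the uniformity and absoluteness axioms is to first record the characterisation that $w \Vd \bot \less A$ iff no world in $\bigcup \N(w)$ forces $A$ (since $\alpha \EVd \bot$ is always false), together with the observation that whether $w \Vd A \less B$ depends only on $\N(w)$ and the fixed global valuation. With these in hand, $\axUone$ and $\axUtwo$ follow from uniformity: for $v \in \bigcup \N(w)$ one has $\bigcup \N(v) = \bigcup \N(w)$, so $v \Vd \bot \less A$ iff $w \Vd \bot \less A$, which turns both axioms into tautologous implications. Analogously $\axAone$ and $\axAtwo$ follow from absoluteness, since $\N(v) = \N(w)$ gives $v \Vd A \less B$ iff $w \Vd A \less B$ for every $v \in \bigcup \N(w)$.

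I expect the main obstacle to be precisely these last four axioms, where one must unwind two nested occurrences of $\less$. The characterisation of $\bot \less A$ and the locality of $\less$-forcing are exactly what make the computation manageable, so stating them explicitly before the case analysis is the key organisational step; once they are in place the remaining verifications are routine.
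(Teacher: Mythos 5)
Your proposal is correct and follows essentially the same route as the paper's proof: a check that each $\less$-axiom is valid on the corresponding class of models, verified directly against the forcing clause, and that the rules ($\CPR$ and \emph{modus ponens}) preserve validity, with the cases for $\CPR$, $\axTR$, $\axOR$, $\axN$, $\axT$, $\axW$, $\axC$ argued exactly as in the paper. Your treatment is in fact more complete than the paper's, which only works through ``some relevant examples'' and leaves the uniformity and absoluteness axioms implicit; your preliminary observations that $w \Vd \bot \less A$ holds iff no world in $\bigcup \N(w)$ forces $A$, and that the truth of $A \less B$ at $w$ depends only on $\N(w)$ and the global extensions of $A$ and $B$, are sound and correctly discharge $\axUone$, $\axUtwo$, $\axAone$, $\axAtwo$ from the conditions $\mathsf{U}$ and $\mathsf{A}$.
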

\begin{proof}
	We show that the modal axioms and rules of $\Nlessstar$ are valid (resp.~sound) in the corresponding models,
	considering some relevant examples.
	($\CPR$) Assume $\M \models A \to B$,
	and $\alpha\in\N(w)$, $\alpha\EVd A$. Then $\alpha\EVd B$,
	therefore $w \Vd B \less A$.
	($\axTR$) If $w \Vd (A \less B) \land (B \less C)$, then for every $\alpha\in\N(w)$, 
	$\alpha\EVd B$ implies $\alpha\EVd A$, and $\alpha\EVd C$ implies $\alpha\EVd B$,
	then $\alpha\EVd C$ implies $\alpha\EVd A$,
	therefore $w \Vd A \less C$.
	($\axOR$) If $w \Vd (A \less B) \land (A \less C)$, then for every $\alpha\in\N(w)$, 
	$\alpha\EVd B$ implies $\alpha\EVd A$, and $\alpha\EVd C$ implies $\alpha\EVd A$,
	then $\alpha\EVd B \lor C$ implies $\alpha\EVd A$,
	therefore $w \Vd A \less B \lor C$.
	($\axN$) By normality, for all $w\in\W$ there is $\alpha\in\N(w)$. Moreover, $\alpha\not\EVd\bot$ and since $\alpha\not=\emptyset$, $\alpha\EVd\top$, then $w\Vd\neg(\bot\less\top)$.
	($\axT$) Assume $w\Vd\bot\less A$. Then for all $\alpha\in\N(w)$, $\alpha\not\EVd A$. Moreover by total reflexivity, 
	there is $\beta\in\N(w)$ such that $w\in\beta$. Then $w\not\Vd A$, thus $w\Vd\neg A$.
	($\axW$) Assume $w\Vd A$. By weak centering, $\N(w)\not=\emptyset$, and $w\in\alpha$ for all $\alpha\in\N(w)$.
	Then for all $\alpha\in\N(w)$, $\alpha\EVd A$,  thus $w\Vd \top\less A$.
	($\axC$) Assume $w\Vd A\less \top$. Then for all $\alpha\in\N(w)$, $\alpha\EVd A$.
	Moreover by centering, $\{w\}\in\N(w)$, therefore $w\Vd A$.
\end{proof}

Using a canonical model construction inspired from \cite{lewis}, we shall now prove that $\Nlessstar$ is complete with respect to the class of all $\Nlessstar$-models.
As usual, for any logic $\Logic$ and set of formulas $\Phi$,
we say that $\Phi$ is $\Logic$-\emph{consistent}
if 
$\Phi\not\vd_{\Logic} \bot$,
and that it is $\Logic$-\emph{maximal consistent} (maxcons) if it is consistent and
for every 
$B \notin\Phi$, 
$\Phi \cup \{B\} \vd_{\Logic} \bot$.
The proof of the following Lemma is standard.

\begin{lemma}\label{lemma:lindenbaum}
	(a) If $\Phi$ is a $\Logic$-consistent set of formulas, then there is a $\Logic$-maximal consistent set $\Psi$
	such that $\Phi\subseteq\Psi$.
	(b) If $\Phi$ is a $\Logic$-maximal consistent set,
	then for all $A, B\in\lan$, 
	(i) if $\Phi\vd_{\Logic} A$, then $A\in\Phi$;
	(ii) $A \in\Phi$ if and only if $\neg A \notin\Phi$;
	(iii) if $A \lor B \in\Phi$, then $A\in\Phi$ or $B\in\Phi$.
\end{lemma}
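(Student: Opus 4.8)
The plan is to prove (a) by a standard Lindenbaum construction and (b) by reading a deduction theorem directly off the definition of deducibility. Throughout I use that $\lan$ is countable, which holds since $\atm$ is denumerable.

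For (a), I would fix an enumeration $B_0, B_1, B_2, \dots$ of all $\lan$-formulas and build an increasing chain from $\Psi_0 = \Phi$ by setting $\Psi_{n+1} = \Psi_n \cup \{B_n\}$ when this set is $\Logic$-consistent and $\Psi_{n+1} = \Psi_n$ otherwise, and then take $\Psi = \bigcup_n \Psi_n$. Consistency of $\Psi$ follows because any deduction of $\bot$ from $\Psi$ uses, by the definition of deducibility, only a finite subset $\{B_1, \dots, B_k\} \subseteq \Psi$; this subset already lies in some $\Psi_n$, which would then be inconsistent, against the construction. For maximality I would argue that if $B \notin \Psi$, say $B = B_n$, then $B_n$ was rejected at stage $n$, meaning $\Psi_n \cup \{B_n\}$ is inconsistent, and since $\Psi_n \subseteq \Psi$ we obtain $\Psi \cup \{B\} \vd_{\Logic} \bot$ as required.

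For (b), the crucial preliminary step is to observe that the definition of deducibility yields a deduction theorem: $\Phi \cup \{A\} \vd_{\Logic} C$ iff $\Phi \vd_{\Logic} A \to C$, since $\vd_{\Logic} (B_1 \land \cdots \land B_n \land A) \to C$ and $\vd_{\Logic} (B_1 \land \cdots \land B_n) \to (A \to C)$ are $\CPL$-equivalent. With this in hand I would derive the three clauses. For (i), if $A \notin \Phi$ while $\Phi \vd_{\Logic} A$, then maximality gives $\Phi \cup \{A\} \vd_{\Logic} \bot$, hence $\Phi \vd_{\Logic} \neg A$, contradicting consistency. For (ii), $A$ and $\neg A$ cannot both lie in $\Phi$ without making it inconsistent, so $A \in \Phi$ forces $\neg A \notin \Phi$; conversely, if $\neg A \notin \Phi$ then maximality and the deduction theorem give $\Phi \vd_{\Logic} A$, whence $A \in \Phi$ by (i). For (iii), if neither $A$ nor $B$ is in $\Phi$, then $\neg A, \neg B \in \Phi$ by (ii), so $\Phi \vd_{\Logic} \neg(A \lor B)$, incompatible with $A \lor B \in \Phi$.

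The only point deserving care is the \emph{deduction theorem} invoked throughout (b). It is immediate once the definition of deducibility is unpacked, but it is worth isolating precisely because $\CPR$ is a rule rather than an axiom: the definition is arranged so that $\CPR$ contributes only to theoremhood $\vd_{\Logic}$ and is never applied to the hypotheses $B_1, \dots, B_n$, so no side condition on free hypotheses arises. With that settled, all remaining steps are routine classical reasoning inside $\CPL$.
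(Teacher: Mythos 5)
Your proof is correct and is exactly the standard argument the paper has in mind; the paper itself omits the proof, remarking only that it is standard. Your isolation of the deduction theorem is the right point of care: because the paper defines deducibility from $\Phi$ as theoremhood of an implication from a finite conjunction of members of $\Phi$, the rule $\CPR$ only ever operates inside derivations of theorems and never on hypotheses, so the deduction theorem (and hence all of part (b)) goes through by routine $\CPL$ reasoning.
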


We consider the following notion of \emph{cut around},%
\footnote{This terminology comes from Lewis~\cite{lewis}, but our definition 
	is different from Lewis' one.
}
and prove the subsequent lemma that will be needed in the following.

\begin{definition}\label{def:cut around}
	Let $\Phi$ be a maximal consistent set of formulas, and $\Sigma$ be a set of formulas.
	We say that $\Sigma$ is a \emph{cut around} 
	$\Phi$ if for all finite sets $\{B_1, ..., B_n\} \subseteq \Sigma$ and all $A\notin\Sigma$,
	$(B_1 \lor ... \lor B_n) \less A \notin \Phi$. 
	Moreover,  
	let $\COSigma = \{\Psi \textup{ maxcons} \mid \Psi\cap\Sigma=\emptyset\}$.
\end{definition}

\begin{lemma}\label{lemma:cosphere} 
	If $\Sigma$ is a cut around $\Phi$ for some maximal consistent set $\Phi$, then for every formula $A$,
	$A \in \Sigma$ if and only if for all $\Psi\in\COSigma$, $A\notin\Psi$.
\end{lemma}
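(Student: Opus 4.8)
The plan is to prove the biconditional directionwise, noting that one direction is immediate from the definitions while the other rests on a single consistency argument. For the direction from left to right, suppose $A \in \Sigma$ and take any $\Psi \in \COSigma$. By definition of $\COSigma$ we have $\Psi \cap \Sigma = \emptyset$, so $A \in \Sigma$ forces $A \notin \Psi$; this uses nothing beyond the definition of $\COSigma$, and in particular does not invoke the cut-around property.

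For the converse I would argue contrapositively: assuming $A \notin \Sigma$, I want to exhibit some $\Psi \in \COSigma$ with $A \in \Psi$, thereby refuting ``$A \notin \Psi$ for all $\Psi \in \COSigma$''. The candidate is built by extending the set $\Delta := \{A\} \cup \{\neg B \mid B \in \Sigma\}$ to a maximal consistent set using Lemma~\ref{lemma:lindenbaum}(a). Granting that $\Delta$ is $\Logic$-consistent, let $\Psi \supseteq \Delta$ be maximal consistent. Then $A \in \Psi$, and for every $B \in \Sigma$ we have $\neg B \in \Psi$, so by Lemma~\ref{lemma:lindenbaum}(b)(ii) $B \notin \Psi$; hence $\Psi \cap \Sigma = \emptyset$, i.e.\ $\Psi \in \COSigma$, as required. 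So the whole converse reduces to establishing the consistency of $\Delta$.

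The crux is precisely this consistency claim, and this is where the cut-around hypothesis enters through the rule $\CPR$. Suppose for contradiction that $\Delta$ is inconsistent. Then some finite subset is inconsistent, and since adding formulas preserves inconsistency we may assume it has the form $\{A, \neg B_1, \ldots, \neg B_n\}$ with $B_1, \ldots, B_n \in \Sigma$ (allowing the degenerate case $n = 0$, where the subset is $\{A\}$ and the empty disjunction is read as $\bot$). Inconsistency means $\vdl (A \land \neg B_1 \land \cdots \land \neg B_n) \to \bot$, which by classical reasoning is $\vdl A \to (B_1 \lor \cdots \lor B_n)$. Applying $\CPR$ then yields $\vdl (B_1 \lor \cdots \lor B_n) \less A$, so this formula lies in every maximal consistent set, in particular in $\Phi$. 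But $\{B_1, \ldots, B_n\} \subseteq \Sigma$ while $A \notin \Sigma$, so the fact that $\Sigma$ is a cut around $\Phi$ gives $(B_1 \lor \cdots \lor B_n) \less A \notin \Phi$ --- a contradiction. Hence $\Delta$ is consistent, completing the argument.

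I expect the main obstacle to be the consistency step, and specifically the recognition that $\CPR$ is the exact bridge: it turns the derivability $\vdl A \to (B_1 \lor \cdots \lor B_n)$ coming from the assumed inconsistency into precisely the $\less$-formula that the cut-around condition forbids $\Phi$ to contain. The only other point demanding care is the bookkeeping of degenerate finite subsets (reducing an arbitrary inconsistent finite subset to one containing $A$, and treating the empty disjunction as $\bot$ when $n = 0$).
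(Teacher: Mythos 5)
Your proof is correct and follows essentially the same route as the paper's: the forward direction is immediate from the definition of $\COSigma$, and the converse hinges on showing $\{A\}\cup\{\neg B \mid B\in\Sigma\}$ is consistent by deriving $\vdl A \to B_1\lor\cdots\lor B_n$ from a putative inconsistency, applying $\CPR$, and contradicting the cut-around condition on $\Phi$, then extending via Lemma~\ref{lemma:lindenbaum}. Your explicit treatment of the degenerate case $n=0$ (reading the empty disjunction as $\bot$) is a small point of care that the paper leaves implicit.
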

\begin{proof}
	If $A \in\Sigma$ and $\Psi\in\COSigma$, then $\Psi\cap\Sigma = \emptyset$,
	thus $A \notin \Psi$.
	If instead $A \notin\Sigma$, then suppose by contradiction that
	$\{\neg B \mid B \in \Sigma\} \cup \{A\} \vd_{\Logic}\bot$.
	Then there are formulas $B_1, ..., B_n \in \Sigma$ such that 
	$\vd_{\Logic} \neg B_1 \land ... \land \neg B_n \to \neg A$,
	thus
	$\vd_{\Logic} A \to B_1 \lor ... \lor B_n$, therefore 
	$\vd_{\Logic} (B_1 \lor ... \lor B_n) \less A$.
	By closure under derivation, $(B_1 \lor ... \lor B_n) \less A \in \Phi$,
	but by definition of cut around, $(B_1 \lor ... \lor B_n) \less A \notin \Phi$.
	We conclude that $\{\neg B \mid B \in \Sigma\} \cup \{A\} \not\vd_{\Logic}\bot$.
	Then by Lemma~\ref{lemma:lindenbaum}, there is $\Psi\in\Wl$ such that
	$\{\neg B \mid B \in \Sigma\} \cup \{A\} \subseteq \Psi$,
	therefore $\Psi\in\COSigma$ and $A \in \Psi$.
\end{proof}

From Lemma~\ref{lemma:cosphere} it immediately follows that $\bot\in\Sigma$ for all $\Sigma$ cut around $\Phi$.
We 
now define the canonical model.

\begin{definition}\label{def:canonical model}
	For every \CPN\ logic $\logic$, 
	the \emph{canonical model} for $\Logic$ is the tuple $\Ml = \langle \Wl, \Nl, \Vl \rangle$,
	where:
	
	\begin{itemize}
		\item $\Wl$ is the class of all $\Logic$-maximal consistent sets;
		\item for all $\Phi \in \Wl$, $\Nl(\Phi) = \{\COSigma \mid \Sigma \textup{ cut around } \Phi \textup{ and } \COSigma \not=\emptyset\}$;
		\item $\Vl(p) = \{\Phi \in \Wl \mid p \in \Phi\}$.
	\end{itemize}
\end{definition}

\begin{lemma}[Truth lemma]\label{lemma:truth}
	If $\logic$ is a \CPN\ logic and $\Ml$ is the canonical model for $\logic$, then 
	for all $A\in\lan$ and all $\Phi\in\Wl$, $\Phi\Vd A$ if and only if $A \in \Phi$.
\end{lemma}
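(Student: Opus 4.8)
The plan is to argue by induction on the structure of $A$. For the propositional base cases the statement is immediate: for $A = p$ we have $\Phi \Vd p$ iff $\Phi \in \Vl(p)$ iff $p \in \Phi$ by definition of $\Vl$, while $\Phi \not\Vd \bot$ and $\bot \notin \Phi$ since every element of $\Wl$ is consistent. The case $A = B \to C$ is the standard one for maximal consistent sets and is handled using Lemma~\ref{lemma:lindenbaum}: clauses (ii) and (iii) together with closure under deduction give that $B \to C \in \Phi$ iff ($B \in \Phi$ implies $C \in \Phi$), which by the induction hypothesis matches the forcing clause. All the work is in the modal case $A = B \less C$, where I will prove the two directions separately, using the reformulated forcing clause $\Phi \Vd B \less C$ iff for all $\alpha \in \Nl(\Phi)$, $\alpha \EVd C$ implies $\alpha \EVd B$, together with the characterisation of cut arounds in Lemma~\ref{lemma:cosphere}.

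For the direction $B \less C \in \Phi \Rightarrow \Phi \Vd B \less C$, I would take an arbitrary $\alpha \in \Nl(\Phi)$, so $\alpha = \COSigma$ for some $\Sigma$ cut around $\Phi$ with $\COSigma \neq \emptyset$, assume $\alpha \EVd C$, and show $\alpha \EVd B$. By the induction hypothesis $\alpha \EVd C$ means some $\Psi \in \COSigma$ contains $C$, so by Lemma~\ref{lemma:cosphere} we get $C \notin \Sigma$. If $B$ were in $\Sigma$, then instantiating the definition of cut around with the singleton $\{B\} \subseteq \Sigma$ and $C \notin \Sigma$ would yield $B \less C \notin \Phi$, contradicting the hypothesis; hence $B \notin \Sigma$, and Lemma~\ref{lemma:cosphere} produces some $\Psi' \in \COSigma$ with $B \in \Psi'$, i.e.\ $\alpha \EVd B$ by the induction hypothesis.

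For the converse I argue contrapositively: assuming $B \less C \notin \Phi$, I build a neighbourhood refuting $B \less C$ at $\Phi$. The key idea is to set $\Sigma = \{D \in \lan \mid B \less D \in \Phi\}$. Since $\vd_{\Logic} B \to B$, rule $\CPR$ gives $B \less B \in \Phi$, so $B \in \Sigma$; and $C \notin \Sigma$ is exactly the assumption $B \less C \notin \Phi$. To see that $\Sigma$ is a cut around $\Phi$, I take a finite $\{D_1, \dots, D_n\} \subseteq \Sigma$ and $A \notin \Sigma$ and suppose for contradiction $(D_1 \lor \dots \lor D_n) \less A \in \Phi$: from $B \less D_i \in \Phi$ for each $i$, repeated use of $\axOR$ yields $B \less (D_1 \lor \dots \lor D_n) \in \Phi$, and then $\axTR$ gives $B \less A \in \Phi$, contradicting $A \notin \Sigma$. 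Finally, applying Lemma~\ref{lemma:cosphere} to $C \notin \Sigma$ produces a $\Psi \in \COSigma$ with $C \in \Psi$, which in particular shows $\COSigma \neq \emptyset$, so $\alpha := \COSigma \in \Nl(\Phi)$; by the induction hypothesis $\alpha \EVd C$, while $B \in \Sigma$ and Lemma~\ref{lemma:cosphere} force $B \notin \Psi$ for every $\Psi \in \COSigma$, so $\alpha \not\EVd B$, witnessing $\Phi \not\Vd B \less C$.

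I expect the main obstacle to be guessing the right refuting neighbourhood in the converse direction: the definition $\Sigma = \{D \mid B \less D \in \Phi\}$ is precisely what makes both the cut-around property (through $\axOR$ and $\axTR$) and the exclusion of $C$ fall out cleanly, and verifying that this $\Sigma$ is genuinely a cut around $\Phi$ — rather than merely a set excluding $C$ — is the delicate point, since it is exactly where the transitivity and disjunction axioms are needed. The atomic and Boolean cases, by contrast, are entirely routine.
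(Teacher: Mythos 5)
Your proof is correct, and it rests on the same central construction as the paper's: the set $\Sigma = \{D \mid B \less D \in \Phi\}$ (the paper's $\PhiBless$), the verification that it is a cut around $\Phi$ via $\CPR$, $\axOR$ and $\axTR$, and Lemma~\ref{lemma:cosphere} to pass between membership in $\Sigma$ and membership in the elements of $\COSigma$. Your first direction coincides with the paper's ($\Leftarrow$) argument, merely phrased directly rather than contrapositively. The genuine difference is in the other direction. The paper proves ``$\Phi \Vd B \less C$ implies $B \less C \in \Phi$'' head-on: it assumes forcing, shows by reductio (using the \ih) that $\{\neg D \mid D \in \PhiBless\} \cup \{C\}$ is inconsistent, extracts finitely many $D_1, \dots, D_n \in \PhiBless$ with $\vd_{\Logic} C \to D_1 \lor \dots \lor D_n$, and then pushes $B \less C$ into $\Phi$ syntactically via $\CPR$, $\axOR$ and $\TR$. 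You instead prove the contrapositive: from $B \less C \notin \Phi$ you exhibit $\CO{\PhiBless}$ itself as the refuting neighbourhood, reading off $\CO{\PhiBless} \EVd C$, $\CO{\PhiBless} \not\EVd B$, and nonemptiness (hence $\CO{\PhiBless} \in \Nl(\Phi)$) directly from Lemma~\ref{lemma:cosphere} and the \ih. Your route is shorter because the compactness-plus-$\CPR$ argument that the paper re-runs is already packaged inside the proof of Lemma~\ref{lemma:cosphere}; what the paper's version offers in exchange is an explicit derivation showing exactly how the axioms force $B \less C$ into $\Phi$, but nothing else in the completeness argument depends on that, so your organisation is a clean simplification rather than a gap.
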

\begin{proof}
	By induction on the construction of $A$. 
	For atomic and propositional formulas the proof is standard.
	We consider the case $A = B \less C$.
	
	($\Rightarrow$) 
	Suppose $\Phi\Vd B \less C$.
	Then for all $\alpha\in\Nl(\Phi)$, $\alpha\EVd C$ implies $\alpha\EVd B$.
	By definition, this means that for every $\Sigma$ cut around $\Phi$, 
	$\COSigma\EVd C$ implies $\COSigma\EVd B$.
	Let $\PhiBless = \{D \mid B \less D \in \Phi\}$. 
	Then since $B \less B \in\Phi$, $B \in \PhiBless$.
	Moreover, $\PhiBless$ is a cut around $\Phi$:
	if $E_1, ..., E_n \in \PhiBless$ and $F \notin\PhiBless$,
	then $B \less E_1, ..., B \less E_n \in \Phi$, and $B \less F \notin\Phi$.
	Thus by axiom $\axOR$ and closure under derivation,
	$B \less (E_1 \lor ... \lor E_n) \in \Phi$,
	whence by $\TR$, $(E_1 \lor ... \lor E_n) \less F\notin\Phi$.
	Now suppose by contradiction that $\{\neg D \mid D \in \PhiBless\} \cup \{C\} \not\vd_{\Logic} \bot$.
	Then by Lemma~\ref{lemma:lindenbaum} there is $\Psi\in\Wl$ such that
	$\{\neg D \mid D \in \PhiBless\} \cup \{C\} \subseteq\Psi$.
	We then have $C \in \Psi$ and $\PhiBless \cap \Psi = \emptyset$, 
	which implies $\Psi\in\CO{\PhiBless}$.
	By \ih, $\Psi \Vd C$,
	thus $\CO{\PhiBless} \EVd C$, which implies $\CO{\PhiBless} \EVd B$.
	This means that there is $\Omega \in \CO{\PhiBless}$ such that $\Omega \Vd B$,
	therefore by \ih, $B \in \Omega$.
	Furthermore, by definition we have $\Omega \cap \PhiBless = \emptyset$,
	then $B \notin \PhiBless$, which contradicts $B \in \PhiBless$.
	Therefore $\{\neg D \mid D \in \PhiBless\} \cup \{C\} \vd_{\Logic} \bot$.
	Then 
	there are
	$D_1, ..., D_n \in \PhiBless$ 
	such that
	$\vd_{\Logic} \neg D_1 \land ... \land \neg D_n \to \neg C$,
	that is  $\vd_{\Logic} C \to D_1 \lor ... \lor D_n$,
	whence by $\CPR$, 
	$(D_1 \lor ... \lor D_n) \less C \in \Phi$.
	Moreover by definition of $\PhiBless$, 
	$B \less D_1, ..., B \less D_n \in \Phi$.
	Then by $\axOR$, $B \less (D_1 \lor ... \lor D_n) \in \Phi$, 
	finally by $\TR$, $B \less C \in \Phi$.

	$(\Leftarrow)$
	Suppose $\Phi\not\Vd B \less C$.
	Then there is $\alpha\in\Nl(\Phi)$ such that $\alpha\EVd C$ and $\alpha\not\EVd B$,
	i.e., there is a $\Sigma$ cut around $\Phi$ with $\COSigma\EVd C$ and $\COSigma\not\EVd B$.
	By \ih\ there is $\Psi\in\COSigma$ such that $C\in\Psi$, and 
	for all $\Omega\in\COSigma$, $B\notin\Omega$.
	Then $C \notin\Sigma$, and from Lemma~\ref{lemma:cosphere} it follows that
	$B\in\Sigma$. Then by definition $B \less C \notin\Phi$.
\end{proof}

\begin{lemma}[Model lemma]\label{lemma:model}
	The canonical model for $\Nlessstar$ 
	is a 
	$\Nlessstar$-model.
\end{lemma}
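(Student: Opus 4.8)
The plan is to verify, modularly, the basic \neigh\ requirement and then each frame condition named by the letters of $\Nlessstar$, using only the axiom(s) matching that letter: since $\Nlessstar$ contains all axioms corresponding to the letters in its name, treating each condition in isolation suffices, and combined systems need no extra work. The non-emptiness requirement $\emptyset\notin\Nl(\Phi)$ is immediate, as $\Nl(\Phi)$ by definition only collects co-sets $\COSigma$ that are explicitly required to be non-empty. Throughout I would lean on two facts about any cut $\Sigma$ around $\Phi$: first, $\bot\in\Sigma$ (noted just after Lemma~\ref{lemma:cosphere}); second, writing $\Phi_{\bot\less}=\{A\mid\bot\less A\in\Phi\}$, one has $\Phi_{\bot\less}\subseteq\Sigma$. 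The second is immediate: instantiating the cut-around condition with the singleton $\{\bot\}\subseteq\Sigma$ shows that $A\notin\Sigma$ implies $\bot\less A\notin\Phi$, whose contrapositive is exactly $\Phi_{\bot\less}\subseteq\Sigma$. Recall also that $\Phi_{\bot\less}$ is itself a cut around $\Phi$, as established inside the proof of the Truth Lemma.

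For the four centering-style conditions I would argue as follows. For normality axiom $\axN$ gives $\bot\less\top\notin\Phi$, i.e.\ $\top\notin\Phi_{\bot\less}$, so by Lemma~\ref{lemma:cosphere} some $\Psi\in\CO{\Phi_{\bot\less}}$ exists; hence $\CO{\Phi_{\bot\less}}\in\Nl(\Phi)$ and $\Nl(\Phi)\neq\emptyset$. For total reflexivity axiom $\axT$ yields $\Phi\cap\Phi_{\bot\less}=\emptyset$ (if $A,\bot\less A\in\Phi$ then $\neg A\in\Phi$), so $\Phi\in\CO{\Phi_{\bot\less}}\in\Nl(\Phi)$. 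For weak centering, given any $\COSigma\in\Nl(\Phi)$ I would show $\Phi\in\COSigma$: if some $A\in\Phi\cap\Sigma$, then $\axW$ gives $A\less\top\in\Phi$, whereas $A\in\Sigma$ together with $\top\notin\Sigma$ (the latter since $\COSigma\neq\emptyset$) gives $A\less\top\notin\Phi$ by the cut-around property, a contradiction; non-emptiness is inherited since $\NWless$ extends $\NTless$. For centering the second conjunct comes from weak centering, and for $\{\Phi\}\in\Nl(\Phi)$ I would take $\Sigma=\{A\mid A\notin\Phi\}$, whose co-set is exactly $\{\Phi\}$; that $\Sigma$ is a cut around $\Phi$ uses $\axW$, $\TR$ and $\axC$ to exclude $(B_1\lor\dots\lor B_n)\less F\in\Phi$ when all $B_i\notin\Phi$ and $F\in\Phi$.

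For uniformity and absoluteness the pivot is the \emph{universe lemma} $\bigcup\Nl(\Phi)=\CO{\Phi_{\bot\less}}$: the inclusion $\subseteq$ follows from $\Phi_{\bot\less}\subseteq\Sigma$ for every cut $\Sigma$, and $\supseteq$ from $\Phi_{\bot\less}$ being a cut around $\Phi$. Then, if $\Psi\in\alpha\in\Nl(\Phi)$, we get $\Psi\cap\Phi_{\bot\less}=\emptyset$, i.e.\ $\bot\less D\in\Phi$ implies $D\notin\Psi$. For uniformity I would use $\axUtwo$ and $\axUone$ to prove $\Phi_{\bot\less}=\Psi_{\bot\less}$: if $\bot\less A\in\Phi$ then $\bot\less\neg(\bot\less A)\in\Phi$, forcing $\neg(\bot\less A)\notin\Psi$ and so $\bot\less A\in\Psi$; conversely if $\bot\less A\notin\Phi$ then $\bot\less(\bot\less A)\in\Phi$, forcing $\bot\less A\notin\Psi$. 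Applying the universe lemma at $\Phi$ and at $\Psi$ then gives $\bigcup\Nl(\Phi)=\CO{\Phi_{\bot\less}}=\CO{\Psi_{\bot\less}}=\bigcup\Nl(\Psi)$. For absoluteness the same pattern with $\axAone$ and $\axAtwo$ shows that $\Phi$ and $\Psi$ agree on \emph{all} $\less$-formulas; since both the cut-around relation and the co-sets $\COSigma$ depend only on which $\less$-formulas lie in the centre, this yields $\Nl(\Phi)=\Nl(\Psi)$ directly.

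The delicate points are semantic rather than bookkeeping. The crux is the step $\Phi_{\bot\less}\subseteq\Sigma$ and the universe lemma it supports: getting $\bigcup\Nl(\Phi)=\CO{\Phi_{\bot\less}}$ exactly right, including the degenerate case $\CO{\Phi_{\bot\less}}=\emptyset$ where both sides collapse to $\emptyset$, is what makes the uniformity and absoluteness arguments close, and is where I would be most careful. Everything else follows a uniform recipe: read the relevant $\bot\less$- or $\less$-membership off the axiom, then translate it through Lemma~\ref{lemma:cosphere} and the definition of cut around.
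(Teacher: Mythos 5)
Your proposal is correct and follows essentially the same route as the paper's proof: each frame condition of the canonical model is verified through the cut-around definition, Lemma~\ref{lemma:cosphere}, and the axiom(s) matching that condition, with the same syntactic manipulations (your centering set $\{A \mid A\notin\Phi\}$ even coincides, in $\NCless$, with the paper's $\{A \mid A\less\top\notin\Phi\}$, and your uniformity/absoluteness arguments mirror the paper's use of $\axUone$, $\axUtwo$, $\axAone$, $\axAtwo$ against the set $\Phi_{\bot\less}$). Your two repackagings---deriving normality directly from Lemma~\ref{lemma:cosphere} rather than via the Truth Lemma, and isolating the ``universe lemma'' $\bigcup\Nl(\Phi)=\CO{\Phi_{\bot\less}}$ that the paper establishes inline in the uniformity case---are sound but do not alter the substance of the argument.
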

\begin{proof}
	Non-emptiness is immediate. 
	We consider the other conditions. 
	
	($\NNlessstar$) 
	For every 
	$\Phi\in\W_{\NNless}$, $\neg(\bot\less\top)\in\Phi$, then by Lemma~\ref{lemma:truth}, 
	$\Phi\Vd \neg(\bot\less\top)$,
	thus there is $\alpha\in\N_{\NNless}(\Phi)$ such that $\alpha\EVd \top$ and $\alpha\not\EVd \bot$.
	
	($\NTlessstar$) 
	For any $\Phi\in\W_{\NTless}$, let $\Sigma=\{A \mid \bot\less A\in\Phi\}$. 
	$\Sigma$ is a cut around $\Phi$, since for all $B_1, ..., B_n \in \Sigma$ and $C \notin\Sigma$,
	$\bot\less B_1, ..., \bot\less B_n \in\Phi$ and $\bot\less C\notin\Phi$, 
	then by $\axOR$, $\bot\less B_1 \lor ... \lor B_n\in\Phi$,
	and by $\axTR$, $B_1 \lor ... \lor B_n\less C\notin\Phi$.
	Moreover, for any $A\in\lan$, 
	if $A \in\Sigma$, then $\bot\less A\in\Phi$, thus by $\axT$, $\neg A\in\Phi$, 
	whence $A\notin\Phi$.
	Thus $\Sigma\cap\Phi= \emptyset$, which implies $\Phi\in\COSigma$,
	and since $\COSigma\not= \emptyset$, $\COSigma\in\N_{\NTless}(\Phi)$.
	
	($\NWlessstar$) Since $\NWless\vd\axN$, by item ($\NNless$),  $\N_{\NWless}(\Phi)\not=\emptyset$ for all $\Phi\in\W_{\NWless}$.
	Moreover let $\COSigma\in\N_{\NWless}(\Phi)$ for a $\Sigma$ cut around $\Phi$.
	Then there is $\Psi\in\COSigma$.
	Since $\top\in\Psi$, by Lemma~\ref{lemma:cosphere}, $\top\notin\Sigma$,
	then for all $A \in\Sigma$, $A \less \top \notin\Phi$, thus by axiom $\axW$, $A\notin\Phi$.
	This means $\Sigma\cap\Phi=\emptyset$, therefore $\Phi\in\COSigma$.

	($\NClessstar$) Since axiom $\axW$ belongs to $\NCless$, by item ($\NWless$), 
	for all $\Phi\in\W_{\NCless}$ and all $\alpha\in\N_{\NCless}(\Phi)$, $\Phi\in\alpha$.
	Moreover, let $\Sigma = \{A \mid A\less\top \notin\Phi\}$.
	Then $\Sigma$ is a cut around $\Phi$: 
	if $B_1, ..., B_n\in\Sigma$ and $C\notin\Sigma$, then $B_1 \less \top, ..., B_n \less\top\notin\Phi$ and $C\less\top\in\Phi$.
	By axiom $\axW$, $B_1\not\in\Phi$, ..., $B_n\notin\Phi$,
	thus $B_1\lor ... \lor B_n\notin\Phi$,
	then by axiom $\axC$, $(B_1\lor ... \lor B_n)\less\top\notin\Phi$,
	therefore by $\axTR$, $(B_1\lor ... \lor B_n)\less C\notin\Phi$.
	Moreover, since $\top\less\top\in\Phi$, $\top\notin\Sigma$,
	by Lemma~\ref{lemma:cosphere}
	there is $\Psi\in\COSigma$,
	thus $\COSigma\in\N_{\NCless}(\Phi)$.
	Suppose $\Psi\not=\Phi$.
	Then there is $A\in\lan$ such that $A\in\Psi$ and $A\notin\Phi$.
	Since $\Psi\in\COSigma$, $A \notin\Sigma$, 
	then $A\less\top\in\Phi$,
	thus by $\axC$, $A\in\Phi$, it follows $\Psi=\Phi$, therefore $\COSigma=\{\Phi\}$.
	
	($\NUlessstar$)
	Suppose $\Psi\in\bigcup\N_{\NUlessstar}(\Phi)$.
	Then $\Psi\in\COSigma$ for some $\Sigma$ cut around $\Phi$.
	We show that for all $A\in\lan$, $\bot\less A\in\Phi$ iff $\bot\less A\in\Psi$.
	If $\bot\less A\in\Phi$, then by axiom $\axUtwo$, $\bot\less\neg(\bot\less A)\in\Phi$,
	then by Def.~\ref{def:cut around},  $\bot\notin\Sigma$ or $\neg(\bot\less A)\in\Sigma$.
	Since $\bot\in\Sigma$, we have $\neg(\bot\less A)\in\Sigma$, thus $\neg(\bot\less A)\notin\Psi$, then $\bot\less A\in\Psi$.
	If $\bot\less A\in\Psi$, then $\bot\less A\notin\Sigma$, thus $\bot\less(\bot\less A)\notin\Phi$,
	then by axiom $\axUone$, $\neg(\bot\less A)\notin\Phi$, therefore $\bot\less A\in\Phi$.
	Let $\Pi = \{A \mid \bot\less A\in\Phi\} = \{A \mid \bot\less A\in\Psi\}$.
	Then $\Pi$ is a cut around $\Phi$ and $\Psi$:
	If $B_1, ..., B_n\in\Pi$ and $C\notin\Pi$, then
	$\bot\less B_1, ..., \bot\less B_n\in\Phi$ and $\bot\less C\notin\Phi$, 
	thus by $\axOR$, $\bot\less B_1 \lor ... \lor B_n\in\Phi$,
	then by $\axTR$, $B_1 \lor ... \lor B_n\less C\notin\Phi$.
	Moreover for all $\Omega$ cut around $\Phi$ or $\Psi$, $\Pi\subseteq\Omega$,
	therefore $\CO\Omega\subseteq\CO\Pi$.
	Then in particular $\CO\Sigma\subseteq\CO\Pi$, which implies $\Psi\in\CO\Pi$, 
	thus $\CO\Pi\not=\emptyset$.
	It follows $\CO\Pi \in\N_{\NUlessstar}(\Phi)$ and $\CO\Pi \in \N_{\NUlessstar}(\Psi)$,
	therefore
	$\CO\Pi = \bigcup\N_{\NUlessstar}(\Phi) = \bigcup\N_{\NUlessstar}(\Psi)$.
	
	($\NAlessstar$)
	Suppose $\Psi\in\bigcup\N_{\NAlessstar}(\Phi)$.
	Then $\Psi\in\COSigma$ for some $\Sigma$ cut around $\Phi$.
	We show that for all $A,B\in\lan$, $A\less B\in\Phi$ iff $A\less B\in\Psi$.
	If $A\less B\in\Phi$, then by axiom $\axAone$, $\bot\less\neg(A\less B)\in\Phi$,
	then by Def.~\ref{def:cut around},  $\bot\notin\Sigma$ or $\neg(A\less B)\in\Sigma$.
	Since $\bot\in\Sigma$, we have $\neg(A\less B)\in\Sigma$, thus $\neg(A\less B)\notin\Psi$, then $A\less B\in\Psi$.
	If $A\less B\in\Psi$, then $A\less B\notin\Sigma$, thus $\bot\less(A \less B)\notin\Phi$,
	then by axiom $\axAtwo$, $\neg(A\less B)\notin\Phi$, therefore $A\less B\in\Phi$.
	It follows that for every $\Pi$, $\Pi$ is a cut around $\Phi$ iff $\Pi$ is a cut around $\Psi$,
	thus $\CO{\Pi}\in\N_{\NAlessstar}(\Phi)$ iff $\CO{\Pi}\in\N_{\NAlessstar}(\Psi)$,
	therefore $\N_{\NAlessstar}(\Phi) = \N_{\NAlessstar}(\Psi)$.
\end{proof}

As a consequence of the previous lemmas 
we obtain the following result.

\begin{theorem}[Completeness]
	\label{thm:completeness_models_axioms}
	For every formula $A$, if $A$ is valid in all $\Nlessstar$-models, then $A$ is derivable in $\Nlessstar$.
\end{theorem}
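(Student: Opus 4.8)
The plan is to prove the contrapositive by the standard canonical model argument, assembling the lemmas already established in this section. So I assume that $A$ is \emph{not} derivable in $\Nlessstar$, i.e.~$\not\vd_\Nlessstar A$, and I exhibit an $\Nlessstar$-model in which $A$ fails at some world; this shows $A$ is not valid on the class of all $\Nlessstar$-models.

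First I would observe that $\not\vd_\Nlessstar A$ implies that the singleton $\{\neg A\}$ is $\Nlessstar$-consistent: otherwise $\{\neg A\} \vd_\Nlessstar \bot$, i.e.~$\vd_\Nlessstar \neg A \to \bot$, whence $\vd_\Nlessstar A$ by classical propositional reasoning, contradicting the assumption. By Lemma~\ref{lemma:lindenbaum}(a) there is then an $\Nlessstar$-maximal consistent set $\Phi$ with $\neg A \in \{\neg A\} \subseteq \Phi$, and by Lemma~\ref{lemma:lindenbaum}(b)(ii) it follows that $A \notin \Phi$. Now $\Phi$ is by definition a world of the canonical model $\Ml$ for $\Nlessstar$ (Def.~\ref{def:canonical model}). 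Applying the Truth Lemma (Lemma~\ref{lemma:truth}) at $\Phi$ gives $\Phi \Vd A$ if and only if $A \in \Phi$; since $A \notin \Phi$, we conclude $\Phi \not\Vd A$, so $\Ml \not\models A$.

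It remains only to check that $\Ml$ is a legitimate witness, i.e.~that it belongs to the class of $\Nlessstar$-models. This is exactly the content of the Model Lemma (Lemma~\ref{lemma:model}), which verifies that the canonical neighbourhood function $\Nl$ satisfies the frame conditions corresponding to each letter appearing in the name of $\Nlessstar$. Combining the two facts, $\Ml$ is an $\Nlessstar$-model on which $A$ is not valid, which establishes the contrapositive and hence the theorem.

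The point worth emphasising is that the genuine difficulties of the completeness proof have already been absorbed into the preceding results: the delicate matching between the semantic clause for $\less$ and the syntactic notion of \emph{cut around} (used in the $\less$-case of the Truth Lemma, via Lemma~\ref{lemma:cosphere}), and the verification that $\Nl$ inherits normality, (weak or total) centering, uniformity and absoluteness (the Model Lemma). Given these, the completeness theorem is an immediate corollary requiring no further construction, so I expect no real obstacle here beyond the routine bookkeeping of assembling the lemmas; this is why I would state the proof in just a few lines.
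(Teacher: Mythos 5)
Your proposal is correct and follows exactly the paper's own argument: contraposition, consistency of $\{\neg A\}$, Lindenbaum extension to a maximal consistent set $\Phi$, the Truth Lemma to get $\Phi \not\Vd A$ in the canonical model, and the Model Lemma to certify that the canonical model is an $\Nlessstar$-model. The extra details you spell out (why $\{\neg A\}$ is consistent, and the passage from $\neg A \in \Phi$ to $A \notin \Phi$ via maximality) are left implicit in the paper but are the same routine steps.
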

\begin{proof}
	Suppose $\not\vd_\Nlessstar A$. Then $\{\neg A\}$ is $\Nlessstar$-consistent, 
	thus by Lemma~\ref{lemma:lindenbaum} there is a 
	$\Nlessstar$-maxcons
	set $\Phi$ such that
	$\neg A \in \Phi$.
	By definition, $\Phi \in\W_{\Nlessstar}$, and by Lemma~\ref{lemma:truth}, 
	$\M_{\Nlessstar}, \Phi\not\Vd A$,
	moreover by Lemma~\ref{lemma:model}, $\M_{\Nlessstar}$ is a $\Nlessstar$-model.
\end{proof}


\vspace{0.2cm}

Let us now turn to the relationship between 
$\Nlessstar$ 
and Lewis' logics 
of comparative plausibility over 
sphere models. 
Lewis \cite{lewis} provides two equivalent axiomatisations of the minimal 
logic  $\Vless$,
one of the two being
$\CPL \cup \{\CPR, \axCPA, \axTR, \axCO\}$, where $\axCO$ is the \emph{connection axiom} $(A \less B) \lor (B \less A)$,
and $\axCPA$ is  $(A \less A \lor B) \lor (B \less A \lor B)$.
We show that a further equivalent axiomatisation
of $\Vless$ 
can be given by extending
our minimal logic $\Nless$ with the axiom $\axCO$:

\begin{proposition}
	For all $A\in\lan$, 
	$\vd_{\Vless} A$
	if and only if 
	$\vd_{\Nless \cup \{\axCO\}} A$.
\end{proposition}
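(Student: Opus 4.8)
The plan is to observe that the two systems share the common base $\CPL\cup\{\CPR,\axTR\}$ and both contain $\axCO$, so they differ only in that $\Vless$ carries $\axCPA$ while $\Nless\cup\{\axCO\}$ carries $\axOR$. It therefore suffices to prove two things: that $\axOR$ is derivable in $\Vless$, and that $\axCPA$ is derivable in $\Nless\cup\{\axCO\}$. Given these, any derivation in either system can be simulated in the other by replacing each application of the missing axiom with its derivation, yielding the claimed coincidence of derivable formulas. As a preliminary step I would record the auxiliary fact that $\less$ is a congruence for provable equivalence, that is, if $\vd A\to A'$ and $\vd A'\to A$ (and similarly for $B,B'$), then $A\less B$ and $A'\less B'$ are interderivable; this follows from two applications of $\CPR$ together with $\axTR$, and so holds already in the common base.

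For the inclusion ``$\vd_{\Vless}A$ implies $\vd_{\Nless\cup\{\axCO\}}A$'' I would derive $\axCPA$, namely $(A\less A\lor B)\lor(B\less A\lor B)$, inside $\Nless\cup\{\axCO\}$. Instantiating $\axCO$ on $A$ and $B$ gives $(A\less B)\lor(B\less A)$, and I argue by cases. If $A\less B$, then since $\CPR$ applied to the tautology $A\to A$ yields $A\less A$, axiom $\axOR$ in the form $(A\less A)\land(A\less B)\to(A\less A\lor B)$ delivers the left disjunct of $\axCPA$. Symmetrically, if $B\less A$, then $B\less B$ together with $\axOR$ yields $B\less(B\lor A)$, which the congruence fact rewrites as $B\less(A\lor B)$, the right disjunct. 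A propositional combination of the two cases, using the disjunction supplied by $\axCO$, gives $\axCPA$.

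For the converse inclusion I would derive $\axOR$, namely $(A\less B)\land(A\less C)\to(A\less B\lor C)$, inside $\Vless$. Here I instantiate $\axCPA$ on $B$ and $C$, obtaining $(B\less B\lor C)\lor(C\less B\lor C)$, and again split into cases. If $B\less(B\lor C)$, then from the hypothesis $A\less B$ and $\axTR$ we get $A\less(B\lor C)$; if instead $C\less(B\lor C)$, then from $A\less C$ and $\axTR$ we likewise obtain $A\less(B\lor C)$. Since the two cases are exhaustive, $\axOR$ follows by propositional reasoning.

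Neither direction presents a genuine difficulty, and in this sense there is no substantial obstacle; the only point requiring care is the bookkeeping around commutativity of $\lor$ and the silent replacement of provably equivalent antecedents and consequents, which is exactly what the congruence fact licenses. The remainder is a routine case analysis driven by $\axCO$ (respectively $\axCPA$) and transitivity.
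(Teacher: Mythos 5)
Your proof is correct and follows essentially the same route as the paper: both reduce the claim to deriving $\axCPA$ in $\Nless\cup\{\axCO\}$ (via $\axCO$, $\CPR$ and $\axOR$) and deriving $\axOR$ in $\Vless$ (via $\axCPA$ and $\axTR$). The only cosmetic difference is your auxiliary congruence lemma for commuting $\lor$, which the paper avoids by instantiating $\axOR$ as $(B\less A)\land(B\less B)\to(B\less A\lor B)$ so that the disjunction comes out in the right order directly.
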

\begin{proof}
	Since $\Nless \cup \{\axCO\}$ and $\Vless = \CPL \,\cup\, \{\CPR, \axCPA, \axTR, \axCO\}$ differ only with respect to 
	$\axCPA$ and $\axOR$, it suffices to show that
	(i) $\vd_{\Nless \cup \{\axCO\}}\axCPA$ and (ii) $\vd_\Vless \axOR$.
	(i) From $(A \less B) \lor (B \less A)$, by $\CPR$ we have
	$((A \less A) \land (A \less B)) \lor ((B \less A) \land (B \less B))$,
	then by $\axOR$, $(A \less A \lor B) \lor (B \less A \lor B)$.
	(ii) From $(B \less B \lor C) \lor (C \less B \lor C)$, by $\axTR$ we have
	$(A \less B) \land (A \less C) \to (A \less  B \lor C) \lor (A \less B \lor C)$,
	thus $(A \less B) \land (A \less C) \to (A \less  B \lor C)$.
\end{proof}

Note also that the extensions of $\Nless$ are defined by the same axioms characterising the extensions of $\Vless$.
It follows that each Lewis' logic can be 
obtained from
the corresponding CPN logic
by adding the connection axiom $\axCO$.
The relations among 
these systems are displayed in Fig.~\ref{fig:dyagram}.

\section{Multi-premisses sequent calculi for \CPN\ logics}
\label{sec:sequent}

In this section we present Gentzen-style sequent calculi for the 
\CPN\ logics $ \Nless, \NNless, \NTless,\NWless, \NCless, \NAless $, and $ \NNAless $. 
From now on, let $\Nlessstar$ denote any of these systems.
For each logic we introduce a calculus $\SNlessstar$ 
defined on the basis of the sequent systems for Lewis' logics 
by Lellmann and Pattinson \cite{lellmann:phd,lellpatt}. In these calculi, the rules have up to $ m+n $ premisses, where $ m $ (resp.~$ n $) is the number of $\less$-formulas occurring in the antecedent (resp.~consequent) of the conclusion. 
Calculi for 
$\Nlessstar$ can be provided by restricting the calculi in \cite{lellmann:phd,lellpatt} to rules with at most one $ \less $-formula in the consequent ($ n = 1 $), thus obtaining simpler calculi, where each rule introduces at most $ m+1  $ premisses.  

As usual, we call \emph{sequent}
any pair $\G \seq \D$, 
where $\G$ and $\D$ are finite, possibly empty multisets of formulas of $\lan$. 
$ \G \seq \D $ is interpreted as the formula $\AND\G \to \OR\D$. 

The rules of the calculi $\SNlessstar$ can be found in  Fig.~\ref{fig:sequent rules}.
Each modal rule simultaneously analyses a number (at least one)
of $\less$-formulas appearing in a sequent. 
The \emph{principal formulas} of each  rule $ \mathsf{R}_n $ are the $ n $ or $ n+1 $ $ \less $-formulas in the conclusion which get analysed in the premiss.
In some rules the principal $\less$-formulas are copied into the premisses 
in order to ensure admissibility of contraction. 
We denote derivability in $\SNlessstar$ as $ \SNlessstar \vd \G\seq\D$.

\begin{figure}[t]
	\begin{adjustbox}{max width = \textwidth}
		\begin{tabular}{|@{\hspace{-0.4cm}} c @{\hspace{-0.4cm}}|}
			
			\hline 
			~  \\[0.01cm]

			$ \vlinf{\initm}{}{\G, p\seq p, \D}{} $
			\qquad
			$ \vlinf{\lbotm}{}{\G, \bot\seq  \D}{} $
			\qquad 
			$ \vliinf{\ltom}{}{\G, A \to B \seq \D}{\G \seq A, \D}{\G, B \seq \D} $
			\qquad 
			$ \vlinf{\rtom}{}{\G \seq A \to B, \D}{\G, A \seq B, \D} $\\[0.4cm]

			$ \vliinf{\seqruleCPNm_n}{}{\G, C_1 \less D_1, \dots, C_n \less D_n \seq A \less B, \D}
			{ \{C_k \seq A, D_1, \dots , D_{k-1}\}_{1\leq k\leq n} }{B \seq A, D_1, \dots , D_n} $\\[0.4cm]

			$ \vliinf{\seqruleNm_n}{}{\G, C_1 \less D_1, \dots, C_n \less D_n \seq \D}{\{C_k \seq D_1, \dots, D_{k-1}\}_{1 \leq k\leq n}}{ \seq D_1, \dots , D_n} $\\[0.4cm]

			$ \vliinf{\seqruleTm_n}{}{\G, C_1 \less D_1, \dots, C_n \less D_n \seq \D}{\{C_k \seq D_1, \dots, D_{k-1}\}_{1 \leq k\leq n}}{\G, \finv{C_1 \less D_1, \dots, C_n \less D_n} \seq D_1, \dots, D_n, \D} $\\[0.4cm]

			$ \vliinf{\seqruleWm_n}{}{\G, C_1 \less D_1, \dots , C_n \less D_n \seq A \less B, \D}{\{C_k \seq A, D_1, \dots, D_{k-1}\}_{1 \leq k\leq n}}{\G , \finv{C_1 \less D_1, \dots , C_n \less D_n} \seq A, \finv{A \less B}, D_1, \dots, D_n, \D} $\\[0.4cm]

			$ \vlinf{\seqruleWtwom}{}{\G \seq A \less B, \D}{\G \seq \finv{A \less B}, A, \D} $
			\qquad 
			$ \vliinf{\seqruleCm}{}{\G, A \less B \seq \D}{\G, \finv{A \less B},A \seq \D}{\G, \finv{A \less B} \seq B, \D} $\\[0.4cm]
			
			$ \vliinf{\seqruleAm_n}{}{\G, C_1 \less D_1,\dots, C_n \less D_n \seq A \less B, \D }
			{ \{ \Gc, \finv{\Sc}, C_k \seq \finv{A \less B}, A, D_1, \dots, D_{k-1}, \Dc\}_{1\leq k\leq n} }
			{\Gc, \finv{\Sc}, B \seq \finv{A \less B}, A, D_1, \dots, D_n, \Dc} $
			\\[0.5cm]
			$ \vliinf{\seqruleNAm_n}{}{\G, C_1 \less D_1, \dots, C_n \less D_n \seq \D }
			{ \{ \Gc, \finv{\Sc}, C_k \seq  D_1, \dots, D_{k-1}, \Dc\}_{1 \leq k\leq n} }
			{\Gc, \finv{\Sc} \seq  D_1, \dots, D_n, \Dc} $\\[0.5cm]
			\multicolumn{1}{|l|}{\small where \, $ \finv{\Sc} = \finv{C_1 \less D_1, ..., C_n \less D_n}$, \, $\Gc = \{C' \less D' \mid C' \less D' \in \Gamma\} $, \, $ \Dc = \{C' \less D' \mid C' \less D' \in \Delta\} $ }\\
			\hline 
			\multicolumn{1}{c}{}\\[-0.2cm]
			\multicolumn{1}{c}{
				\begin{tabular}{l l}
					\multicolumn{2}{c}{
						$\SNlessc{} = \{\initm, \lbotm, \ltom, \rtom\} \cup \{\seqruleCPNm_n \mid n \geq 0 \}$ }\\
					$\SNlessc{N} = \SNlessc{} \cup \{\seqruleNm_n \mid n \geq 1 \}$ &
					$\SNlessc{T} = \SNlessc{} \cup \{\seqruleTm_n \mid n \geq 1 \}$ \\
					$\SNlessc{W} = \SNlessc{} \cup \{\seqruleWm_n \mid n \geq 0 \} \cup \{ \seqruleTm_n\mid n \geq 1 \}$ & 
					$\SNlessc{C} = \SNlessc{} \cup \{\seqruleWtwom, \seqruleCm\} $  \\
					$\SNlessc{A} = \{\initm, \lbotm, \ltom, \rtom\} \cup \{\seqruleAm_n \mid n \geq 0 \}$
					&
					$\SNlessc{NA} = \SNlessc{A} \cup \{\seqruleNAm_n \mid n \geq 1 \}$
					\\
					
				\end{tabular}
			}
		\end{tabular}
	\end{adjustbox}
	\caption{\label{fig:sequent rules} Gentzen-style calculi for 
		\CPN\ logics. }
\end{figure}

\begin{theorem}[Soundness]
	\label{th:soundness SNless}
	If $\SNlessstar \vd \G \seq \D$  then $\Nlessstar\vd\AND\G \to \OR\D$.
\end{theorem}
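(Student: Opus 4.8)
The plan is to route through the semantics rather than produce Hilbert derivations directly. Since Theorem~\ref{thm:completeness_models_axioms} already gives that validity in all $\Nlessstar$-models implies derivability in $\Nlessstar$, it suffices to prove the \emph{semantic} soundness of the calculus: if $\SNlessstar \vd \G\seq\D$, then $\AND\G \to \OR\D$ is valid in every $\Nlessstar$-model. Reading a sequent $\G\seq\D$ as the formula $\AND\G \to \OR\D$ and ``valid in $\M$'' as ``forced at every world of $\M$'', I would argue by induction on the height of the derivation, so the only thing to check is that each rule \emph{preserves validity} over the relevant class of $\Nlessstar$-models. The axioms $\initm$ and $\lbotm$ translate into propositional tautologies, and $\ltom, \rtom$ preserve validity by a routine truth-functional argument, so the whole content sits in the modal rules.

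For the modal rules I would argue by contraposition, using a least-index device; I take $\seqruleCPNm_n$ as the representative case. Assuming all premisses hold at every world of an $\Nlessstar$-model $\M$, fix $w$ with $w\Vd\AND\G$ and $w\Vd C_k\less D_k$ for all $k$, and suppose toward a contradiction $w\not\Vd(A\less B)\vee\OR\D$. From $w\not\Vd A\less B$ I obtain some $\alpha\in\N(w)$ with $\alpha\EVd B$ and $\alpha\not\EVd A$; pick $v\in\alpha$ with $v\Vd B$. The premiss $B\seq A,D_1,\dots,D_n$, valid at $v$, forces $v\Vd A\vee\OR_j D_j$, and since $v\not\Vd A$ this gives $\alpha\EVd D_j$ for some $j$. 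Now let $j$ be \emph{least} with $\alpha\EVd D_j$. From $w\Vd C_j\less D_j$ and $\alpha\EVd D_j$ I get $\alpha\EVd C_j$, say $u\in\alpha$ with $u\Vd C_j$; the premiss $C_j\seq A,D_1,\dots,D_{j-1}$ valid at $u$ then yields $u\Vd A\vee\OR_{i<j}D_i$, contradicting $\alpha\not\EVd A$ together with the minimality of $j$. Hence $w\Vd(A\less B)\vee\OR\D$, as required.

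The remaining rules follow the same scheme, invoking the defining frame condition to produce (or constrain) the neighbourhood the argument feeds on: $\seqruleNm_n$ uses normality, $\seqruleTm_n$ total reflexivity (a $\beta\in\N(w)$ with $w\in\beta$, where the least index is taken over $\beta\EVd D_j$), $\seqruleWm_n, \seqruleWtwom$ weak centering, and $\seqruleCm$ centering ($\{w\}\in\N(w)$). For the absoluteness rules $\seqruleAm_n, \seqruleNAm_n$ I would work, without loss of generality, in strongly absolute ($\mathsf{A+}$) models, which is legitimate since $\mathsf{A}$ and $\mathsf{A+}$ define the same valid formulas; there every world forces exactly the same $\less$-formulas as $w$, so the $\less$-subcontexts $\Gc$ and $\Dc$ copied into the premisses keep holding (resp.\ failing) at the auxiliary worlds $v,u$ supplied by the neighbourhood, and the least-index argument goes through verbatim. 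The main obstacle is precisely this bookkeeping for the rules carrying side contexts: one must check that the frame condition delivers a neighbourhood compatible with the copied principal formulas and the transported $\less$-context, and that the least-index choice is made relative to the correct neighbourhood ($\alpha$ vs.\ $\beta$). Once these are set up, each case is a short contraposition, and combining semantic soundness with Theorem~\ref{thm:completeness_models_axioms} yields $\Nlessstar\vd\AND\G\to\OR\D$.
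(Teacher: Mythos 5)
Your proof is correct, but it takes a genuinely different route from the one in the paper. The paper argues purely syntactically: for each rule of $\SNlessstar$ it derives, inside the Hilbert system $\Nlessstar$, the interpretation of the conclusion from the interpretations of the premisses, by iterated use of $\CPR$, $\axTR$, $\axOR$ (plus $\axN$, $\axT$, $\axW$, $\axC$ and the absoluteness axioms for the extensions). You instead establish \emph{semantic} soundness of each rule over $\Nlessstar$-models --- your contraposition with the least-index device is correct, as is the handling of the side contexts in $\seqruleTm_n$ and $\seqruleWm_n$ and the passage to $\mathsf{A+}$-models for $\seqruleAm_n$, $\seqruleNAm_n$, where all worlds force the same $\less$-formulas, so that $\Gc$, $\Sc$, $\Dc$ transfer to the auxiliary worlds --- and you then discharge the proof obligation through the completeness Theorem~\ref{thm:completeness_models_axioms}. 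This is legitimate and non-circular: Theorem~\ref{thm:completeness_models_axioms} is obtained by the canonical model construction of Sec.~\ref{sec:axioms}, independently of the sequent calculi of Sec.~\ref{sec:sequent}. The trade-off is this: the paper's syntactic proof is effective, in that it exhibits how a sequent derivation is translated into a Hilbert derivation, and it keeps the proof theory self-contained rather than hostage to the canonical-model machinery; your proof is shorter and more uniform across the rules, and it treats the absoluteness rules more transparently than the paper's rather terse syntactic sketch for $\seqruleAm_n$, but it is non-constructive (no Hilbert derivation is extracted) and it depends both on completeness and on the equivalence of $\mathsf{A}$ and $\mathsf{A+}$ for validity, which the paper asserts with a pointer to Lewis rather than proving in detail.
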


\begin{proof}
	We  show that 
	for every rule $R$ of $\SNlessstar$
	with premisses $\G_1 \seq \D_1$, \dots, $\G_n\seq\D_n$ 
	and conclusion $\G\seq\D$,
	the corresponding Hilbert-style rule with premisses 
	$\AND\G_1 \to \OR\D_1$, \dots, $\AND\G_n \to \OR\D_n$ 
	and conclusion $\AND\G \to \OR\D$ is derivable in $\Nlessstar$.
	The propositional cases are standard. 
	We use $ \manyd{1}{n} $ as a shorthand for $ D_1, \dots, D_n $.

	($\seqruleCPNm_n$)
	Suppose $\vd C_1 \to A$, $\vd C_2 \to A \lor D_1$, \dots, $\vd C_n \to A \vlor  \bigvee \manyd{1}{n-1} $ and $ \vd B \to A \lor \bigvee \manyd{1}{n} $.
	Then by $\CPR$, we have $\vd A \less C_1$, $\vd A \lor D_1 \less C_2$,  \dots , $  \vd A \lor \bigvee \manyd{1}{n-1} \less C_n $ and $\vd A \lor \bigvee \manyd{1}{n} \less B$.
	If $ n = 0 $, the conclusion immediately follows. 
	From $\vd C_1 \less D_1$ it follows by $ \axTR $ that $ A \less D_1 $. 
	Thus, $\vd (C_1 \less D_1) \to (A \less D_1)$. 
	Since $\vd A \less A$, 
	by $\axOR$, we have that $\vd (C_1 \less D_1) \to (A \less A \lor D_1)$.
	By $\axTR$, $\vd (C_1 \less D_1) \to (A \less C_2)$. 
	From $\vd C_2 \less D_2$ it follows by $ \axTR $ that $ A \less C_2 $. 	
	Thus, $\vd (C_1 \less D_1) \land (C_2 \less D_2) \to (A \less D_2)$.
	By $ \axOR $ applied to $ \vd A \less A$ and to $ A \less D_1 $, we have 
	$\vd (C_1 \less D_1) \land (C_2 \less D_2) \to (A \less A \vlor D_1 \vlor D_2)$, to which we apply $ \axTR $ twice and conclude 
	$\vd (C_1 \less D_1) \land (C_2 \less D_2) \land (C_3 \less D_3)\to (A \less D_3)$. 
	We iterate the steps above until we obtain $\vd \manycdand  \to (A \less D_n)$. 
	Then, by applications of $\axOR$ to $ A \less A $ and to $ A\less D_1 ,\dots, A \less D_{n-1} $, we obtain 
	$\vd \manycdand \to (A \less A \lor \bigvee \manyd{1}{n})$. 
	A final application of $ \axTR $ 
	yields 
	$\vd \manycdand \to (A \less B)$. 
	Therefore $\vd \AND\G \land  \manycdand \to (A \less B) \lor \OR\D$ for every $\G, \D$.

	($\seqruleNm_n$)
	Suppose $\vd C_1 \to \bot$, $\vd C_2 \to D_1$, \dots, $\vd C_n \to \OR\manyd{1}{n-1}$ and $\vd \top\to \OR \manyd{1}{n}$.
	Then by $\CPR$, $\vd \bot \less C_1$, $\vd D_1 \less C_2$, \dots, $\vd \OR \manyd{1}{n-1} \less C_n$ and $\vd \OR \manyd{1}{n}\less \top$.
	Reasoning as in the case of $ \seqruleCPNm_n $, we conclude that 
	$\vd \manycdand \to (\bot  \less \top)$. 
	By $\axN$, $\vd \manycdand \to \bot$,
	then $\vd \AND\G \land \manycdand \to \OR\D$ for all $\G, \D$.

	($\seqruleTm_n$)
	Suppose $\vd C_1 \to \bot$, $\vd C_2 \to D_1$, \dots, $\vd C_n \to \OR \manyd{1}{n-1}$ and $\vd \AND\G \land  \manycdand \to \OR \manyd{1}{n}  \lor \OR\D$.
	Then by $\CPR$, $\vd \bot \less C_1$,  $\vd D_1 \less C_2$, \dots $ \vd \OR \manyd{1}{n-1} \less C_n $.
	By applications of $\axTR$ and $ \axOR $, we have that $\vd \manycdand \to (\bot \less D_1) \land \dots \land (\bot \less D_n)$. 
	By $\axT$,   $\vd \manycdand \to \neg D_1 \land \dots \land \neg D_n$.
	Then we have $\vd \AND\G  \land  \manycdand  \to (\OR \manyd{1}{n}\OR\D) \land \neg D_1 \land \dots \land \neg D_n$,
	from which we conclude 
	that $\vd \AND\G \land \manycdand \to \OR\D$.

	($\seqruleWm_n$)
	Suppose $\vd C_1 \to A$, $\vd C_2 \to A \lor D_1$, \dots, $ \vd C_n \to A \lor \OR \manyd{1}{n-1} $ and $\vd \AND\G \land \manycdand  \to A  \lor \OR \manyd{1}{n} \lor A \less B \lor \OR\D$.
	Then by $\CPR$, $\vd A \less C_1$, $ \vd A \lor  D_1 \less C_2 $, \dots, $ \vd A, \OR \manyd{1}{n-1} $. Reasoning as in the case of $ \seqruleCPNm_n $, we obtain proofs of the following: 
	$ \vd (C_1 \less D_1) \to (A \less D_1) $, \dots, $ \vd \manycdand \to (A \less D_n)$. 
	Moreover by $\axW$, 
	$\vd \AND\G \land \manycdand  \to (A \less \top) \vlor \OR_{i\leq n} (D_i \less \top) \lor A \less B \lor \OR\D$.
	Thus, applying $ \axTR $ to $ A \less D_1 $, \dots , $A \less D_n  $, we obtain 
	$\vd \AND\G \land\manycdand \to (A \less \top)  \lor A \less B \lor\OR\D$.
	Since by $\CPR$, $\vd \top\less B$ for every $B$,
	by $\axTR$ we obtain
	$\vd \AND\G \land \manycdand \to (A \less B)  \lor A \less B \lor\OR\D$.
	
	(\seqruleWtwo)
	If $\vd \AND\G \to A \lor \OR\D$, then by $\axW$, $\vd \AND\G \to (A\less\top) \lor \OR\D$,
	thus since by $\CPR$, $\vd\top\less B$ for every $B$, we have
	$\vd \AND\G \to (A\less B) \lor \OR\D$.
	
	($\seqruleCm$)
	Suppose $\vd \AND\G \land A \to \OR\D$ and $\vd \AND\G \to B \lor \OR\D$.
	Then 
	by $\axW$, $\vd \AND\G \to (B\less\top) \lor \OR\D$,
	thus by $\axTR$, $\vd \AND\G \land (A \less B) \to (A\less\top) \lor \OR\D$.
	Then by $\axC$, $\vd \AND\G \land (A \less B) \to A \lor \OR\D$,
	therefore $\vd \AND\G \land (A \less B) \to \OR\D$.

	($\seqruleAm_n$) 
	Suppose $\vd \AND\Gc \land \manycdand \vlan C_1 \to A \vlor (A \less B) \lor \OR \Dc$, 
	$\vd \AND\Gc \land \manycdand \vlan C_2 \to A \vlor (A \less B) \lor D_1 \vlor \OR \Dc$, \dots, $\vd \AND\Gc \land \manycdand \vlan C_n \to A \vlor  \bigvee \manyd{1}{n-1} \vlor \OR \Dc $ and 
	$ \vd \AND\Gc \land \manycdand \vlan B \to A \lor \bigvee \manyd{1}{n}  \vlor \OR \Dc$.
	Using the same strategy as in $ \seqruleCPNm_n $, we prove that	$\vd \manycdand \to (A \less B)$ follows from the simpler set of assumptions where we remove $  \AND\Gc $, $  \manycdand $, $ \OR \Dc $  and $ A \less B $. 
	From this, we conclude $\vd  \AND\G \land \manycdand \manycdand \to (A \less B)  \lor \OR \D$ for any $ \G $, $ \D $.
	($\seqruleNAm_n$)  is similar.	
\end{proof}

We now show that $\SNlessstar$ enjoy 
cut admissibility,
where a rule is said to be \emph{(height-preserving) admissible} if, whenever the premisses are derivable, also the conclusion is derivable (with a derivation of at most the same height).
We start by considering the following auxiliary result. 
\begin{proposition}
	\label{prop:seq_adm_wk_ctr}
	The rules below
	are height-preserving
	admissible in $\SNlessstar$:
	\vspace{-0.1cm}
	\begin{center}
		$
		\vlinf{\lwkm}{}{ \G, A \seq \D}{\G \seq \D}
		\quad 
		\vlinf{\rwkm}{}{\G \seq A, \D}{\G \seq \D}
		\quad 
		\vlinf{\lctrm}{}{\G, A \seq \D}{\G, A, A \seq \D}
		\quad 
		\vlinf{\rctrm}{}{\G \seq A, \D}{\G \seq A, A, \D}
		$
	\end{center}
\end{proposition}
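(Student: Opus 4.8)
The plan is to prove all four rules height-preserving admissible by induction on the height of the derivation of the premiss, dealing with the two weakening rules first and then with $\lctrm$ and $\rctrm$ simultaneously; the argument is uniform across all the systems $\SNlessstar$ of Fig.~\ref{fig:sequent rules}.

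For weakening the induction is routine. An instance of $\initm$ or $\lbotm$ stays an instance of the same axiom when an arbitrary formula is added on either side, and for every other rule the context $\G \seq \D$ of the conclusion is either transmitted unchanged to each premiss or discarded altogether; in the first case I apply the \ih\ to the premisses carrying the new formula, in the second the premisses need no change. Reapplying the very same rule instance introduces no new inference, so $\lwkm$ and $\rwkm$ preserve the height.

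Contraction is the substantial part, and I would establish $\lctrm$ and $\rctrm$ together by induction on height, because the modal cases make it necessary to contract on the opposite side of the sequent at a strictly smaller height. When the contracted formula is not principal in the last rule, both copies sit in the context: I apply the \ih\ to the premisses that retain the context (and do nothing for a context the rule discards) and reapply the rule. When the contracted formula is a principal propositional formula I follow the standard pattern, using the (routine) height-preserving invertibility of $\ltom$ and $\rtom$: invert on the residual copy left in the premisses, contract the resulting duplications by the \ih, and reapply the rule.

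The main obstacle is the contraction of a principal $\less$-formula in the multi-premiss rules, whose premisses accumulate the side formulas $D_1, \dots, D_n$ and, for $\seqruleTm_n$, $\seqruleWm_n$, $\seqruleAm_n$ and $\seqruleNAm_n$, recopy the principal $\less$-formulas into a premiss. The idea is that two identical principal formulas $C_i \less D_i = C_j \less D_j$ with $i < j$ can be merged, turning the arity-$n$ instance into its arity-$(n-1)$ instance: the premiss indexed by $j$ becomes redundant and is dropped, while in each surviving premiss the occurrence of $D_j$ duplicates the $D_i$ already present at the smaller index and is removed by the \ih\ for $\rctrm$; in the rules that recopy the principal formulas, the extra copy of $C_i \less D_i$ produced in the relevant premiss is removed analogously by the \ih\ for $\lctrm$. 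All of these appeals are to derivations of height strictly below the current one, so the rebuilt derivation, ending in the arity-$(n-1)$ rule, has height at most that of the original. The case in which only one of the two copies is principal is easy, the non-principal copy lying in a context that is either discarded or absorbed by a single use of the \ih. The delicate point, on which the argument turns, is checking that the arity-reduced application really is a legitimate instance of the schema in Fig.~\ref{fig:sequent rules} — that the indices of the $D_k$'s and the recopied formulas line up correctly — and this is exactly what the copying of the principal formulas into the premisses is designed to secure.
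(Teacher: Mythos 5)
Your proposal is correct and follows essentially the same route as the paper's proof: induction on the height of the premiss, with the crucial case being two identical principal $\less$-formulas in a multi-premiss rule, which you resolve exactly as the paper does by dropping the redundant premiss, removing the duplicated $D$-formulas in the surviving premisses via the induction hypothesis for contraction on the opposite side at smaller height, and reapplying the rule at arity $n-1$. Your additional remarks on the propositional cases (via invertibility of $\ltom$, $\rtom$) and on the rules that recopy principal formulas fill in details the paper leaves implicit, but do not change the argument.
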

\begin{proof}
	By induction on the height $ h $ of the derivation of the premiss of the rules. The cases of $ \wklm $, $ \wkrm $ and \rctr\ are immediate. We show admissibility of \lctr, for $ h>0 $ and the last rule applied in the derivation being $ \seqruleCPNm_{n+1} $.  Let $ C_i = C_{i+1} $ and $ D_i = D_{i+1} $. 
	\begin{center}
		\begin{small}
			$
			\vlderivation{
				\vlin{\ctrlm}{}{
					G, C_1 \less D_1, \dots, C_i \less D_i, C_{i+2} \less D_{i+2},\dots, C_n \less D_n \seq A \less B, \D
				}{
					\vliin{\seqruleCPNm_{n}}{}{G, C_1 \less D_1, \dots, C_i \less D_i, C_{i+1} \less D_{i+1} ,\dots, C_n \less D_n \seq A \less B, \D}{
						\vlhy{\{C_k \seq A, D_1, \dots, D_{k-1}\}_{ k\leq n}}
					}{ 
						\vlhy{B \seq A, D_1, \dots, D_i, D_{i+1} , \dots, D_n }
					}
				}
			}
			$
		\end{small}
	\end{center}
	For each $ j \geq i +2 $ apply contraction to the following sequent of smaller height:
	\begin{center}
		\begin{small}
			$
			\vlderivation{
				\vlin{\ctrlm}{}{C_{j} \seq A, D_1, \dots, D_i,D_{i+2},  \dots, D_{j-1} }{\vlhy{C_{j} \seq A, D_1, \dots,  D_i,D_{i+1}, \dots, D_{j-1} }}
			}
			$
		\end{small}
	\end{center}
	A final application of $ \ctrtm $ on smaller height and $ \seqruleCPNm_{n-1} $ yields the desired result:
	
	\ 
	
	\begin{adjustbox}{max width = \textwidth}
		$
		\vlderivation{
			\vliin{\seqruleCPNm_{n-1}}{}{G, C_1 \less D_1, \dots, C_i \less D_i, C_{i+2} \less D_{i+2} ,\dots, C_n \less D_n \seq A \less B, \D}{
				\vlhy{
					\begin{matrix}
					\{C_k \seq A, D_1, \dots, D_{k-1}\}_{ k\leq i}\\
					\{C_{j} \seq A, D_1, \dots, D_i,D_{i+2} \dots, D_{j-1}\}_{i+2\leq j \leq n}
					\end{matrix}
				}
			}{ 
				\vlin{\ctrlm}{}{B \seq A, D_1, \dots, D_i , \dots, D_n}{
					\vlhy{B \seq A, D_1, \dots, D_i, D_{i+1} , \dots, D_n }
				}
			}
		}
		$
	\end{adjustbox}
	
\end{proof}

\begin{theorem} 
	\label{thm:cut}
	The 
	cut rule is admissible in $\SNlessstar$, where $ A $ is the \emph{cut formula}:
	\begin{center}
		\vspace{-0.1cm}
		$
		\vliinf{\cutm}{}{\G, \G' \seq \D, \D'}{\G \seq \red A, \D}{\G', \red  A \seq \D'} 
		$
		
	\end{center}
\end{theorem}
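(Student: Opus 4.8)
The plan is to prove cut admissibility by the standard Gentzen double induction: a \emph{primary} induction on the complexity of the cut formula $A$, and a \emph{secondary} induction on the sum of the heights of the derivations of the two premisses $\G \seq A, \D$ and $\G', A \seq \D'$. Throughout, I would lean on the height-preserving admissibility of weakening and contraction established in Proposition~\ref{prop:seq_adm_wk_ctr}; this is not a convenience but a necessity, since the rules $\seqruleTm_n$, $\seqruleWm_n$, $\seqruleCm$, $\seqruleAm_n$ and $\seqruleNAm_n$ copy their principal $\less$-formulas (and, for the $\mathsf A$-rules, the whole $\less$-context $\Gc, \Dc$) into some premisses, so duplicated formulas must be reabsorbed after permuting cuts.

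First I would dispose of the cases in which the cut formula $A$ is not principal in at least one premiss. Here the cut is permuted upward past the last rule of the corresponding derivation, yielding cuts on the same formula $A$ but with strictly smaller height sum, which are removed by the secondary induction hypothesis; where the last rule keeps copies of $A$ or of context formulas in its premisses, the surplus copies are contracted away using Proposition~\ref{prop:seq_adm_wk_ctr}. The propositional principal cases ($\initm$, $\lbotm$, and the $\ltom$/$\rtom$ reduction) are the usual ones.

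The heart of the argument is the principal case where $A = C \less D$ is introduced on the right of $\G \seq C \less D, \D$ and on the left of $\G', C \less D \seq \D'$ by two modal rule instances. The key observation is that both instances decompose $C \less D$ into premisses relating its immediate subformulas $C$ and $D$ to the remaining analysed formulas: if the left derivation ends in, say, $\seqruleCPNm_m$ with $C\less D$ as its consequent formula, its premisses have the shapes $C_k' \seq C, D_1', \dots, D_{k-1}'$ and $D \seq C, D_1', \dots, D_m'$; and if the right derivation analyses $C \less D$ as one of its antecedent formulas $C_j \less D_j$, the corresponding premiss is $C \seq A'', D_1, \dots, D_{j-1}$. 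I would use this last premiss as a \emph{pivot} and cut on $C$ (of strictly smaller complexity, so the primary induction hypothesis applies) against each left premiss $C_k' \seq C, \dots$ and against $D \seq C, \dots$, thereby rewriting the left premisses with the correct consequent $A''$ and obtaining a sequent $D \seq A'', \dots$; cutting on $D$ with the latter then eliminates the residual occurrences of $D_j$ in the remaining right premisses. After aligning contexts with admissible weakening and contraction, the merged premisses are exactly those licensed by a single instance of the appropriate target rule (here a $\seqruleCPNm_{m+n-1}$ introducing $A'' \less B''$), which reassembles the desired endsequent $\G, \G' \seq \D, \D'$.

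The main obstacle is this reassembly, which must be verified uniformly for every combination of a right-introducing rule (among $\seqruleCPNm_n$, $\seqruleWm_n$, $\seqruleWtwom$, $\seqruleAm_n$) with a left-introducing rule (among $\seqruleCPNm_n$, $\seqruleNm_n$, $\seqruleTm_n$, $\seqruleWm_n$, $\seqruleCm$, $\seqruleAm_n$, $\seqruleNAm_n$), for each calculus $\SNlessstar$. Because the rules are multi-premiss and their premisses carry \emph{triangular} lists $D_1, \dots, D_{k-1}$ of formulas, the genuinely delicate point is to interleave the $D$-lists of the two instances and the copied principal formulas so that the combined premiss set coincides with one admissible instance of the merged rule. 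This bookkeeping is heaviest for the centering rules $\seqruleWtwom$, $\seqruleCm$ and the absoluteness rules $\seqruleAm_n$, $\seqruleNAm_n$, where principal formulas are duplicated and the $\less$-context $\Gc, \finv{\Sc}, \Dc$ must be threaded consistently through the merge; there one must check explicitly that, after the subformula cuts and the hp-admissible structural steps, no illegitimate premiss remains.
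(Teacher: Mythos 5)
Your proposal is correct and follows essentially the same route as the paper's proof: the same lexicographic induction on (cut-formula complexity, sum of premiss heights), the same permutation of non-principal cuts using the height-preserving structural rules of Proposition~\ref{prop:seq_adm_wk_ctr}, and, in the principal case, the same pivot reduction — cutting on the subformulas $C$ and $D$ against the right rule's premiss $C \seq A'', D_1, \dots, D_{j-1}$, clearing the residual occurrences of $D$ from the remaining premisses, contracting duplicates, and reassembling with a single merged instance of $\seqruleCPNm_{m+n-1}$ (or the appropriate rule for the $\seqrulem{W}$/$\seqrulem{C}$/$\seqrulem{A}$ combinations). The paper likewise handles the combinatorial case analysis by reducing most rule pairs to previously treated ones, exactly as your sketch anticipates.
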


\begin{proof}
	By induction on lexicographically ordered pairs ($c$, $h$),
	where $c$ is the complexity of the cut formula
	(\ie, the number of binary
	connectives or modalities occurring in it),
	and 
	$ h $ 
	is the sum of the heights of the derivations of the premisses of \cut. 
	We distinguish 
	cases according to whether the cut formula is
	principal
	in the last rules applied in the derivation of the premisses of \cut. 
	
	If the cut formula is not principal in the last rule application of the derivation of one of the two premisses of \cut, then the conclusion of \cut\ is standardly obtained by \ih\ on $ h $. 
	Suppose the cut formula is principal in the last rule application of the derivations of both premisses of \cut. 
	
	\
	
	\noindent $ \bullet $ \, Both premisses of \cut\ are derived by $ \seqruleCPNm_n $:
	
	\ 
	
	\begin{adjustbox}{max width = \textwidth}
		$ 
		\vlderivation{
			\vliin{\cutm}{}{\G, \G', A_1 \less B_1, \dots, A_{i-1} \less B_{i-1}, C_1 \less D_1, \dots, C_n \less D_n, A_{i+1} \less B_{i+1}, \dots, A_m \less B_m \seq E \less F, \D, \D'}{
				\vliin{\seqruleCPNm_n}{}{\G, C_1 \less D_1, \dots , C_n \less D_n \seq \red{A_i \less B_i}, \D}{
					\vlhy{\{C_k \seq A_i, D_1, \dots, D_{k-1}\}_{k\leq n}}
				}{
					\vlhy{B_i \seq A_i, D_1, \dots, D_n}
				}
			}{
				\vliin{\seqruleCPNm_n}{}{\G', A_1 \less B_1, ..., \red{A_i \less B_i}, ..., A_m \less B_m \seq E \less F, \D'}{
					\vlhy{\{A_\ell \seq E, B_1, \dots, B_{\ell-1}\}_{\ell\leq m}}
				}{
					\vlhy{F \seq E, B_1, \dots, B_i ,\dots ,  B_m}
				}
			}
		}
		$
	\end{adjustbox}
	
	\ 
	
	\noindent The derivation is converted as follows: First, for every $k \leq n$ we obtain the following derivation, by induction on $ c $:
	\vspace{-0.2cm}
	$$
	\small{
		\vliinf{\cutm}{}{C_k \seq E, B_1, \dots, B_{i-1}, D_1, \dots , D_{k-1}}{C_k \seq \red{A_i}, D_1, \dots , D_{k-1}}{\red{A_i} \seq E, B_1,\dots ,  B_{i-1}}
	}
	$$
	Always by induction on $ c $, for every $1 \leq \ell \leq n-i$ we obtain the following, where the double line denotes several applications of \rctr:
	%
	%
	%
	\begin{center}
		\begin{footnotesize}
			$ 
			\vlderivation{
				\vliq{\ctrtm}{}{ A_{i+l} \seq E, B_1, \dots, B_{i-1},D_1, \dots, D_n, B_{i+1}, \dots, B_{i+l-1} }{
					\vliin{\cutm}{}{
						A_{i+l} \seq E, E, B_1, \dots, B_{i-1}, B_1, \dots, B_{i-1}, D_1, \dots, D_n , B_{i+1}, \dots, B_{i+l-1}
					}{
						\vlhy{A_{i+l} \seq E, B_1, \dots, \purple{B_i}, \dots, B_{i+l-1}}
					}{
						\vliin{\cutm}{ }{ \purple{B_i} \seq E, D_1, \dots, D_n  , B_1, \dots, B_{i-1}}{ 
							\vlhy{B_i \seq \red{A_i}, D_1, \dots, D_n}
						}{
							\vlhy{\red{A_i} \seq E, B_1, \dots, B_{i-1}}
						}
					}
				}
			}
			$
		\end{footnotesize}
		%
		%
	\end{center}
	By induction on $ c $ we construct the following derivation $\mathcal S$: 
	\begin{center}
		$
		\footnotesize{
			\vlderivation{
				\vliq{\ctrtm}{}{ F \seq E , B_1, \dots , B_{i-1}, D_1, \dots, D_n , B_{i+1}, \dots, B_m  }{
					\vliin{\cutm}{}{ F \seq E , E, B_1, \dots , B_{i-1}, B_1, \dots , B_{i-1}, D_1, \dots, D_n , B_{i+1}, \dots, B_m }{
						\vlhy{F \seq E, B_1, \dots, \purple{B_i} ,\dots ,  B_m}
					}{
						\vliin{\cutm}{ }{ \purple{B_i} \seq E, B_1, \dots, B_{i-1}, D_1, \dots, D_n  }{ 
							\vlhy{B_i \seq \red{A_i}, D_1, \dots, D_n}
						}{
							\vlhy{\red{A_i} \seq E, B_1, \dots, B_{i-1}}
						}
					}
				}
			}
		}
		$
		%
	\end{center}
	A final application of 
	$ \seqruleCPNm_{n-1} $ 
	yields a derivation of the conclusion of $ \cutm $:

	\ 
	
	\begin{adjustbox}{max width = \textwidth}
		$ 
		\vlderivation{
			\vliiin{\seqruleCPNm_{n-1}}{}{
				\G, \G', A_1 \less B_1, ..., A_{i-1} \less B_{i-1}, C_1 \less D_1, ..., C_n \less D_n, A_{i+1} \less B_{i+1}, ..., A_m \less B_m \seq E \less F, \D, \D'
			}{
				\vlhy{
					\begin{matrix}
					\{A_\ell \seq E, B_1, ..., B_{\ell-1}\}_{\ell<i}\\
					\{C_k \seq E, B_1, ..., B_{i-1}, D_1, ..., D_{k-1}\}_{k\leq n}\\
					\{A_{i+l} \seq E, B_1, ..., B_{i-1}, D_1, ..., D_n, B_{i+1}, ..., B_{i+l-1}\}_{1 \leq \ell \leq n-i}\\
					\end{matrix}
				}
			}{\vlhy{\qquad }}{
				\vlhy{
					\begin{matrix}
					~\\ 
					~\\
					\mathcal S
					\\
					\end{matrix}
				}
			}
		}
		$
	\end{adjustbox}
	
	\
	
	\noindent $ \bullet $ \, The cut formula is principal in $ \seqruleWm_n $ and $ \seqruleCPNm_m $:
	
	\ 
	
	\begin{adjustbox}{max width = \textwidth}
		$
		\vlderivation{
			\vliin{\cutm}{}{\G , \G', A_1 \less B_1, \dots, A_{i-1}\less B_{i-1}, C_1\less D_1, \dots, C_n\less D_n, A_{i+1}\less B_{i+1},\dots,  A_m \less B_m  \seq \D, \D', E \less F}{
				\vlin{\seqruleWm_m}{}{\G, C_1\less D_1, \dots, C_n\less D_n \seq \D, \red{A_i \less B_i} }
				{\vlhy{
						\begin{matrix}
						\{C_k \seq A_i, D_1, \dots D_{k-1}\}_{1 < k \leq n}\\
						\G, C_1\less D_1, \dots, C_n\less D_n \seq \D, A_i \less B_i, A_i, D_1, \dots, D_n  
						\end{matrix}
					}
				}
			}{
				\vliin{\seqruleCPNm_m}{}{\G', A_1 \less B_1, ..., \red{A_i \less B_i}, ..., A_m \less B_m \seq  \D', E \less F}{
					\vlhy{\{A_\ell \seq E, B_1, \dots, B_{\ell-1}\}_{\ell\leq m}}
				}{
					\vlhy{F \seq E, B_1, \dots, B_i ,\dots ,  B_m}
				}
			}
		}
		$
	\end{adjustbox}
	
	\
	
	\noindent We first perform a \cut\ on smaller $ h $ between the premiss of $ \seqruleWm_m $ and the rightmost premiss of \cut, obtaining sequent
	$		\Sigma  = \G, \G' ,  A_1 \less B_1, \dots, A_{i-1}\less B_{i-1}, \purple{A_i}, C_1\less D_1, \dots, C_n\less D_n , A_{i+1}\less B_{i+1},\dots, A_m \less B_m \seq \D, \D', E \less F,  D_1, \dots, D_n$. 
	
	Next, we perform a \cut\ by induction on $ c $ on $ \Sigma $ and on $ \purple{A_i} \seq E, B_1, \dots, B_{i-1} $. 
	This yields a derivation of the sequent
	$	\Sigma'   = \G, \G' ,  A_1 \less B_1, \dots, A_{i-1}\less B_{i-1}, C_1\less D_1, \dots, C_n\less D_n ,
	A_{i+1}\less B_{i+1},\dots, A_m \less B_m \seq \D, \D', E \less F,  D_1, \dots, D_n, E, B_1, \dots, B_{i-1}$
	Then, for $ 1 < k \leq n $, we generate the following derivation, by induction on $ c $:
	$$
	\vlderivation{
		\vliin{\cutm}{}{
			C_k \seq E, B_1, \dots, B_{i-1}, D_1, \dots, D_{k-1}
		}
		{\vlhy{C_k \seq \red{A_i}, D_1, \dots D_{k-1} } }
		{\vlhy{ \red{A_i} \seq E, B_1, \dots, B_{i-1} }}
	}
	$$
	A final application of $ \seqruleWm_{j} $, where $ j = n + (i-1) $, yields the desired conclusion:
	
	\
	
	\begin{adjustbox}{max width = \textwidth}
		$
		\vlderivation{
			\vliiin{\seqruleWm_j}{}{
				\G, \G' ,  A_1 \less B_1, \dots, A_{i-1}\less B_{i-1}, C_1\less D_1, \dots, C_n\less D_n, A_{i+1}\less B_{i+1},\dots, A_m \less B_m \seq \D, \D', E \less F
			}{
				\vlhy{
					\begin{matrix}
					\{A_\ell \seq E, B_1, \dots, B_{\ell-1}\}_{\ell\leq i}\\
					\{C_k \seq E, B_1, \dots, B_{i-1}, D_1, \dots, D_{k-1}\}_{k \leq n}\\
					\end{matrix}
				}
			}
			{\vlhy{\qquad \qquad }}
			{\vlhy{
					\begin{matrix}
					~\\
					\Sigma'\\
					\end{matrix}	
				}
			}
		}
		$
	\end{adjustbox}
	
	\

	\noindent 	$ \bullet $ \, The cut formula is principal in $ \seqruleWtwom$ and $ \seqruleCPNm_m $:

	\
	
	\begin{adjustbox}{max width = \textwidth}
		$
		\vlderivation{
			\vliin{\cutm}{}{\G , \G', A_1 \less B_1, \dots , A_{i-1} \less B_{i-1},  A_{i+1} \less B_{i+1}, \dots , A_m \less B_m \seq \D, \D', E \less F}{
				\vlin{\seqruleWtwom}{}{\G \seq \D, \red{A_i \less B_i} }{
					\vlhy{\G \seq \D, A_i \less B_i , A_i }
				}
			}{
				\vliin{\seqruleCPNm_m}{}{\G', A_1 \less B_1, \dots, \red{A_i \less B_i}, \dots A_m \less B_m \seq \D', E \less F}{
					\vlhy{\{A_\ell \seq E, B_1, \dots, B_{\ell-1}\}_{\ell\leq m}}
				}{
					\vlhy{F \seq E, B_1, \dots, B_i ,\dots ,  B_m}
				}
			}
		}
		$
	\end{adjustbox}
	
	\ 
	
	\noindent We first perform a \cut\ by induction on $ h $ on the premiss of $ \seqruleWtwom $ and on the rightmost premiss of \cut, obtaining sequent
	$	\Sigma  =  \G, \G', A_1 \less B_1, \dots, A_{i-1} \less B_{i-1}, A_{i+1} \less B_{i+1}, \dots A_m \less B_m \seq
	\D, \D',E \less F, \red{A_i} $. 
	Then, applying $ \cutm $ to $ \Sigma $ and $ \red{A_i }\seq E, B_1, \dots, B_{i-1} $ we obtain 
	$	\Sigma'  =  \G, \G', A_1 \less B_1, \dots, A_{i-1} \less B_{i-1}, A_{i+1} \less B_{i+1}, \dots A_m \less B_m
	\seq \D, \D',E \less F,  E, B_1, \dots, B_{i-1}$. 
	Let $ \G^* $ denote the premiss of $ \Sigma' $ and $ \D^* =  \D, \D'$. 
	We now construct the following derivation, containing $ i-1 $ applications of  $ \seqruleCm $:

	\begin{adjustbox}{max width = \textwidth}
		$ 
		\vlderivation{
			\vlin{\seqruleWtwom}{}{\G^* \seq \D^*, E \less F  }{
				\vliin{\seqruleCm}{}{
					\G^* \seq \D^*, E \less F,  E
				}{
					\vliq{\wkm}{}{A_1 , \G^* \seq \D^*, E \less F, E }{
						\vlhy{
							\begin{matrix}
							~\\
							A_1 \seq E\\
							\end{matrix}
						}
					}
				}{
					\vliin{\seqruleCm}{}{
						\begin{matrix}
						\G^* \seq \D^*, E \less F, E, B_1,\dots, B_{i-3}\\
						\vdots\\
						\G^* \seq \D^*, E \less F, E,  B_1\\	
						\end{matrix}
					}{
						\vliq{\wkm}{}{
							A_{i-2}, \G^* \seq \D^*, E \less F, E, B_1,\dots, B_{i-3}
						}{
							\vlhy{A_{i-2} \seq E, B_1,\dots, B_{i-2} }
						}
					}{
						\vliin{\seqruleCm}{}{
							\G^* \seq \D^* , E, B_1, \dots, B_{i-2}
						}{
							\vliq{\wkm}{}{A_{i-1}, \G^* \seq \D^* , E \less F, E, B_1, \dots, B_{i-2}}{
								\vlhy{A_{i-1} \seq E, B_1, \dots, B_{i-2}}
							}	
						}{
							\vlhy{\Sigma'}
						}
					}
				}
			}
		}
		$
	\end{adjustbox}
	
	\ 
	
	The remaining cases are: $ \seqruleCPNm_n $ + $ \seqruleWm_m$, which is proved similarly as  $ \seqruleWm_m$ + $ \seqruleCPNm_n $; $ \seqruleCm$ + $ \seqruleCPNm_n $, similar to $ \seqruleCPNm_n$ + $ \seqruleWtwom $; $ \seqruleCm$ + $ \seqruleWtwom $, which is immediate, 	
	$ \seqruleCPNm_n  + \seqruleNm_n $, which is proven in the same way as  $ \seqruleCPNm_n  + \seqruleCPNm_m $, $ \seqruleCPNm_n  + \seqruleTm_m $, which is proven as $ \seqruleCPNm_n  + \seqruleWm_m $, and the cases for absoluteness, which are proven as their counterpart without absoluteness.
\end{proof}

Thanks to cut-admissibility, we obtain cut-free completeness of the calculi, by deriving the axioms and inference rules of $\Nlessstar$ in $\SNlessstar$. 

\begin{corollary}[Completeness]
	\label{th:completeness SNless}
	If  $\Nlessstar\vd\AND\G \to \OR\D$ then $\SNlessstar \vd \G \seq \D$.
\end{corollary}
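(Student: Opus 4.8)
The plan is to combine cut-admissibility (Theorem~\ref{thm:cut}) with a syntactic simulation of the Hilbert system inside the calculus. The central auxiliary claim I would prove is that $\SNlessstar \vd {}\seq A$ for \emph{every} theorem $A$ of $\Nlessstar$, arguing by induction on the length of a Hilbert derivation of $A$. The base case asks us to derive each axiom of $\Nlessstar$ as an end-sequent $\seq A$; the inductive step asks us to show that modus ponens and $\CPR$ preserve end-sequent derivability. Once this claim is established, the statement follows by descending from $\seq \AND\G \to \OR\D$ to $\G\seq\D$ by two cuts against purely propositional sequents.

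The hard part, and the bulk of the work, is the base case: deriving the modal axioms. Each is obtained by a single application of the appropriate multi-premiss rule whose premisses close by $\initm$ and elementary propositional reasoning; the only subtlety is matching the schematic letters to the axiom and unfolding the defined $\land,\lor$. For instance, to derive $\axTR$ I would apply $\seqruleCPNm_2$ to the sequent $A\less B, B\less C \seq A\less C$ with $(C_1\less D_1)=(A\less B)$, $(C_2\less D_2)=(B\less C)$ and succedent $A\less C$; the three premisses are $A\seq A$, $B\seq A,B$ and $C\seq A,B,C$, all instances of $\initm$. Similarly $\axOR$ comes from $\seqruleCPNm_2$ applied to $A\less B, A\less C\seq A\less (B\lor C)$, whose only non-trivial premiss $B\lor C\seq A,B,C$ closes by the (derivable) behaviour of the defined $\lor$. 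The remaining axioms are handled analogously by their dedicated rules: $\axN$ by $\seqruleNm_1$, $\axT$ by $\seqruleTm_1$, $\axW$ by $\seqruleWm$, $\axC$ by $\seqruleWtwom$ and $\seqruleCm$, and the absoluteness axioms $\axAone,\axAtwo$ by $\seqruleAm_0$ and $\seqruleNAm$. Since only the logics $\Nless,\NNless,\NTless,\NWless,\NCless,\NAless,\NNAless$ are in scope, this is a finite and routine (if tedious) case analysis, but it is where care is needed, so I would treat it as the main obstacle.

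For the inductive step I would avoid invoking invertibility and use cut instead. The sequent $A\to B, A \seq B$ is derivable by $\ltom$ (premisses $A\seq A,B$ and $A,B\seq B$, both $\initm$), so from $\seq A\to B$ one obtains $A\seq B$ by a cut on $A\to B$; a further cut with $\seq A$ then yields $\seq B$, simulating modus ponens. For $\CPR$, from $\seq A\to B$ the same cut gives $A\seq B$, and then $\seqruleCPNm_0$ (whose single premiss, after instantiating the succedent to $B\less A$, is exactly $A\seq B$) produces $\seq B\less A$; in the absoluteness systems one uses $\seqruleAm_0$ instead, whose premiss $A\seq B\less A,B$ follows from $A\seq B$ by the admissible weakening of Proposition~\ref{prop:seq_adm_wk_ctr}.

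Finally, to extract $\G\seq\D$ from $\seq\AND\G\to\OR\D$, I would first derive the two propositional ``glue'' sequents $\G\seq\AND\G$ and $\OR\D\seq\D$ by induction on the cardinalities of $\G$ and $\D$, unfolding $\AND$ and $\OR$ through $\bot,\to$. From these, $\ltom$ together with admissible weakening gives $\AND\G\to\OR\D, \G\seq\D$ (its two premisses being $\G\seq\AND\G,\D$ and $\OR\D,\G\seq\D$). A single cut of this sequent against $\seq\AND\G\to\OR\D$ then yields $\G\seq\D$, completing the proof. Every cut used here is eliminated by Theorem~\ref{thm:cut}, so the resulting derivation is cut-free, giving cut-free completeness as advertised.
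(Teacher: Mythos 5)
Your proposal is correct and takes essentially the same approach as the paper: the axioms are derived by single applications of the corresponding multi-premiss rules (your derivations of $\axTR$ and $\axOR$ via $\seqruleCPNm_2$, and of $\CPR$ via $\cutm$ followed by $\seqruleCPNm_0$, coincide with those in Fig.~\ref{fig:derivations in GN}), \emph{modus ponens} is simulated by $\cutm$, and Theorem~\ref{thm:cut} then makes the whole derivation cut-free. The paper leaves the extensions, the absoluteness systems, and the final descent from $\seq \AND\G \to \OR\D$ to $\G \seq \D$ implicit, whereas you spell these out (correctly, e.g.\ using $\seqruleAm_0$ with weakening where $\seqruleCPNm_0$ is unavailable), but this is a difference of detail, not of method.
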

\begin{proof}
	Derivations of the $ \Nless $ axioms in $ \SNlessc{} $ are displayed in Fig.~\ref{fig:derivations in GN}.
	The derivations employ standard propositional rules
	for $\land$ and $\lor$, which can be defined in  $\SNlessstar$.
	The derivations of the axioms for extensions are straightforward. \emph{Modus ponens} is simulated using \cut\ in the usual way.
\end{proof}

\begin{figure}
	\footnotesize
	{\begin{tabular}{c c}
			\multicolumn{2}{c}{
				$ \vlderivation{
					\vlin{\rtom}{}{\seq (A \less B) \land (A \less C) \to (A \less B \lor C)}{
						\vlin{\vlan_\mathsf{L}}{}{(A \less B) \land (A \less C) \seq A \less B \lor C}{
							\vliiin{\seqruleCPNm_{2}}{}{A \less B, A \less C \seq A \less B \lor C }{
								\vlhy{A \seq A}}{\vlhy{A \seq A, B}}{
								\vliin{\vlor_\mathsf{L}}{}{B \lor C \seq A, B, C}{\vlhy{B \seq A, B, C}}{\vlhy{C \seq A, B, C}}
							}
						}
					}
				} $}\\
			\\
			$ \vlderivation{
				\vlin{\rtom}{}{\seq (A \less B) \land (B \less C) \to (A \less C)}{
					\vlin{\vlan_\mathsf{L}}{}{(A \less B) \land (B \less C) \seq A \less C}{
						\vliiin{\seqruleCPNm_{2}}{}{A \less B, B \less C \seq A \less C}{\vlhy{A \seq A}}{\vlhy{B, A \seq B}}{\vlhy{C \seq A, B, C}}
					}
				}
			} $
			&
			$	\vlderivation{
				\vlin{\seqruleCPNm_{0}}{}{\seq B \less A}{
					\vliin{\cutm}{}{A \seq B}{\vlhy{\seq A \to B}}{
						\vliin{\ltom}{}{ A, A \to B \seq B}{\vlhy{A \seq A, B}}{\vlhy{A, B \seq B}}
					}
				}
			}$
			\\
		\end{tabular}
	}
	\normalsize
	\caption{\label{fig:derivations in GN} Derivations of the axioms and rules of $ \Nless $  in $ \SNlessc{} $.}
\end{figure}

Termination of root-first proof search in $ \SNlessstar $ can be easily proved by observing that non redundant rule applications strictly decrease the complexity of formulas. 
However, $ \SNlessstar $ are not suited for root-first proof search: the comparative plausibility rules are \emph{not invertible}, meaning that derivability of the conclusion does not imply derivability of the premiss(es) of the rule. As a consequence, backtrack points are generated when constructing root-first a derivation. 
Next section introduces proof systems having only invertible rules.

\section{Hypersequent calculi for \CPN\ logics}
\label{sec:hypersequents}

In this section we present hypersequent calculi $\BPCLlessstar$ for the same family of \CPN\ logics treated in Sec.~\ref{sec:sequent}, 
namely $ \Nless, \NNless, \NTless,\NWless, \NCless, \NAless $ and $ \NNAless $, always denoted by 
$\Nlessstar$. 
Disregarding the hypersequent structure, the calculi $\BPCLlessstar$ are fragments of the sequent calculi for Lewis' logics
by Olivetti and Pozzato \cite{olivetti2015standard} and Girlando et al.~\cite{girlando2016},
the difference being that we do not assume the communication rule $\mathsf{com}$.
The basic components of the calculi $\BPCLlessstar$ are Gentzen-style sequents to which is added the following \emph{block structure} from \cite{olivetti2015standard}, representing  $\less$-formulas in the right-hand side of sequents.

\begin{definition}\label{def:hyp}
	A \emph{block} is a structure $\bl{\Sigma}{A}$, where $\Sigma$ is a multiset of formulas and 
	$A$ is a formula. A \emph{sequent with blocks} is a pair $\G \seq \D$, where $\G$  is a multiset of formulas,
	and $\D$  is a multiset of formulas and blocks.
	Sequents are interpreted 
	in $\lan$ as follows
	(where $\D'$ does not contain blocks):
	%
	\begin{center}
		$
		\fint{\G \seq \D', \bl{\Sigma_1}{C_1}, \dots, \bl{\Sigma_k}{C_k}} =$ 
		
		$\AND\G \to \OR\D' \lor (\OR\Sigma_1 \less C_1) \lor \dots \lor (\OR\Sigma_k \less C_k).$
	\end{center}
	A \emph{hypersequent} $\hH$ is a finite multiset of sequents with blocks  
	$\G_1 \seq \D_1 \hyp \dots  \hyp \G_n \seq \D_n$,
	where $\G_1 \seq \D_1, \dots , \G_n \seq \D_n$ are called the \emph{components} of $\hH$. 
	We say that a  hypersequent 
	is \emph{valid in a model} $\M$ if 
	it has a component $\G_k \seq \D_k$ 
	such that $\M\models\fint{\G_k \seq \D_k}$.
\end{definition}

While hypersequents do not have a formula interpretation, sequents with blocks are interpreted as formulas of $\lan$, in a  way different from \cite{olivetti2015standard,girlando2016}. Specifically, for us $\bl{B_1, \dots , B_n}{A}$ is interpreted as $(B_1 \lor \dots \lor B_n) \less A$, while in \cite{olivetti2015standard,girlando2016} it corresponds to $(B_1 \less A) \lor \dots \lor (B_n \less A)$. These two interpretations are equivalent in $\Vless$ but  are not equivalent in $\Nless$.

\begin{figure}
	\begin{adjustbox}{max width = \textwidth}
		\begin{tabular}{|c|}
			
			\hline
			\\[-0.2cm]
			$ \vlinf{\initm}{}{\hG \hyp \G, p \seq p, \D}{} $
			\quad
			$ \vlinf{\lbotm}{}{\hG \hyp \G, \bot \seq \D}{} $
			\quad
			$ \vliinf{\ltom}{}{\hG \hyp \G, A \to B \seq \D}{\hG \hyp \G, \finvh{A \to B }, B \seq \D}{\hG \hyp \G, \finvh{A \to B } \seq \D, A} $
			\\[0.5cm]
			$ \vliinf{\llessm}{}{\hG \hyp \G, A \less B \seq \D, \bl{\Sigma}{C}}{\hG \hyp \G, {A \less B} \seq \D, \bl{B, \Sigma}{C}}{\hG \hyp \G, {A \less B} \seq \D, {\bl{ \Sigma}{C}},\bl{ \Sigma}{A}} $\\[0.5cm]
			$ \vlinf{\rtom}{}{\hG \hyp \G \seq \D, A \to B}{\hG\hyp   \G , \finvh{A \to B }, A \seq \D, B} $
			\ \ 
			$ \vlinf{\rlessm}{}{\hG \hyp \G \seq \D, A \less B}{\hG \hyp \G \seq \D, \finvh{A \less B}, \bl{A}{B}} $
			\ \ 
			$ \vlinf{\jumpm}{}{\hG \hyp \G \seq  \D , \bl{\Sigma}{A}}{\hG \hyp \G \seq \D , \bl{\Sigma}{A} \hyp A \seq \Sigma} $
			\\[0.5cm] 
			$ \vlinf{\hrulenm}{}{\hG \hyp \G \seq \D}{\hG \hyp \G \seq \D, \bl{\bot}{\top}} $
			\qquad\quad
			$ \vliinf{\hruletm}{}{\hG \hyp \G, A \less B \seq \D}{\hG \hyp \G, A \less B \seq \D, B }{\hG \hyp \G, A \less B \seq \D, \bl{\bot}{A}} $\\[0.5cm]
			$ \vlinf{\hrulewm}{}{\hG \hyp \G \seq \D, \bl{\Sigma}{A}}{\hG \hyp \G \seq \D, \bl{\Sigma}{A},\Sigma} $
			\qquad\qquad
			$ \vliinf{\hrulecm}{}{\hG \hyp \G, A \less B \seq \D }{\hG \hyp \G, A \less B \seq \D, B}{ \hG \hyp \G, A \less B , A\seq \D} $\\[0.5cm]
			$ \vlinf{\hruleaml}{}{\hG \hyp \G, A \less B \seq \D \hyp \Omega \seq \Theta}{\hG \hyp \G, A \less B \seq \D \hyp \Omega, A \less B \seq \Theta} $
			\quad \ 
			$ \vlinf{\hruleamr}{}{\hG \hyp \G \seq \D, A \less B  \hyp \Omega \seq \Theta}{\hG \hyp \G \seq \D, A \less B  \hyp \Omega \seq \Theta, A \less B } $ 
			\\
			\hline 
			\multicolumn{1}{c}{}\\[-0.3cm]
			\multicolumn{1}{c}{
				\begin{tabular}{lll}
					$\HNless = \{\initm, \lbotm, \ltom, \rtom\} \cup \{\llessm, \rlessm, \jumpm \}$ &&
					$\HNWless = \HNless \cup \{\hruletm, \hrulewm \}$ \\
					$\HNNless = \HNless \cup \{\hrulenm \}$ &&
					$\HNCless = \HNless \cup \{ \hrulewm, \hrulecm \}$ \\
					$\HNTless = \HNless \cup \{ \hruletm \}$ &&
					$\HNAlessstar = \HNlessstar \cup \{\hruleaml, \hruleamr\}$ \

				\end{tabular}
			}
		\end{tabular}
	\end{adjustbox}
	\caption{\label{figure:HPCLless} Rules of hypersequent calculi $\HNlessstar$.} 
\end{figure}

The calculi $\HNlessstar$ are defined in Fig.~\ref{figure:HPCLless}. 
The rules are 
\emph{cumulative}, 
meaning that each rule has the 
principal formula 
copied in the premisses. 
Differently from the calculi $\SNlessstar$ in the previous section, 
$\BPCLlessstar$ have separate left and right rules
for $\less$, and all rules have a fixed number of premisses.

We point out that the hypersequent structure is not necessary to define sequent calculi with blocks for \CPN\ logics.
Moreover, it can be checked that a
hypersequent  is derivable if and only if one of its components is derivable. 
Following the strategy from \cite{dalmonte2021hypersequent}, 
we chose to employ a hypersequential structure to obtain invertibility of \emph{all} the rules of the calculi, there including the $ \jumpm $ rule, which was not invertible in \cite{girlando2016}. 
Together with their cumulative formulation, invertibility of the rules allows to directly construct countermodels from failed proof search, without the need of backtracking or inserting any additional computation. 
Moreover, differently from \cite{girlando2016}, the countermodel construction modularly extends to logics with \emph{absoluteness}. The rules for absoluteness are inspired from \cite{girlando2017hypersequent}, and  correspond to condition $ \mathsf{A}+ $ from Sec.~\ref{sec:semantics}. 
Soundness of the rules 
is proved as follows. 
Let $ \BPCLlessstar \vd A$ denote derivability of $ A $ in $ \BPCLlessstar $.

\begin{theorem}[Soundness]\label{th:soundness hyp}
	For every formula $A$, if $A$ is derivable in $\BPCLlessstar$, then $A$ is valid in all $\Nlessstar$-models.
\end{theorem}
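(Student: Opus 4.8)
The plan is to prove the stronger statement that \emph{every hypersequent} derivable in $\BPCLlessstar$ is valid in all $\Nlessstar$-models, from which the theorem follows at once: since $\BPCLlessstar \vd A$ means that the one-component hypersequent $\seq A$ is derivable, and $\fint{\seq A} = \top \to A$ is equivalent to $A$, validity of $\seq A$ in a model $\M$ is exactly $\M \models A$. The argument would go by induction on the height of the derivation. For the base cases, the initial hypersequents produced by $\initm$ and $\lbotm$ contain a component whose interpretation is a propositional tautology ($\AND\G \land p \to p \lor \dots$, resp.\ $\AND\G \land \bot \to \dots$), hence valid in every model. The inductive step then reduces to a single \emph{rule-preservation lemma}: for each rule and each $\Nlessstar$-model $\M$, if all premisses are valid in $\M$, then so is the conclusion.

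I would establish this lemma rule by rule, arguing by contraposition and locally at a single world. Fix $\M$ and assume the conclusion is not valid, i.e.\ each of its components is refuted at some world. Since the context hypersequent $\hG$ and every component left unchanged by the rule occur identically in the premisses, they are refuted in the premisses at the very same worlds; it therefore suffices, for each rule, to exhibit a world refuting the component that the rule modifies or creates. For the propositional rules $\ltom, \rtom$ this is the standard case analysis on the truth of the decomposed formula, and for $\rlessm$ it is immediate, because the added block $\bl{A}{B}$ is interpreted as the very formula $A \less B$ that is kept on the right.

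The heart of the proof are the rules governing $\less$. For $\llessm$, if the modified component is refuted at $w$ then $w \Vd A \less B$ and $w \not\Vd \OR\Sigma \less C$, so some $\alpha \in \N(w)$ has $\alpha \EVd C$ but $\alpha \not\EVd \OR\Sigma$; splitting on whether $\alpha \EVd B$ and using $w \Vd A \less B$, one checks that $\alpha$ refutes $\bl{B,\Sigma}{C}$ in the first premiss (when $\alpha \not\EVd B$) or refutes $\bl{\Sigma}{A}$ in the second (when $\alpha \EVd B$, whence $\alpha \EVd A$). For $\jumpm$, a refutation of $\bl{\Sigma}{A}$ at $w$ supplies $\alpha \in \N(w)$ with a witness $v \in \alpha$ such that $v \Vd A$ and $v \not\Vd \OR\Sigma$, and this $v$ refutes the newly created component $A \seq \Sigma$. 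The structural rules are then dispatched by invoking exactly the frame condition that validates the corresponding axiom: $\hrulenm$ uses normality (so $\bot \less \top$ is false everywhere, i.e.\ $\axN$); $\hruletm$ uses total reflexivity together with $\axT$; $\hrulewm$ uses weak centering, which forces $w \in \alpha$ for the refuting $\alpha$ and hence $w \not\Vd \OR\Sigma$; and $\hrulecm$ uses centering via $\{w\} \in \N(w)$, matching $\axC$.

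The main obstacle lies in the hypersequent rules that relate \emph{distinct} components, namely the absoluteness rules $\hruleaml$ and $\hruleamr$. Here the worlds refuting the two relevant components of the conclusion may differ, say $w_1$ (for the component displaying $A \less B$) and $w_2$ (for $\Omega \seq \Theta$), so the local argument must be combined with a transport step: under strong absoluteness $\mathsf{A}+$ one has $\N(w_1) = \N(w_2)$, and since the truth of a $\less$-formula at a world depends only on the neighbourhood function there, $w_1$ and $w_2$ agree on $A \less B$. This lets the truth (for $\hruleaml$) or falsity (for $\hruleamr$) of $A \less B$ at $w_1$ be carried to $w_2$, so that $w_2$ refutes the premiss component $\Omega, A \less B \seq \Theta$ (resp.\ $\Omega \seq \Theta, A \less B$). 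Granting this lemma, the induction closes and soundness follows.
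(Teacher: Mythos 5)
Your proof is correct, and its skeleton is the same as the paper's: induction on the derivation together with a rule-preservation lemma stating that, in any fixed $\Nlessstar$-model $\M$, validity of all premisses implies validity of the conclusion (with the disjunctive, some-component-valid reading of hypersequent validity). Where you genuinely diverge is in how the individual rules are discharged. The paper argues directly at the level of the formula interpretation of components and reuses the soundness of the Hilbert system (Theorem~\ref{th:soundness}): for $\llessm$ it chains $\axTR$, $\CPR$ and $\axOR$ to transform the interpretations of the two premisses into that of the conclusion, for $\jumpm$ it applies $\CPR$ to the new component $A \seq \Sigma$, and for $\hruletm$ it combines $\axTR$ with $\axT$. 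You instead proceed by contraposition, unfolding the neighbourhood truth conditions and refuting components at individual worlds, invoking the frame conditions themselves (normality, total reflexivity, weak centering, centering, $\mathsf{A}+$) rather than the axioms they validate; your case splits for the two-premiss rules (on whether $\alpha \EVd B$ for $\llessm$, on the truth of $B$ at the refuting world for $\hruletm$ and $\hrulecm$) are exactly what is needed to refute every component of at least one premiss. Both routes are sound: the paper's is shorter because Theorem~\ref{th:soundness} already packages the semantic work, and it makes the rule-to-axiom correspondence visible; yours is self-contained, makes explicit which frame condition each structural rule requires, and its refute-at-a-world style is the one that reappears in the countermodel construction of Prop.~\ref{prop:countermodel}. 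You also treat the absoluteness rules, which the paper's proof omits; just note that your transport step establishes preservation over $\mathsf{A}+$-models only, so for the logics with absoluteness the statement about $\mathsf{A}$-models is recovered via the equivalence of $\mathsf{A}$ and $\mathsf{A}+$ over validity asserted in Sec.~\ref{sec:semantics} --- which is precisely the convention the paper itself adopts for Sec.~\ref{sec:hypersequents}.
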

\begin{proof}
	For every rule $R$ of $\HNlessstar$, we show that if the premisses of $R$ are valid in a $\Nlessstar$-model $\M$,
	then the conclusion is also valid in $\M$. 
	We only consider 
	some relevant examples of modal rules.
	($\llessm$) 
	Suppose  $\M \models \hG \hyp \G, {A \less B} \seq \D, \bl{B, \Sigma}{C}$
	and
	$\M\models \hG \hyp \G, {A \less B} \seq \D, {\bl{ \Sigma}{C}},\bl{ \Sigma}{A}$.
	If $\M\models \hG$ we are done.
	Otherwise $\M \models \AND\G \land (A \less B) \to (B\lor \OR\Sigma \less C) \lor \OR\D$
	and
	$\M\models \AND\G \land (A \less B) \to (\OR\Sigma \less C) \lor (\OR\Sigma \less A) \lor \OR\D$.
	Then by $\axTR$, 
	$\M\models \AND\G \land (A \less B) \to (\OR\Sigma \less C) \lor (\OR\Sigma \less B) \lor \OR\D$,
	and by $\CPR$ and $\axOR$,
	$\M\models \AND\G \land (A \less B) \to (\OR\Sigma \less C) \lor (\OR\Sigma \less B \lor \OR\Sigma) \lor \OR\D$,
	therefore by $\axTR$, 
	$\M\models \AND\G \land (A \less B) \to (\OR\Sigma \less C) \lor (\OR\Sigma \less C) \lor \OR\D$, thus
	$\M\models \AND\G \land (A \less B) \to (\OR\Sigma \less C) \lor \OR\D$.
	It follows $\M \models \hG \hyp \G, A \less B \seq \D, \bl{\Sigma}{C}$.
	%
	%
	($\jumpm$)
	Suppose that $\M\models \hG \hyp \G \seq \D , \bl{\Sigma}{A} \hyp A \seq \Sigma$.
	If $\M\models \hG \hyp \G \seq \D , \bl{\Sigma}{A}$ we are done, 
	otherwise $\M\models A \to \OR\Sigma$.
	Then by $\CPR$, $\M \models \OR\Sigma \less A$, 
	therefore $\M \models \G \seq \D , \bl{\Sigma}{A}$.
	($\hruletm$)
	Suppose that $\M\models \hG \hyp \G, A \less B \seq \D, B$ and $\M\models \hG \hyp \G, A \less B \seq \D, \bl{\bot}{A}$.
	If $\M\models \hG$ we are done, otherwise 
	$\M\models \AND\G\land (A \less B) \to \OR\D\lor (B \land (\bot\less A))$.
	Then by $\axTR$, $\M\models \AND\G\land (A \less B) \to \OR\D\lor (B \land (\bot\less B))$,
	and by $\axT$, $\M\models \AND\G\land (A \less B) \to \OR\D\lor (B \land\neg B)$,
	therefore $\M\models \AND\G\land (A \less B) \to \OR\D$.
\end{proof}

The calculi  $\BPCLlessstar $ enjoy admissibility of the following structural properties:


\begin{lemma}
	It holds that all the rules of $\BPCLlessstar $ are 
	height-preserving
	invertible, and that the following rules of weakening and contraction are 
	height-preserving
	admissible in $ \BPCLlessstar $, where $ A $ in $ \wkrm $ or $\ctrr$ can be a formula or a block. 
	\begin{center}\begin{small}$
			\vlinf{\wklm}{}{\hG \hyp A, \G \seq \D }{\hG \hyp \G \seq \D}
			\quad 
			\vlinf{\wkrm}{}{ \hG \hyp \G \seq \D, A }{\hG \hyp \G \seq \D}
			\quad 
			\vlinf{\wkc}{}{ \hG \hyp \mathcal{H} }{\hG}
			\quad 
			\vlinf{\wkb}{}{ \hG \hyp \G \seq \D, \bl{\Sigma, B}{A} }{\hG \hyp \G \seq \D, \bl{\Sigma}{A}}$
			
			\vspace{0.2cm}
			$\vlinf{\ctrl}{}{\hG \hyp A, \G \seq \D }{\hG \hyp A, A, \G \seq \D}
			\quad 
			\vlinf{\ctrr}{}{ \hG \hyp \G \seq \D, A }{\hG \hyp \G \seq \D, A, A}
			\quad 
			\vlinf{\ctrc}{}{ \hG \hyp \mathcal{H} }{\hG \hyp \mathcal{H} \hyp \mathcal{H}}
			\quad 
			\vlinf{\ctrb}{}{ \hG \hyp \G \seq \D, \bl{\Sigma, B}{A} }{\hG \hyp \G \seq \D, \bl{\Sigma, B, B}{A}}
			$\end{small}\end{center}
	
\end{lemma}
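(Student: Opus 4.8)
The plan is to establish the three properties in the order weakening, invertibility, contraction, since each relies on the previous one. I would first prove height-preserving admissibility of the four weakening rules $\wklm, \wkrm, \wkc, \wkb$ by a single simultaneous induction on the height of the derivation of the premiss. The base case is immediate, since weakening an instance of $\initm$ or $\lbotm$ yields again an axiom. For the inductive step I observe that in every rule of $\HNlessstar$ the material introduced by a weakening---a side formula, a side block, an extra element of an existing block, or an entire fresh component---sits in a position that is neither principal nor active, so it can be pushed into the premiss(es) by the induction hypothesis and the rule reapplied, preserving height. The only rule requiring a remark is $\hrulewm$, which transfers the block contents $\Sigma$ into the succedent: when propagating a $\wkb$ through it I must simultaneously perform a $\wkrm$ on the premiss so that the poured-out copy also carries the new element, which is exactly why the four weakenings are proved together. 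The component-creating rule $\jumpm$ and the component-spanning rules $\hruleaml, \hruleamr$ act on designated components while $\wkc$ adds an inert one, so they cause no trouble.

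Next I would derive height-preserving invertibility of \emph{all} rules, which is almost immediate because the calculi are cumulative. For every rule each premiss is obtained from the conclusion by adding only weakenable material while retaining the principal formula: $\llessm$ enlarges the block $\bl{\Sigma}{C}$ by $B$ in one premiss and adjoins the block $\bl{\Sigma}{A}$ in the other, $\hrulewm$ copies $\Sigma$ into the succedent, $\rlessm, \hrulenm, \hruletm$ append a block or a formula, $\jumpm$ appends a fresh component $A \seq \Sigma$, and $\hruleaml, \hruleamr$ copy a $\less$-formula into a second component; the propositional rules are standard. Hence from a derivation of the conclusion the corresponding---already established---weakening produces a derivation of the premiss of no greater height, which is precisely height-preserving invertibility. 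This is the step the hypersequent structure was designed to make work: $\jumpm$, which fails to be invertible in the plain block calculus, here has a premiss that is merely a $\wkc$-weakening of its conclusion.

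Finally I would prove height-preserving admissibility of the contraction rules $\ctrl, \ctrr, \ctrc, \ctrb$, which is the real work, by a simultaneous induction on the height of the derivation of the premiss with a case analysis on the last rule $R$. When neither of the two contracted copies is principal in $R$, both survive as context into the premisses; I contract them there by the induction hypothesis and reapply $R$, preserving height. The delicate case is when one contracted copy is principal in $R$: here cumulativity guarantees that the second, duplicate copy is still present in the premiss(es), and, being syntactically identical to the principal one, it is itself eligible as principal for $R$. I would therefore apply height-preserving invertibility of $R$ to the premiss(es) to unpack the duplicate in the same manner, contract the now-coinciding active parts by the induction hypothesis (legitimate because invertibility does not raise the height), and conclude with a single application of $R$. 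The main obstacle lies exactly in these principal cases for the block- and component-level contractions $\ctrb$ and $\ctrc$, where the block-manipulating rules $\llessm, \hrulewm$, the component-creating rule $\jumpm$, and the absoluteness rules $\hruleaml, \hruleamr$ interact; checking that the duplicate is correctly tracked across a $\jumpm$ step or a transfer between components---so that the reduced parts genuinely coincide and the heights add up---is where the argument must be carried out with care.
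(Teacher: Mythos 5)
Your proposal is correct and follows essentially the same route as the paper's (much terser) proof: height-preserving weakening by induction on the height of the derivation, invertibility of all rules as an immediate consequence of weakening plus cumulativity (the paper's own example is deriving the premiss of $\jumpm$ from its conclusion by $\wkc$), and contraction by a standard induction on height. The details you supply — the simultaneous induction over the four weakening rules, the interaction of $\wkb$ with $\hrulewm$, and the use of invertibility in the principal cases of contraction — are exactly the content the paper compresses into the word ``standard''.
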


\begin{proof}
	Height-preserving admissibility of weakening can be standardly proved by induction on the height of the derivation. Invertibility of all the rules of $\BPCLlessstar $ immediately follows. For instance, the premiss of the $\jumpm$ rule can be derived from the conclusion of $ \jumpm $ using $ \wkc $. Admissibility of contraction also follows by standard induction on the height of derivations. 
\end{proof}

Concerning completeness, a proof can be given by showing that
the derivations in the calculi $\SNlessstar$ can be simulated in $\BPCLlessstar$. 


\begin{theorem}[Simulation]
	\label{thm:simulation_manypremiss}
	For $ A  $ $ \lan $ formula, if $ \SNlessstar \vdash A $ then $ \BPCLlessstar \vdash A$. 
\end{theorem}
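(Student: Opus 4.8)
The plan is to establish the stronger statement that $\SNlessstar \vd \G \seq \D$ implies $\BPCLlessstar \vd \G \seq \D$, where on the right the plain sequent $\G \seq \D$ is read as a single-component hypersequent whose succedent happens to contain no blocks; the theorem is then the special case in which $\G$ is empty and $\D$ is the single formula $A$. I would proceed by induction on the height of the $\SNlessstar$-derivation, showing that each rule of $\SNlessstar$ is admissible in $\BPCLlessstar$: from $\BPCLlessstar$-derivations of the premisses, supplied by the induction hypothesis, I construct a $\BPCLlessstar$-derivation of the conclusion. Throughout I would use the height-preserving admissibility of weakening and contraction and, crucially, the fact that a hypersequent is derivable as soon as one of its components is; this lets me close any branch by grafting an induction-hypothesis derivation of a single component and reinstating the remaining components via the admissible external weakening $\wkc$.

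The base cases $\initm$, $\lbotm$ and the propositional rules $\ltom$, $\rtom$ are immediate, since the corresponding cumulative rules of $\BPCLlessstar$ contain the $\SNlessstar$-versions as instances modulo admissible weakening and contraction. The heart of the argument is the rule $\seqruleCPNm_n$, which I treat as the template. Reading its conclusion $\G, C_1 \less D_1, \dots, C_n \less D_n \seq A \less B, \D$ bottom-up, I would first apply $\rlessm$ to introduce the block $\bl{A}{B}$, and then run a comb of $n$ applications of $\llessm$ that process the left formulas $C_1 \less D_1, \dots, C_n \less D_n$ against the growing block one at a time. At stage $k$ the rule $\llessm$ splits into a branch carrying the enlarged block $\bl{A, D_1, \dots, D_{k-1}}{B}$ forward and a branch spawning the fresh block $\bl{A, D_1, \dots, D_{k-1}}{C_k}$; applying $\jumpm$ to the latter yields the component $C_k \seq A, D_1, \dots, D_{k-1}$, which is exactly the $k$-th premiss of $\seqruleCPNm_n$. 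A final $\jumpm$ on the fully grown block $\bl{A, D_1, \dots, D_n}{B}$ produces the component $B \seq A, D_1, \dots, D_n$, matching the last premiss. Thus the $n+1$ leaves of this construction are precisely the $n+1$ premisses of the rule, and each closes by the induction hypothesis together with $\wkc$.

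The extension rules follow the same shape --- open a block, grow it, jump --- but draw on the extension-specific rules of $\BPCLlessstar$ and absorb the mismatches between spawned components and premisses by admissible weakening. For $\seqruleNm_n$ I would replace the opening $\rlessm$ by $\hrulenm$, creating $\bl{\bot}{\top}$ in place of $\bl{A}{B}$, so that the jumped components carry a spurious $\bot$ (and a spurious $\top$ on the left of the last leaf) that is simply weakened away. For the reflexivity and centering systems the left $\less$-formulas are processed by $\hruletm$ and $\hrulecm$, while the right-$\less$ rules $\seqruleWm_n$ and $\seqruleWtwom$ additionally invoke $\hrulewm$ to reflect the accumulated block content into the succedent, thereby producing the reflexivity/centering premiss; the remaining premisses again arise from jumps into spawned blocks. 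The absoluteness rules $\seqruleAm_n$, $\seqruleNAm_n$ use $\hruleaml$, $\hruleamr$ to copy the $\less$-formulas into the jumped component before closing, and otherwise mirror their non-absolute counterparts.

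I expect the main difficulty to lie in these extension rules. Whereas $\seqruleCPNm_n$ grows the block content $A, D_1, \dots, D_{k-1}$ for free as a by-product of a single comb, the blocks $\bl{\bot}{C_k}$ spawned by $\hruletm$ start with empty content, so to recover the premiss $C_k \seq D_1, \dots, D_{k-1}$ I must first enlarge each such block by a nested comb of $\llessm$-applications on the earlier formulas $C_1 \less D_1, \dots, C_{k-1} \less D_{k-1}$ before jumping. Verifying that this nested bookkeeping yields leaves coinciding, up to admissible weakening, with exactly the premisses prescribed by the $\SNlessstar$ rule --- and that it does so uniformly across the whole family $\Nless, \dots, \NNAless$ --- is the delicate part of the argument; by contrast the fixed-arity rules $\seqruleWtwom$ and $\seqruleCm$ are routine.
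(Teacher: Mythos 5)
Your proposal is correct and follows essentially the same route as the paper's proof: a stepwise simulation of each $\SNlessstar$ rule, with $\seqruleCPNm_n$ handled by an $\rlessm$ application followed by a comb of $\llessm$ steps and $\jumpm$ closures, $\seqruleNm_n$ obtained by swapping in $\hrulenm$, the $\hrulewm$/$\hrulecm$/$\hruleaml$,$\hruleamr$ rules used exactly where you place them, and $\seqruleTm_n$ requiring the nested $\llessm$ combs to repopulate the empty blocks $\bl{\bot}{C_k}$ before jumping. The paper likewise singles out $\seqruleTm_n$ as the only genuinely delicate case, so your assessment of where the difficulty lies matches as well.
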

\begin{proof}
	We show that the rules of $ \SNlessstar $ can be stepwise simulated by the rules of $ \BPCLlessstar $. 
	Then the proof of the claim is similar to the one given in \cite{girlando2016}. 
	Let $ \manyless =  C_1 \less D_1, ..., C_n \less D_n $ and $ \manyd{1}{n} = D_1, \dots, D_n $. 
	Here follows the translation of $ \seqruleCPNm_n $. 
	
	\begin{adjustbox}{max width = \textwidth}
		$
		\vlderivation{
			\vlin{\rlessm}{}{ \G, \manyless  \seq A \less B, \D }{
				\vliin{\llessm}{}{\G, \manyless  \seq \bl{A}{B},  \D}{
					\vlin{\llessm}{}{\G,\manyless \seq \bl{A, D_1}{B},  \D}{	
						\vliin{\llessm}{}{
							\begin{matrix}
							\G,\manyless \seq \bl{A, \manyd{n-1}}{B},  \D\\
							\vdots\\
							\G,\manyless \seq \bl{A, D_1, D_{2}}{B},  \D
							\end{matrix}
						}{
							\vlin{\jumpm}{}{\G,\manyless \seq \bl{A, \manyd{1}{n}}{B},  \D}{
								\vlin{\wkc}{}{\G,\manyless \seq \bl{A,\manyd{1}{n}}{B},  \D \hyp  B \seq A, \manyd{1}{n}}{	\vlhy{B \seq A, \manyd{1}{n}}}
							}
						}{
							\vlin{\jumpm}{}{\G,\manyless \seq \bl{A, \manyd{1}{n-1}}{C_n},  \D}{
								\vlin{\wkc}{}{\G,\manyless \seq \bl{A, \manyd{1}{n-1}}{C_n},  \D \hyp C_n \seq A, \manyd{1}{n-1}  }{\vlhy{C_n \seq A, \manyd{1}{n-1}}}
							}
						}
					}
				}
				{
					\vlin{\jumpm}{}{\G, \manyless \seq \bl{A}{C_1},  \D}{
						\vlin{\wkc}{}{ \G, \manyless \seq \bl{A}{C_1},  \D \hyp C_1 \seq A}{\vlhy{C_1 \seq A}}
					}
				}
			}
		}
		$
	\end{adjustbox}
	
	\vspace{0.5cm}
	
	\noindent 
	Rule $ \seqruleNm_n $ is derived in a similar way, by replacing $ \rlessm $ with $ \hrulenm $ and removing occurrences of $ A $ and $ B $ in the derivation above. For $ \seqruleAm_n $ and $ \seqruleNAm_n $ each $ \jumpm $ is followed by applications of $ \hruleaml $ and $ \hruleamr $.  
	To derive rule $ \seqruleWm_n $, replace the upper leftmost occurrence of $ \jumpm $ with rule $ \hrulewm $.
	Rule $ \seqruleWtwom $ is immediately derivable using $ \jumpm $ and $ \hrulewm $, and rule $ \seqruleCm $ using $ \hrulecm $.   
	The case of rule $ \seqruleTm_n $ is more complex. 
	We start with the following derivation.

	\begin{adjustbox}{max width = \textwidth}
		$ \vlderivation{
			\vliin{\hruletm}{}{ 
				\G, \manyless \seq \D
			}{
				\vlin{\hruletm}{}{
					\G, \manyless \seq \D, {D_n}
				}{
					\vliin{\hruletm}{}{
						\begin{matrix}
						\G, \manyless \seq \D,\manyd{2}{n} \\
						\vdots\\
						\G, \manyless \seq \D, D_n, {D_{n-1}}\\
						\end{matrix}
					}{
						\vlhy{\G, \manyless \seq \D, \manyd{2}{n}, {D_1} }
					}{
						\vlin{\jumpm}{}{ \G, \manyless \seq \D,\manyd{2}{n}, \bl{\bot}{{C_1}} }{
							\vlin{\wkb,\wkm}{}{ \G, \manyless \seq \D,\manyd{2}{n}, \bl{\bot}{{C_1}} \hyp C_1 \seq \bot}{\vlhy{C_1 \seq }}
						}
					}
				}
			}{\vlhy{\G, \manyless \seq \D, \bl{\bot}{{C_n}} }}
		} $
	\end{adjustbox}
	
	\ 
	
	\noindent The leftmost sequent is premiss $ \G, \manyless \seq \D, \manyd{1}{n} $ of $ \seqruleTm_n $. We now construct from the remaining premisses of $ \seqruleTm_n $, that is, $ \{C_k \seq D_1 , \dots , D_{k-1}\} $, for $ k \leq n $,  
	derivations of sequents  $ \G, \manyless \seq \D, \manyds, \bl{\bot}{C_{k}} $, for $ 1 < k \leq n $, where $ \manyds = D_n, \dots, D_{k+1} $ if $ k < n $, and is empty otherwise. In applications of the $ \jumpm $ rule, we omit specifying the leftmost component of the hypersequents.
	
	\
	
	\begin{adjustbox}{max width = \textwidth}
		$
		\vlderivation{
			\vlin{\llessm}{}{\G, \manyless \seq \D, \manyds, \bl{\bot}{C_{k}}}
			{
				\vliin{\llessm}{}{ \G, \manyless \seq \D, \manyds, \bl{ {D_1}, \bot}{C_{k}}  }{
					\vlin{\llessm}{}{ \G, \manyless \seq \D, \manyds, \bl{D_1, {D_{2}}, \bot}{C_{k}} }{
						\vliin{\llessm}{}{ 
							\begin{matrix}
							\G, \manyless \seq \D, \manyds, \bl{ \manyd{1}{k-2} , \bot}{C_{k}}\\
							\vdots\\
							\G, \manyless \seq \D, \manyds, \bl{  D_1, D_2, {D_{3}}, \bot}{C_{k}}
							\end{matrix}
						}{ 
							\vlin{\jumpm}{}{\G, \manyless \seq \D, \manyds, \bl{ \manyd{1}{k-1} , \bot}{C_{k}}}{
								\vlin{\wkb,\wkm}{}{ \dots \hyp C_k \seq \manyd{1}{k-1}, \bot }{
									\vlhy{C_k \seq \manyd{1}{k-1} }}
							}
						}{ 
							\vlin{\jumpm}{}{ 
								\G, \manyless \seq \D, \manyds, \bl{ \manyd{1}{k-2} , \bot}{C_{k}  },\bl{ \manyd{1}{k-2} , \bot}{ C_{k-1} }
							}{
								\vlin{\wkc,\wkm}{}{ \dots \hyp C_{k-1} \seq  \manyd{1}{k-2} , \bot}{
									\vlhy{C_{k-1} \seq  \manyd{1}{k-2}  }
								}
							}
						}
					}
				}{ 
					\vlin{\jumpm}{}{ \G, \manyless \seq \D, \manyds, \bl{D_1, \bot}{C_{k}}, \bl{D_1, \bot}{{C_{2}}} }{
						\vlin{\wkc,\wkm}{}{\dots \hyp C_2 \Rightarrow D_1, \bot }{ \vlhy{ C_2 \Rightarrow D_1} }
					} 
				}
			}
		}
		$
	\end{adjustbox}

	\
	
	\noindent The rightmost premiss of the lower occurrence of $ \llessm $, not shown, is sequent $ \G, \manyless \seq \D, \manyds, \bl{\bot}{C_{k}}, \bl{\bot}{C_{1}} $, which is derivable by $ \jumpm $ from premiss $ C_1 \Rightarrow $. 
\end{proof}

Since $\SNlessstar$ are complete with respect to $\Nlessstar$, this simulation
entails that $ \BPCLlessstar$ are also complete.
Here we present in more detail an alternative completeness proof based on the semantics.
In particular, we define a terminating bottom-up proof-search strategy in $\BPCLlessstar$,
and show that whenever the strategy fails, one can directly extract a countermodel
of the root formula/hypersequent.
The strategy is based on the following notion of saturation.


\begin{definition}
	Let $\hH = \G_1 \seq \D_1 \hyp ... \hyp \G_n \seq \D_n$ 
	be a hypersequent occurring in proof for $\hH'$ in $\BPCLlessstar$.
	The \emph{saturation conditions} associated to each application of a rule of $\BPCLlessstar$ are as follows:
	($\initm$) $\G_k\cap\D_k = \emptyset$.
	($\lbotm$) $\bot\notin\G_k$.
	($\ltom$) If $A\to B\in\G_k$, then $A\in\D_k$ or $B\in\G_k$.
	($\rtom$) If $A\to B\in\D_k$, then $A\in\G_k$ and $B\in\D_k$.
	($\llessm$)  If $A \less B \in \G_k$ and $\bl{\Sigma}{C}\in\D_k$, then $B \in\Sigma$ or 
	there is $\bl{\Pi}{A}\in\D_k$ such that $\set\Sigma\subseteq\set\Pi$.
	($\rlessm$) If $A \less B \in \D_k$, then there is $\bl{\Sigma}{B}\in\D_k$ such that $A\in\Sigma$. 
	($\jumpm$) If $\bl{\Sigma}{A}\in\D_k$, then there is $\G_j \seq \D_j\in\hH$ such that $A\in\G_j$ and $\set\Sigma\subseteq\D_j$.
	($\hrulenm$) 
	There is $\bl{\Sigma}{\top}\in\D_k$ such that $\bot\in\Sigma$.         
	($\hruletm$) If $A\less B\in\G_k$, then $B\in\D_k$ or 
	there is $\bl{\Sigma, \bot}{A}\in\D_k$.
	($\hrulewm$) If $\bl{\Sigma}{A}\in\D_k$, then $\set\Sigma\subseteq\D_k$.
	($\hrulecm$) If $A\less B\in\G_k$, then $B\in\D_k$ or $A\in\G_k$.
	($\hruleaml$) If $A\less B\in\G_k$, 
	then for all $\G_j \seq \D_j\in\hH$, $A \less B\in\G_j$.
	($\hruleamr$) If $A \less B \in\D_k$, 
	then for all $\G_j \seq \D_j\in\hH$, $ A \less B  \in\D_j$.
	%
	We say that $\hH$ is \emph{saturated with respect to an application of a rule} $R$ if it satisfies the saturation condition ($R$)
	for that particular rule application, and it is \emph{saturated with respect to} $\BPCLlessstar$ if it is saturated with respect to all possible applications of any rule of $\BPCLlessstar$.
\end{definition}

The strategy consists simply in applying the rules backward
until no additional rule application is possible respecting the following two conditions:
(i) no rule can be applied to an initial hypersequent;
(ii) the application of a rule is not allowed if the hypersequent is already saturated with respect to that specific
rule application.
The conditions (i) and (ii) ensure that proof-search terminates for every hypersequent $\hH$.

\begin{proposition}
	Proof-search for $\hH$ in $\BPCLlessstar$ in accordance with the strategy always terminates after 
	a finite number of steps.
\end{proposition}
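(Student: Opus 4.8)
The plan is to invoke König's lemma: since every rule of $\BPCLlessstar$ has a fixed finite number of premisses, the proof-search tree for $\hH$ is finitely branching, so it suffices to show that each branch is finite. The starting point is the \emph{subformula property} of the calculi: inspection of the rules in Fig.~\ref{figure:HPCLless} shows that every formula occurring in any component or block of any hypersequent in the tree is a subformula of a formula of $\hH$ (together with $\top$ and $\bot$). Let $\mathcal{S}$ be this finite set of formulas. Since the saturation conditions refer to blocks only through the \emph{set} $\set{\Sigma}$ of their left-hand multiset and through their right-hand formula, and to components only through the sets of formulas and blocks they contain, there are only finitely many blocks $\bl{\Sigma}{A}$ up to saturation-relevant content (a subset of $\mathcal{S}$ for $\set\Sigma$ together with an $A\in\mathcal{S}$), and hence only finitely many \emph{saturation types} of components.

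Next I would record that all rules of $\BPCLlessstar$ are \emph{cumulative}: reading them bottom-up, no formula and no block is ever deleted from a component, and no component is ever removed from a hypersequent; each premiss only adds material. Consequently, along a fixed branch the saturation-relevant content of each component grows monotonically inside the finite lattice described above. By condition (ii) of the strategy a rule is applied to a given selection of principal formulas only when the corresponding saturation condition currently fails; after the application that condition holds, and by monotonicity it keeps holding higher up the branch. Thus every non-redundant rule application either strictly enlarges the saturation type of some existing component or, in the case of $\jumpm$, creates a new component. Since each component can be strictly enlarged only finitely often (a strictly increasing chain in a finite lattice), branch-finiteness reduces to bounding the number of components.

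The key step, and the main obstacle, is bounding the applications of $\jumpm$, the only rule that introduces new components. Its saturation condition forbids acting on a block $\bl{\Sigma}{A}\in\D_k$ once some component $\G_j\seq\D_j$ of $\hH$ already \emph{realizes} it, that is, satisfies $A\in\G_j$ and $\set\Sigma\subseteq\D_j$; and an application of $\jumpm$ to $\bl{\Sigma}{A}$ produces exactly such a realizing component $A\seq\Sigma$. The delicate point is that this realization is \emph{preserved upward}: by cumulativity the realizing component is never deleted and its content only grows, so $A\in\G_j$ and $\set\Sigma\subseteq\D_j$ persist along the whole branch above the jump. Hence each of the finitely many block-types (up to saturation-relevant content) can trigger at most one application of $\jumpm$ per branch, so only finitely many components are ever created. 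Combining this with the previous paragraph, each branch carries only finitely many rule applications and is therefore finite; König's lemma then yields finiteness of the whole proof-search tree, establishing termination.
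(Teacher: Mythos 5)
Your proof is correct and follows essentially the same route as the paper's: the subformula property bounds the universe of formulas, and the saturation conditions (made persistent by the cumulative formulation of the rules) ensure that no rule can fire twice on the same trigger, so every branch is finite and, the tree being finitely branching, proof search terminates. The paper's own proof is a terse three-sentence version of this argument; your additions --- the explicit persistence argument bounding applications of $\jumpm$ by block types and the appeal to K\"onig's lemma --- are details the paper leaves implicit.
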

\begin{proof}
	Let $\mathscr P$ be a proof of $\hH$ constructed according to the strategy. 
	Then all formulas occurring in $\mathscr P$ (both inside and outside blocks) 
	are subformulas of formulas of $\hH$ or they are $\bot$ or $\top$, so they are finitely many. 
	Moreover, the saturation conditions prevent duplications of the same formulas
	(both inside and outside blocks) and the same blocks. 
	It follows that all hypersequents occurring in  $\mathscr P$ have a finite length, moreover every branch
	of  $\mathscr P$ contains only finitely many hypersequents. 
\end{proof}

If the strategy succeeds, then it constructs a derivation of the root hypersequent $\hH$. Otherwise, a saturated hypersequent will occur in the leaf of a branch. 
We now prove that the proof-search strategy is complete, showing that whenever the strategy fails,
from every saturated hypersequent one can directly construct a countermodel for $\hH$.

\begin{proposition}[Countermodel construction]\label{prop:countermodel}
	Let $\hH = \G_1 \seq \D_1 \hyp \dots \hyp \G_k \seq \D_k$ 
	be a saturated hypersequent occurring in a proof search tree for $\hH_0$ in $\BPCLlessstar$ built in accordance with the 
	strategy. 
	For $\Sigma$ multiset of formulas, let  
	$\SDelta = \{n \mid \G_n \seq \D_n \in \hH \textup{ and } \set\Sigma \subseteq \D_n\}$.
	%
	We define 
	$\M= \langle \W, \N, \V\rangle$:
	
	\begin{itemize}
		\item $\W = \{n \mid \G_n \seq \D_n \in \hH\}$. 
		\item For every $n\in\W$, $\N(n) = \{\SDelta \mid \textup{there is } A \textup{ such that } \bl{\Sigma}{A}\in\D_n\}$.
		\item For every $p\in\atm$, $\V(p) = \{n\in\W \mid p\in\G_n\}$.
	\end{itemize}
	Then for all $n \in \W$, 
	(i) if $A\in\G_n$, then $n \Vd A$;  
	(ii) if $A\in\D_n$, then $n \not\Vd A$; and 
	(iii) if $\bl{\Sigma}{A}\in\D_n$, then $n\not\Vd \OR\Sigma \less A$.
	Moreover $\M$ is a $\Nlessstar$-model.
\end{proposition}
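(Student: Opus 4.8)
The plan is to prove (i) and (ii) by a single simultaneous induction on the complexity of $A$, and then to read off (iii) as a corollary, since the argument for (i) on a formula $C \less D$ never invokes (iii) but only (i) and (ii) for the proper subformulas $C$ and $D$. The one structural fact used repeatedly is the contravariance of $\Sigma \mapsto \Deltaset{\Sigma}$: because $\Deltaset{\Sigma} = \{m \mid \set{\Sigma} \subseteq \D_m\}$, an inclusion $\set{\Sigma} \subseteq \set{\Pi}$ of block-bodies reverses to an inclusion $\Deltaset{\Pi} \subseteq \Deltaset{\Sigma}$ of neighbourhoods. The atomic and propositional cases are routine: (i) for atoms is the definition of $\V$ and (ii) uses $(\initm)$; $(\lbotm)$ handles $\bot$; and $(\ltom)$, $(\rtom)$ with the induction hypothesis handle $\to$.

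For the crucial case $A \less B$, consider (i): assume $A \less B \in \G_n$ and take $\alpha = \Deltaset{\Sigma} \in \N(n)$, witnessed by $\bl{\Sigma}{C} \in \D_n$, with $\alpha \EVd B$; the goal is $\alpha \EVd A$. Saturation $(\llessm)$ splits into $B \in \Sigma$ or the existence of $\bl{\Pi}{A} \in \D_n$ with $\set{\Sigma} \subseteq \set{\Pi}$. The first alternative is impossible, since $B \in \Sigma$ forces $B \in \D_m$ for every $m \in \Deltaset{\Sigma}$, hence $m \not\Vd B$ by the \ih\ on (ii), contradicting $\alpha \EVd B$. In the second alternative, contravariance gives $\Deltaset{\Pi} \subseteq \alpha$, and $(\jumpm)$ applied to $\bl{\Pi}{A}$ yields $j \in \Deltaset{\Pi}$ with $A \in \G_j$, so $j \Vd A$ by the \ih\ on (i), i.e.\ $\alpha \EVd A$. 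For (ii), assume $A \less B \in \D_n$; then $(\rlessm)$ gives a block $\bl{\Sigma}{B} \in \D_n$ with $A \in \Sigma$, and this $\Deltaset{\Sigma} \in \N(n)$ refutes $A \less B$: $(\jumpm)$ supplies a witness in $\Deltaset{\Sigma}$ forcing $B$, while $A \in \Sigma$ makes $A \in \D_m$ for every $m \in \Deltaset{\Sigma}$, so $\Deltaset{\Sigma} \not\EVd A$. Claim (iii) is the same computation for an arbitrary $\bl{\Sigma}{A} \in \D_n$: $(\jumpm)$ makes $\Deltaset{\Sigma} \EVd A$, and since every disjunct of $\OR\Sigma$ lies in $\D_m$ for $m \in \Deltaset{\Sigma}$, (ii) yields $\Deltaset{\Sigma} \not\EVd \OR\Sigma$.

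For the frame conditions, non-emptiness is immediate from $(\jumpm)$, which populates every $\Deltaset{\Sigma}$ arising from a block, so $\emptyset \notin \N(n)$. Normality for the $\mathsf{N}$-systems follows from $(\hrulenm)$, which guarantees a block $\bl{\Sigma}{\top} \in \D_n$ and hence $\N(n) \neq \emptyset$. The ``every neighbourhood contains $n$'' half of weak centering is exactly $(\hrulewm)$: $\bl{\Sigma}{A} \in \D_n$ yields $\set{\Sigma} \subseteq \D_n$, i.e.\ $n \in \Deltaset{\Sigma}$. Strong absoluteness $\mathsf{A+}$ for the $\mathsf{A}$-systems follows because $(\hruleaml)$ and $(\hruleamr)$ propagate every antecedent and succedent $\less$-formula to all components, so the blocks generated in each $\D_n$, and therefore the neighbourhoods $\N(n)$, coincide across all $n$.

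The main obstacle is the reflexivity content of the conditions $\mathsf{T}$, $\mathsf{W}$ and $\mathsf{C}$, namely showing that each world belongs to one of its own neighbourhoods (and, for $\mathsf{C}$, to the singleton one). This is not read off a single membership but must be extracted from the reflexive blocks $\bl{\Sigma, \bot}{A}$ produced by $(\hruletm)$ and, in the centered case, from $(\hrulecm)$ combined with $(\hrulewm)$; one has to track how the $\bot$ inside a block interacts with the succedents $\D_m$ and with $\Deltaset{\cdot}$ in order to place $n$ inside the intended neighbourhood. This, rather than the truth-lemma induction, is where the delicate bookkeeping lies; a secondary but persistent pitfall is keeping the contravariant direction of $\Sigma \mapsto \Deltaset{\Sigma}$ correct throughout the $\less$-cases, where an inclusion of block-bodies must be read as the reverse inclusion of neighbourhoods.
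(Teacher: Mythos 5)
Your truth-lemma part is essentially the paper's own argument: the paper proves (i)--(iii) by one simultaneous induction (assigning a complexity to blocks), but since the $\less$-cases of (i) and (ii) never invoke (iii), your reorganisation --- prove (i)/(ii) inductively, then read (iii) off from them --- is sound, and your handling of non-emptiness, normality, the membership half of weak centering, and the propositional cases matches the paper. The gap is in the frame conditions $\mathsf{T}$, $\mathsf{C}$ and $\mathsf{A+}$, which you either defer to ``delicate bookkeeping'' or argue incorrectly. For total reflexivity, the approach you hint at --- placing $n$ inside some block-generated neighbourhood by tracking the $\bot$ in the blocks $\bl{\Sigma,\bot}{A}$ produced by $\hruletm$ --- cannot work: $n\in\SDelta$ requires $\set\Sigma\subseteq\D_n$, and no saturation condition of $\HNTless$ ever forces the body of a block into the succedent of its own component (that is precisely what $\hrulewm$ does, and it is absent from the $\mathsf{T}$-system). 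The paper's solution is different in kind: it \emph{extends} the neighbourhood function, setting $\ON(n)=\bigcup\N(n)\cup\{n\}$ and $\NT(n)=\N(n)\cup\{\ON(n)\}$, so that reflexivity holds by construction; the real work is then to re-verify claim (i) for the added neighbourhood, since enlarging $\N(n)$ can falsify $B\less C\in\G_n$. That re-verification is where $\hruletm$-saturation is actually used: if no $\alpha\in\N(n)$ satisfies $\alpha\EVd B$, then $\ON(n)\EVd C$ would force $n\Vd C$, while saturation gives either $C\in\D_n$ (contradicting (ii)) or a block $\bl{\Pi,\bot}{B}\in\D_n$ whose $\jumpm$-witness puts $\Deltaset{\Pi}\EVd B$, contradicting the case assumption.

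The same pattern --- extend $\N$, then re-prove (i) --- is needed for centering, where the paper takes $\NC(n)=\N(n)\cup\{\{n\}\}$ (no block-generated neighbourhood need equal the singleton $\{n\}$) and uses $\hrulecm$-saturation ($C\in\D_n$ or $B\in\G_n$) to check the new neighbourhood. Your strong-absoluteness argument is also flawed as stated: $\hruleaml$ and $\hruleamr$ propagate $\less$-\emph{formulas}, not blocks, across components, and since the saturation conditions are disjunctive, two saturated components can satisfy them in different ways; hence the block-generated families $\N(n)$ need \emph{not} coincide across components. The paper instead redefines $\NA(n)$ as the collection of $\SDelta$ for blocks occurring in \emph{any} component, and then re-proves (i), using $\hruleaml$-saturation to transfer $B\less C\in\G_n$ into the component $\D_m$ where the block actually lives, so that $\llessm$-saturation can be applied there. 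In short, the missing idea is that for $\mathsf{T}$, $\mathsf{C}$ and $\mathsf{A}$ the countermodel is \emph{not} the model defined in the statement but a modification of it, and each modification carries a fresh truth-lemma obligation discharged by the corresponding saturation condition; without this your proof covers only $\Nless$, $\NNless$, and the membership half of $\NWless$.
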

\begin{proof}
	The claims (i), (ii) and (iii) are proved simultaneously by induction on
	the following notion of complexity of formulas and blocks:
	$c(p) = c(\bot) = 1$, $c(A \to B) = c(A \less B) = c(A) + c(B) + 1$, 
	$c(\bl{B_1, ..., B_n}{A}) = c(B_1) + ... + c(B_n) + c(A)$.
	For $A = p, \bot, B \to C$ the proof is routine.
	We consider the case $A = B \less C$.
	($B \less C \in \G_n$)
	Suppose $\alpha\in\N(n)$. 
	By definition, $\alpha = \SDelta$ for some $\Sigma$
	such that there is $\bl{\Sigma}{D}\in\D_n$.
	Then by saturation of $\llessm$,
	$C \in \Sigma$ or there is $\Pi$ such that $\set\Sigma\subseteq\set\Pi$ and $\bl{\Pi}{B}\in\D_n$.
	In the first case, 
	for every $m\in\SDelta$, $C \in\D_m$, then by \ih, $m \not\Vd C$.
	Therefore $\SDelta\not\EVd C$.
	In the second case,
	by saturation of $\jumpm$ there is $m\in\W$ such that $B \in\G_m$ and $\set\Pi\subseteq\D_m$, 
	thus $\set\Sigma\subseteq\D_m$ .
	Then by \ih, $m \Vd B$, and by definition $m\in\SDelta$. 
	Therefore $\SDelta\EVd B$.
	It follows $n\Vd B \less C$.
	($B \less C \in \D_n$)
	By saturation of $\rlessm$,
	there is $\bl{\Sigma}{C}\in\D_n$ such that $B\in\Sigma$.
	Then by definition, $\SDelta \in\N(n)$,
	and by \ih, $m\not\Vd B$ for every $m\in\SDelta$,
	that is $\SDelta\not\EVd B$.
	Moreover, by saturation of $\jumpm$ there is $m\in\W$ such that
	$\set\Sigma\subseteq\D_m$
	and $C \in\G_m$.
	Then by \ih, $m \Vd C$, and by definition $m\in\SDelta$,
	thus $\SDelta\EVd C$.
	Therefore $n\not\Vd B \less C$.
	($\bl{\Sigma}{A} \in \D_n$)
	Analogous to the previous item, considering that by \ih\ $m\not\Vd B$ for all $m\in\SDelta$ and $B\in\Sigma$,
	that is $\SDelta\not\EVd \OR\Sigma$.
	
	We now show that $\M$ satisfies the conditions of $\Nlessstar$-models.
	(Non-emptyness) If $\alpha\in\N(n)$, then $\alpha = \SDelta$ for some $\Sigma$ such that
	there is $\bl{\Sigma}{A}\in\D_n$.
	Then by saturation of $\jumpm$,
	there is $m\in\W$ such that $A\in\G_m$ and $\set\Sigma\subseteq\D_m$,
	thus $m\in\SDelta$.
	(Normality) By saturation of $\hrulenm$, there is $\bl{\Sigma,\bot}{\top}\in\D_n$,
	thus $(\Sigma,\bot)^\Delta\in\N(n)$, that is $\N(n)\not=\emptyset$.
	(Total reflexivity)
	We modify the definition of the neighbourhood function as follows. 
	For all $n\in\W$, let $ \ON(n) = \bigcup \N(n) \cup \{n\}$.
	Then, define $ \NT(n) = \N(n) \cup \ON(n)$.   
	We show that 
	the claim (i) above still holds ((ii) and (iii) are proved as before):
	Suppose $B\less C\in\D_n$.
	As before we can prove that $\Sigma^\Delta\not\EVd C$ or $\Sigma^\Delta\EVd B$ for all $\bl{\Sigma}{A}\in\D_n$.
	Here we show that the same holds for $\ON(n)$.
	If there is $\alpha \in \N(n)$ such that $\alpha\EVd B$, then $\ON(n)\EVd B$. 
	If instead there is no $\alpha \in \N(n)$ such that $\alpha\EVd B$,
	then $\alpha\not\EVd C$ for all $\alpha \in \N(n)$, that is $\bigcup \N(n)\not\EVd C$.
	Assume by contradiction that $\ON(n)\EVd C$. Then $n\Vd C$.
	Moreover by saturation of $\hruletm$, $C\in\D_n$ 
	or there is $\bl{\Pi}{B}\in\D_n$ such that $\bot\in\Pi$.
	If $C\in\D_m$, then by \ih, $n\not\Vd C$, contradicting $n\Vd C$. 
	If $\bl{\Pi}{B}\in\D_n$,
	then by saturation of $\jumpm$ there is $m\in\W$ such that $B\in\G_m$ and $\set\Pi\subseteq\D_m$.
	Then by \ih, $m\Vd B$, moreover $\Pi^\D\in\N(n)$ and $m\in\Pi^\D$, 
	thus $\Pi^\D\EVd B$, against the hypothesis.
	Therefore $\ON(n)\not\EVd C$.
	(Weak centering)  If $\alpha\in\N(n)$, then $\alpha = \SDelta$ for some $\Sigma$ such that
	there is $\bl{\Sigma}{A}\in\D_n$.
	Then by saturation of $\hrulewm$, $\set\Sigma\subseteq\D_n$, thus $n\in\SDelta$.
	(Centering) 
	We modify the definition of 
	the neighbourhood function as 
	$\NC(n) = \N(n) \cup \{\{n\}\}$.
	We show that 
	(i) still holds ((ii) and (iii) are as before):
	Suppose $B\less C\in\D_n$.
	As before we can prove that $\Sigma^\Delta\not\EVd C$ or $\Sigma^\Delta\EVd B$ for all $\bl{\Sigma}{A}\in\D_n$.
	Here we show that the same holds for $\{n\}$.
	By saturation of $\hrulecm$, $C\in\D_n$ or $B\in\G_n$, 
	thus by \ih, $m\Vd B$ or $m\not\Vd C$, therefore $\{n\}^\D\EVd B$ or $\{n\}\not\EVd C$.
	(Strong absoluteness) 
	We modify the definition of 
	$\N$
	as 
	$\NA(n) = \{\SDelta \mid \textup{there are } m\in\W \textup{ and } A \textup{ such that } \bl{\Sigma}{A}\in\D_m\}$.
	We show that (i) still 
	holds ((ii) and (iii) are as before).
	Suppose $B \less C \in \G_n$ and $\alpha\in\N(n)$. 
	Then $\alpha = \SDelta$ for some $\bl{\Sigma}{D}\in\D_m$ for some $m\in\W$.
	By 
	saturation of $\hruleaml$, $B \less C \in \G_m$,
	then by saturation of $\llessm$,
	$C \in \Sigma$ or there is $\Pi$ such that $\set\Sigma\subseteq\set\Pi$ and $\bl{\Pi}{B}\in\D_m$.
	In the first case, 
	$\SDelta\not\EVd C$.
	In the second case,
	by saturation of $\jumpm$ there is $k\in\W$ such that $B \in\G_k$ and $\set\Pi\subseteq\D_k$, 
	thus $\set\Sigma\subseteq\D_m$, therefore $\SDelta\EVd B$.
\end{proof}

Note that, since all rules are cumulative, the claims (i) and (ii) of Prop.~\ref{prop:countermodel} also hold for  the root hypersequent $\hH_0$, thus $\M$ is a countermodel of $\hH_0$.
Moreover, since every proof built in accordance with the strategy either provides a derivation of
the root hypersequent, or contains a saturated hypersequent, 
this result entails 
a constructive proof of 
the completeness of $\BPCLlessstar$.

\begin{theorem}[Semantic completeness]
	For every hypersequent $\hH$, if $\hH$ is valid in all $\Nlessstar$-models,
	then $\hH$ is derivable in $\BPCLlessstar$.
\end{theorem}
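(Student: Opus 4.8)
The plan is to establish the statement by contraposition, combining the two preceding results: termination of the proof-search strategy and the countermodel construction of Prop.~\ref{prop:countermodel}. First I would assume that $\hH$ is not derivable in $\BPCLlessstar$ and run the proof-search strategy starting from $\hH$. Since the strategy terminates (by the preceding termination proposition) and $\hH$ is underivable, the resulting proof-search tree cannot be a derivation, so at least one branch ends in a leaf $\hH'$ that is neither an initial hypersequent nor permits any further rule application. By the two stopping conditions of the strategy, this leaf $\hH'$ is saturated with respect to $\BPCLlessstar$.

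Next I would feed the saturated leaf $\hH' = \G_1 \seq \D_1 \hyp \dots \hyp \G_k \seq \D_k$ to Prop.~\ref{prop:countermodel}, obtaining an $\Nlessstar$-model $\M$ whose worlds are the components of $\hH'$ and for which clauses (i)--(iii) hold: every world $n$ forces all formulas of $\G_n$, refutes all formulas of $\D_n$, and refutes $\OR\Sigma \less A$ for each block $\bl{\Sigma}{A} \in \D_n$. From this I would read off that, for each component $\G_n \seq \D_n$, the antecedent $\AND\G_n$ is forced at $n$ while every disjunct of the consequent of $\fint{\G_n \seq \D_n}$ fails at $n$; hence $n \not\Vd \fint{\G_n \seq \D_n}$. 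Thus no component of $\hH'$ is valid in $\M$, i.e.~$\hH'$ is not valid in $\M$.

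Finally I would transfer this refutation from the leaf $\hH'$ back to the root $\hH$. Because all rules of $\BPCLlessstar$ are cumulative --- principal formulas and blocks are copied into the premisses and nothing is ever discarded --- along the branch from $\hH$ to $\hH'$ each component only grows, so every component $\G \seq \D$ of $\hH$ is included in a corresponding component $\G_n \seq \D_n$ of $\hH'$, with $\G \subseteq \G_n$ and $\D \subseteq \D_n$. Clauses (i)--(iii) then yield $n \not\Vd \fint{\G \seq \D}$ exactly as above, so $\M$ refutes every component of $\hH$ and is a genuine $\Nlessstar$-countermodel of $\hH$, contradicting the assumption that $\hH$ is valid in all $\Nlessstar$-models.

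Since the substantive construction has already been carried out in Prop.~\ref{prop:countermodel}, the only delicate point of this argument is the last step: justifying that the model extracted from the leaf $\hH'$ also refutes the root $\hH$. The hinge is cumulativity, which forces monotone growth of components along a branch and hence the inclusions $\G \subseteq \G_n$ and $\D \subseteq \D_n$. I would make this monotonicity explicit by inspecting each rule --- in particular checking that the hypersequent-level rules such as $\jumpm$ and the absoluteness rules $\hruleaml, \hruleamr$ never delete material from already-present components --- so that clauses (i) and (ii) of Prop.~\ref{prop:countermodel} lift verbatim from the components of $\hH'$ to those of $\hH$.
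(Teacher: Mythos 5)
Your proposal is correct and is essentially the paper's own argument: the paper obtains this theorem exactly by combining termination of the proof-search strategy, the countermodel construction of Prop.~\ref{prop:countermodel} applied to a saturated leaf of failed proof search, and cumulativity of the rules to transfer claims (i) and (ii) back to the root hypersequent. The only point to tighten (which the paper glosses over as well) is that the inclusion $\D \subseteq \D_n$ does not hold literally for blocks, since $\llessm$ may enlarge a block $\bl{\Sigma}{C}$ into $\bl{B,\Sigma}{C}$ rather than copying it; this is harmless, because $n \not\Vd \OR\Sigma' \less C$ with $\Sigma \subseteq \Sigma'$ implies $n \not\Vd \OR\Sigma \less C$, so the refutation still lifts to the root's blocks.
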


Here follows an 
example of the countermodel construction. 

\begin{example}
	We show that axiom $\axCO$ is not derivable in $\Nless$.
	Here follows a failed proof of $\seq (p \less q) \lor (q \less p)$ in $\HNless$, where $ \hH $ is saturated, and $ \vlor_R $ is admissible from the rules of $\Nless$: 
	$$
	\vlderivation{
		\vlin{\vlor_\mathsf{R}}{}{\seq (p \less q) \lor (q \less p)}{
			\vlin{\rlessm(\times 2)}{}{\seq p \less q, q \less p, (p \less q) \lor (q \less p)}{
				\vlin{\jumpm (\times 2)}{}{\seq \bl{q}{p}, \bl{p}{q}, p \less q, q \less p, (p \less q) \lor (q \less p)}{
					\vlhy{\hH: \ \ \seq \bl{q}{p}, \bl{p}{q}, p \less q, q \less p, (p \less q) \lor (q \less p) \hyp q \seq p \hyp p \seq q }
				}
			}
		}
	}
	$$
	We consider the following enumeration of the components of the saturated hypersequent $\hH$:
	1: $\seq \bl{q}{p}, \bl{p}{q}, p \less q, q \less p, (p \less q) \lor (q \less p)$;
	2: $q \seq p$; and
	3: $p \seq q$.
	Then, following 
	the construction of Prop.~\ref{prop:countermodel}
	we obtain the 
	following countermodel $\M = \langle \W, \N, \V\rangle$:
	%
	$\W = \{1, 2, 3\}$.
	$\N(1) = \{\Deltaset{p}, \Deltaset{q}\} = \{\{2\}, \{3\}\}$, and $\N(2) = \N(3) = \emptyset$.
	$\V(p) = \{3\}$ and $\V(q) = \{2\}$.
	Then we have $\Deltaset{p}\EVd q$ and $\Deltaset{p}\not\EVd p$, thus $1\not\Vd p\less q$, 
	moreover $\Deltaset{q}\EVd p$ and $\Deltaset{q}\not\EVd q$, thus $1\not\Vd q\less p$.
	Therefore $1 \not\Vd (p \less q) \lor (q \less p)$.
\end{example}

\section{Conclusions}
\label{sec:rel works}

We 
introduced \CPN\ logics, which are a generalisation of Lewis' logics 
of comparative plausibility
defined over neighbourhood rather than sphere models. 
As a difference with sphere models, neighbourhoods need not to be nested, allowing to express
more
general notions of comparative plausibility. 
From a proof-theoretic viewpoint,
\CPN\ logics are 
captured by 
suitable restrictions of 
sequent calculi for Lewis' logics:
they coincide to restrictions of calculi from \cite{lellmann:phd,lellpatt} to a single principal $\less$-formula
in the right-hand side of sequents, 
and  
the  single-component formulation of their hypersequent calculi
corresponds to the structured calculi from \cite{olivetti2015standard,girlando2016} without the communication rule.

Overall, 
\CPN\ logics 
represent  a general theory of comparative plausibility with well-understood proof theory and semantics. 
Differently from stronger logics expressing comparative plausibility, \CPN\ logics allow to model preference or similarity in situations where no priority order is assumed between states of affairs or concepts. 
Moreover, \CPN\ logics are an expressive framework, encompassing Lewis' logics \cite{lewis}, which are obtained by adding nesting to \CPN\ logics. In future work we plan to investigate the relations between CPN logics and other well-known comparative plausibility logics introduced in the literature, most notably Halpern's comparative plausibility logics defined over preferential structures \cite{halpern:1997}. We conjecture that Halpern's logics could be obtained by adding the property of \emph{closure under non-empty intersections} to neighbourhood models, which is required to prove equivalence between neighbourhood and preferential structures. Moreover, we wish to relate our systems with the logic of comparative obligation 
introduced by Brown \cite{brown:1996}. 
Brown's operator
is defined on a kind of neighbourhood models
containing a function $\R: \W \longrightarrow \pow(\pow(\pow(\W)))$, 
representing a degree of urgency of obligation.

Furthermore, \CPN\ logics parallel the preferential conditional logics studied in \cite{burgess:1981}. These logics generalise Lewis' counterfactual logics, and admit a neighbourhood semantics, introduced in~\cite{girlando:2021}.  Interestingly, while comparative plausibility and conditional entailment are interdefinable 
in sphere models, the two operators are not interdefinable in neighbourhood semantics, giving rise to two independent theories. While in ~\cite{girlando:2021} a proof-theoretical analysis of the conditional operator in neighbourhood semantics is proposed,  this work explores the behaviour of the comparative plausibility operator in neighbourhood structures. Moreover, having lost the interdefinability between $\less$ and $\cond$,
we wish to study whether alternative and meaningful notions of conditional entailment can be defined in terms
of comparative plausibility. 
We also intend to study applications of \CPN\ logics,
possibly related to the analysis of information sources.

Concerning the proof theory for \CPN\ logics, we wish to analyse the complexity of the logics based on the decision procedure induced by the multi-premisses and the hypersequent calculi. 
Moreover, we plan to automate the proof search and countermodel construction of the hypersequent calculi within a theorem prover,
along the lines of what done in \cite{dalmonte2020hypno,girlando2017vinte,girlando2022calculi}.
We will also investigate extensions of the hypersequent calculi to \CPN\ logics with uniformity, possibly adapting the approach  proposed in \cite{girlando2017hypersequent} for Lewis' logics to our setting, as well as with other semantic conditions, aiming at developing a uniform proof-theoretic account of  \CPN\ logics.

\vspace{0.3cm}

\noindent \textbf{Acknowledgements.} We wish to thank Bj\"orn Lellmann for his suggestions
and contributions to the analysis of the comparative plausibility operator.

\bibliographystyle{aiml22}

\end{document}